\newtheorem{theorem}{Theorem}
\newtheorem{lemma}[theorem]{Lemma}
\newtheorem*{lemma*}{Lemma}
\newtheorem{definition}[theorem]{Definition}
\newcommand{\nn}{\nonumber\\}
\newcommand{\pai}[1]{\pdv{#1}{\alpha_i}}
\newcommand{\pak}[1]{\pdv{#1}{\alpha_k}}
\newcommand{\rotil}{\rho_{\tilde{X}}}
\newcommand{\ro}[1]{\rho_{#1}}
\newcommand{\rtp}{\rho_{R\tilde{X}}}
\newcommand{\hk}{\tilde{H_k}}
\newcommand{\sigi}{\sigma^{-1}}
\newcommand{\sigps}{\sigma^{+}}
\newcommand{\infnorm}[1]{\left\|#1\right\|_{\infty}}
\newcommand{\exptn}[1]{\mathbb{E}\left[#1\right]}
\newcommand{\varnc}[1]{\text{var}\left[#1\right]}
\newcommand{\bigo}[1]{\mathcal{O}\left(#1\right)}
\newcommand{\bigotilde}[1]{\widetilde{\mathcal{O}}\left(#1\right)}
\newcommand{\floor}[1]{\left\lfloor #1 \right\rfloor}
\newcommand{\ceil}[1]{\left\lceil #1 \right\rceil}
\newcommand{\taurx}{\rho_{R \tilde{X}}}
\newcommand{\sigmarx}{\rho_R \otimes \rho_{\tilde{X}}}
\newtheorem{corollary}[theorem]{Corollary}
\DeclareMathOperator*{\argmin}{arg\,min}
\newcommandx{\change}[2][1=]{\todo[linecolor=blue,backgroundcolor=blue!25,bordercolor=blue,#1]{#2}}
\begin{document}
\title{Training quantum neural networks using the Quantum Information Bottleneck method}
\author{Ahmet Burak Çatlı}
\affiliation{Department of Physics, University of Toronto, Canada}
\author{Nathan Wiebe}
\affiliation{Department of Computer Science, University of Toronto, Toronto ON, Canada}
\affiliation{Pacific Northwest National Laboratory,  Richland WA, USA}

\begin{abstract}
We provide in this paper a concrete method for training a quantum neural network to maximize the relevant information about a property that is transmitted through the network.  This is significant because it gives an operationally well founded quantity to optimize when training autoencoders for problems where the inputs and outputs are fully quantum. We provide a rigorous algorithm for computing the value of the quantum information bottleneck quantity within error $\epsilon$ that requires $O(\log^2(1/\epsilon) + 1/\delta^2)$ queries to a purification of the input density operator if its spectrum is supported on $\{0\}~\bigcup ~[\delta,1-\delta]$ for $\delta>0$ and the kernels of the relevant density matrices are disjoint.  We further provide algorithms for estimating the derivatives of the QIB function, showing that quantum neural networks can be trained efficiently using the QIB quantity given that the number of gradient steps required is polynomial.
\end{abstract}
\maketitle

\section{Introduction}

The field of quantum information strives to analyze the processing, transmission and storage properties of information in quantum systems. Substantial progress has been made in devising information processing techniques that can be used in a multitude of applications \cite{biamonte2017quantum,lloyd2016quantum,reiher2017elucidating,harrow2009quantum, van_Apeldoorn_2020}. Quantum machine learning has emerged as a hot topic in recent years and an increasing variety of quantum models, including quantum neural networks~\cite{amin2018quantum,kieferova2017tomography,Quantum_VAE,schuld2020circuit}, have been studied within the field to solve unsupervised as well as supervised learning problems.  Despite the advances, recent work has raised concerns surrounding the complexity of training these models because of vanishing gradients for the quantum neural networks~\cite{mcclean2018barren,wang2021noise,marrero2021entanglement}.  This means that understanding and optimizing the flow of relevant information through a quantum network, neural or otherwise, is a task of vital importance.

Unlike their classical counterparts however, the flow of information through a quantum neural network is poorly understood. One resolution to the first problem is to adapt methods that were used to study classical neural networks to the quantum setting. Classical information theory has been used to shed light on the power of the deep neural networks \cite{battiti1994using,Tishby2000TheIB,belghazi2018mutual}, and similarly quantum information theory has begun to shine light on the analogous question for quantum neural networks. 
In particular, we aim in this work to provide a method for optimizing the flow of relevant information about a concept through a quantum network and provide sufficient conditions for the training of these networks to be possible in polynomial time.

Our instrument of choice for this task is the Information Bottleneck (IB), first proposed in \cite{Tishby2000TheIB}. The method expresses the problem of training a neural network as optimizing a variational channel with side information; the optimal channel is the one that keeps the maximum amount of information about the signal in the side channel (i.e. the label in classical machine learning terms) while keeping the output as ``simple" as possible by shedding the irrelevant information. By tracking the relevant information $I(output:label)$ and the memory information $I(input:output)$ throughout the training process, at different points of the network, it is possible to have a close look at the information dynamics throughout the training process, and distill information about how different choices of optimization method, error function and network parametrization in a theoretically grounded matter. 

Recent work has extended the notion of the information bottleneck  into the quantum domain~\cite{Datta_2019,hayashi2022efficient}; however, as yet efficient quantum algorithms for optimizing the quantum information bottleneck (QIB) have not been devised.  We address this issue and here and further argue that the QIB methodology is especially useful in the quantum settings owing to the unresolved questions surrounding what quantum properties and quantum network architectures can be best leveraged for quantum machine learning tasks.  The information bottleneck not only provides a means to quantify such questions, but our work shows how we can efficiently train such networks under reasonable assumptions on the training data and quantum models.

We provide low-cost algorithms for estimating the quantum information bottleneck quantity and also similarly inexpensive algorithms for training a quantum neural network with respect to the QIB quantity by differentiating a series approximation to the QIB objective function.  Specifically, we show that under appropriate assumptions about the spectra of the relevant density operators, the gradients can be efficiently evaluated within bounded error on a circuit-based quantum computing model. The primary method that we employ involves estimating the QIB quantity as well as well as its derivative using Fourier series techniques and estimate the various terms in the expansion using a generalized version of the swap test circuit.  We also analyze the possible ways of bounding the information quantities, using R\'enyi-$\alpha$ entropies \cite{van2014renyi,muller2013quantum} and the measured R\'enyi entropy \cite{berta2017variational}, and compare the algorithms for computing them to direct numerical differentiation of the QIB objective function. We find surprisingly that analytic expressions for the derivative of bounds on approximations to the QIB objective function actually can be more costly to evaluate than our na\"ive numerical differentiation scheme suggesting that this approach is the most efficient relative to other natural approaches to optimize the objective function or approximations thereof.


The remainder of the paper is laid out as follows: Section 2 further introduces the information bottleneck method, the past developments in the classical case and the recent work in the quantum case. Sections 3 and 4 give the methods for estimating the Quantum Information Bottleneck quantity and its derivative. Section 4 provides the results on the R\'enyi-$\alpha$ and the measured R\'enyi methods, followed by the concluding remarks and the possible ways to build upon this framework in Section 5.

\section{Information Bottlenecks}
The goal of building an optimal coding scheme can be understood as a task to find an optimal trade-off between two distinct desiderata: minimizing the size of the encoded signal in a compression scheme while maximizing the relevant information retained in the compressed signal.  For example, consider a case where we wish to encode hand written digits as per the MNIST dataset~\cite{deng2012mnist}.  In this case for each image there is a corresponding label.  We wish in an encoding for such a task to maximize the information about the hidden label within the compressed data.  Naturally there are trade-offs to be made here between compression and model accuracy.  We do not simply wish to have a small compression if that compression loses the relevant information about the label, but we also do not wish to choose an encoding that trivially retains all of the relevant information by not compressing the signal.  This naturally leads to an optimization problem where the tradeoff between these two tendencies can be selected by the user through a non-negative parameter $\beta$.

One expresses the classical problem as follows: let $X$ denote a random signal with the distribution $p(x)$ and $Y$ another signal that holds auxiliary information about $X$ that we deem is important. A compression process assigns each $x$ a new codeword $\tilde{x}$, in a possibly more restricted signal space $\tilde{X}$, with probability $p(\tilde{x}|x)$. The optimal coding is expressed implicitly as a constrained optimization problem~:

\begin{equation}
    p_{opt}(\tilde{x}|x) = \argmin_p \mathcal{L}[p(\tilde{x}|x)] = I_c(\tilde{X};X) - \beta I_c(X;Y) \label{classicalop}
\end{equation}
Here $I_c(X,Y)$ is the classical mutual information which is defined as $I_c(X,Y) = D_{KL}(P_{(X,Y)}||P_X \otimes P_Y)$ where $D_{KL}$ is the Kullback-Leibler divergence. Classical mutual information can also be expressed as $I_c(X,Y) = H(X) + H(Y) - H(X,Y)$, where $H(\cdot)$ is classical marginal entropy and $H(\cdot,\cdot)$ is the classical joint entropy. 

Informally, mutual information is a symmetric measure of the information shared between $X$ and $Y$, it signifies uncertainty reduced on X by knowing Y and vice versa. Thus for independent random variables the mutual information is 0, while for perfectly correlated random variables it's equal to entropy $H(X) = H(Y)$. Thus the first term in \ref{classicalop} signifies the information retained from the original signal during the coding process while the second term signifies the information retained about a third ``relevant" signal. Thus the goal of the optimization becomes to throw away as much information as possible about the original signal while retaining the maximum amount of relevant information, with the fixed parameter $\beta$ acting as our choice of the trade-off between these competing objectives. For $\beta = 0$ the optimal solution is a fixed code that assigns every incoming signal the same codeword, effectively throwing away all of the information, while $\beta \rightarrow \infty$ the optimal solution is an identical code that retains all of the information bar redundant (i.e. perfectly correlated) degrees of freedom, to retain the maximum amount of relevant information.

\subsection{Quantum Information Bottlenecks}

The quantum generalization to this approach was first proposed in \cite{Grimsmo} in the context of lossy compression. They construct the analogue using the Rate-Distortion coding scheme and using relevant information as their distortion metric. Their results indicate that quantum channels provide an advantage over classical methods where quantum correlations are present, but they do not provide any advantage in compression in the case of classical relevant information. The work of \cite{Salek_2019} then expressed the method in terms of transmission of information through a quantum channel with side information, and derived the compression rates for an entanglement assisted classical channel. However, their results relied on the conjecture that the information bottleneck function was convex, which wasn't proven at the time. Then in \cite{Datta_2019} proved the convexity of the QIB function and provided an alternative operational meaning to the Quantum Information Bottleneck (QIB) problem.\\




There are several ways that a natural quantum analogue to the information bottleneck quantity could be constructed.  The most natural way to generalize this concept is to replace the KL-divergence with the quantum relative entropy in the definition of the mutual information.  The quantum relative entropy is formally defined below.
\begin{definition}[Quantum Relative Entropy]
Let $A$ and $B$ be density operators acting on $\mathbb{C}^{2^n}$ for integer $n$, $A,B \in D(\mathbb{C}^{2^n})$. The quantum relative entropy is defined to be $$S(A\|B) = {\rm Tr} (A \log(A)) - {\rm Tr}(A \log(B)). $$
\end{definition}

After replacing the KL-divergence with quantum relative entropies, we have to consider how to replace the classical probability distributions with analogous quantum state operators.  Further, in order to operationalize the notion of the bottleneck we need to consider how to express the correlations correctly between the input and output distributions. We assign a Hilbert space to each locus in spacetime (also termed a quantum register), such that the classical input distribution $p(x,y)$ is replaced by $\rho_{X,Y} \in \mathcal{H}_X \otimes \mathcal{H}_Y$. The classical channel characterized by $p(\tilde{x}|x)$ is replaced by the quantum channel $\Phi: D(X) \rightarrow D(\tilde{X})$, such that $\mathcal{H}_{\tilde{X}}$ is the output Hilbert space. Due to the nature of quantum mechanics, the states $\rho_X$ and $\rho_{\tilde{X}}$ do not exist at the same time. In order to be able to estimate the correlations between the input and the output, we add another reference register with associated Hilbert space $\mathcal{H}_{R}$. The overall input $\rho_{RXY}$ is prepared such that the marginal state is equal to the marginal state of X in the original $\rho_{XY}$: ${\rm Tr}_{XY}(\rho_{RXY}) = {\rm Tr}_{Y}(\rho_{XY})$. 

Since we are operating with the knowledge of $\rho_{XY}$, construction of $R$ does not violate the no-cloning theorem. The remaining marginal distributions that are used in the paper are defined as follows: $\rho_{\tilde{X}} := {\rm Tr}_{RY}(\Phi(\rho_{RXY}))$, $\rho_{{X}} := {\rm Tr}_{RY}(\rho_{RXY})$ and $\rho_{\tilde{X}Y} := {\rm Tr}_R(\Phi(\rho_{RXY}))$.  Here the last expression follows from the fact that the reference subspace contains a copy of the marginal quantum state on $X$ and the fact that it is a copy allows us to define the joint state of inputs and outputs of the channel can be found using the quantum information contained within that subsystem.


With these definitions in place the definition of the quantum information bottleneck objective function, defined originally in~\cite{Salek_2019}, is then provided below. 

\begin{definition}[QIB Objective Function]
The quantum information bottleneck objective function for a distribution $\rho_{RXY}$ and channel $\Phi$ is denoted $\mathcal{L}[\Phi]$ which is defined to be
$$
    \mathcal{L}[\Phi] :=  \beta I(R;\tilde{X}) - (1-\beta) I(\tilde{X};Y) =  \beta S(\rho_{R\widetilde{X}}\| \rho_R \otimes \rho_{\widetilde{X}}) - (1-\beta) S(\rho_{\widetilde{X} Y}\| \rho_{\widetilde{X}} \otimes \rho_{{Y}} )
    \label{eq:objective_func}
$$
\end{definition}
An optimal channel for the QIB objective can be formally expressed as $\Phi_{opt} = \argmin_{\Phi \in C(X, \widetilde{X})} \mathcal{L}[\Phi]$.  Here we modify the expression slightly from the original formulation so that $\beta$ is in the interval $[0,1]$ without loss of generality as scalar multiplicative factors in $\mathcal{L}$ do not meaningfully change the optimization landscape and for practical reasons the evaluation of the QIB function is easier if we can promise that $\beta$ is bounded. The interpretation of the mutual information terms remain the same as its classical counterpart. The primary difference is that mutual information terms are now quantum instead of classical.

\subsection{The use of QIB in QML}
ML is perhaps the most natural application of QIB since it allows us to understand how the relevant information flows through a quantum network~\cite{schuld2020circuit,amin2018quantum,kieferova2017tomography,beer2020training} such as a quantum autoencoder~\cite{romero2017quantum} while also generalizing previous generative objective functions for QML~\cite{kieferova2017tomography,kieferova2021quantum}.
Here we discuss such applications both to reveal these possibilities and also to clarify how the information bottleneck quantity can be used and optimized in quantum information processing more broadly.

Let us consider a discriminative task within QML wherein a hidden property of the input quantum states is sought after.  In this case, there are three fundamental spaces $X,\widetilde{X}$ and $Y$.  Here $X$ can be thought of as a subsystem of the larger quantum state that holds the inputs to the quantum model.  The subsystem $\widetilde{X}$ holds the output of the quantum channel $\Phi$.  Note that in practice the subsystem $\widetilde{X}$ can potentially intersect that of $X$, although in practice we will typically think of this as a disjoint subsystem.  Subsystem $Y$ in this case contains the ``label data" that we wish to learn.

As a particular example, consider the case where input states of the form $\ket{j}\ket{\psi_j}\ket{y(j)}$ are fed into a classifier, the first register being a ``data tag" that provides a separate identifier for each $(\ket{\psi_j},\ket{y(j)})$ pair.  Our goal is to build a quantum channel that predicts the $\ket{y(j)}$ given a particular $\ket{\psi_j}$.  The training or test sets for this machine learning problem can then be thought of, without loss of generality, as
\begin{equation}
    \rho_{XY} = \sum_j p_j\ketbra{j}{j}\otimes\ketbra{\psi_j}{\psi_j}\otimes \ketbra{y(j)}{y(j)}
\end{equation}
If we apply the quantum channel $\Phi$, which we aim to train to learn the function $y(j)$, then (using superoperator notation) the resultant quantum state becomes
\begin{equation}
    (\Phi \otimes 1_Y) \rho_{XY} = \sum_j p_j\ketbra{j}{j}\otimes\ketbra{\widetilde{\psi}_j}{\widetilde{\psi}_j}\otimes \ketbra{y(j)}{y(j)} =:\rho_{\widetilde{X}Y}
\end{equation}
We can, for all cases with a finite number of training quantum state vectors, take $p_j = 1/N_{\rm train}$, to describe a uniform distribution over all the $\ket{\Psi_j}$ (which need not be orthogonal in this setting).  The density operators $\rho_{\widetilde{X}}$ and $\rho_Y$ can then be found by taking partial traces of this distribution
\begin{equation}
    \rho_{\widetilde{X}} = {\rm Tr_Y}(\rho_{\widetilde{X}Y}), \qquad \rho_{Y} = {\rm Tr_{\widetilde{X}}}(\rho_{\widetilde{X}Y}).
\end{equation}
Thus all the quantities that are needed to compute the quantity $S(\rho_{\widetilde{X} Y}\| \rho_{\widetilde{X}} \otimes \rho_{{Y}} )$ can be derived from this state and all such distributions are natural objects that one would expect to see from the joint distributions that result from a machine learning algorithm.

The mutual information between the input and the output is trickier to operationalize due to the fact that $\rho_{X\tilde{X}}$ is not a well defined quantum state. In order to rigorously assign a value to this quantity and operationalize its estimation we utilize the conditional states framework proposed in ~\cite{spekkens2013bayesian}. In this framework, each channel $\Phi \in C(X,\tilde{X})$ is assigned a quantum conditional operator $\varrho_{\tilde{X}|X}$ defined as
\begin{equation}
    \varrho_{\tilde{X}|X} := (\Phi_{X'\rightarrow\tilde{X}} \otimes 1_X) \left(\sum_{i,j}\ketbra{i}{j}_{X'}\otimes \ketbra{j}{i}_X  \right)
\end{equation}

It can be seen that this is equivalent to the Jamiolkowski representation of the channel. Here $X'$ is an auxiliary register isomorphic to $X$ and $\{\ket{i}\}_i$ is a basis for $\mathcal{H}_X$. Notice that this operator is not positive, but is positive partial transpose (PPT) with respect to the bi-partition between $X'$ and $X$. The joint causal quantum state $\varrho_{\tilde{X}X}$ associated with this conditional operator with input $\rho_X$ is given by:
\begin{align}
    \varrho_{\tilde{X}X} &:= (1_{X'}\otimes\rho_X^{\frac{1}{2}})\varrho_{\tilde{X}|X} (1_{X'}\otimes\rho_X^{\frac{1}{2}})\\
    &= \sum_{i,j} \Phi(\ketbra{i}{j}) \otimes \rho_X^{\frac{1}{2}} \ketbra{j}{i}_X \rho_X^{\frac{1}{2}}
\end{align}

It can easily be seen that $\varrho_{\tilde{X}X}$ gives the correct marginals for both $\rho_X$ and $\rho_{\tilde{X}}$. The partial transpose of $\varrho_{\tilde{X}X}$ is a density operator:
\begin{equation}
    \rho_{\tilde{X}X}:=\varrho_{\tilde{X}X}^{T_{X}} = \sum_{i,j} \Phi(\ketbra{i}{j}) \otimes \rho_X^{\frac{1}{2}} \ketbra{i}{j}_X \rho_X^{\frac{1}{2}} \label{eqn:acausal_joint}
\end{equation}
which can be seen by recognizing that $\rho_{\tilde{X}X}$ is a product of density matrices and the Choi representation of a channel which is positive. This state can be prepared by preparing the purification of $\rho_X$ and sending the $X$ register through the quantum channel. Considering the spectral decomposition $\rho_X=\sum_i \lambda_i \ketbra{e_i}{e_i}$ and the purification $\ket{\psi}_{RX} = \sum \sqrt{\lambda_i}\ket{e_i}_R\otimes\ket{e_i}_X$. The density operator of the state after sending the X register is
\begin{align}
    \rho_{R\Tilde{X}} &= \left(1_R \otimes \Phi_{X\rightarrow X'} \right) \left(\sum_{i,j} \sqrt{\lambda_i\lambda_j}\ketbra{e_i}{e_j}_R\otimes\ketbra{e_i}{e_j}_X\right)\nonumber\\
    &= \sum_{i,j} \sqrt{\lambda_i\lambda_j}\ketbra{e_i}{e_j}_R\otimes\Phi_{X\rightarrow X'}(\ketbra{e_i}{e_j})
\end{align}

which can be seen to be equivalent to \refeq{eqn:acausal_joint} by relabeling $R\rightarrow X$ and picking the arbitrary basis $\{\ket{i}\}_i$ to be the eigenbasis of $\rho_X$, $\{\ket{e_i}\}_i$.

The main two questions that remain involve asking whether this training objective function can be computed, and further whether it can be optimized using gradient descent, in polynomial time.

\section{Direct Computation of QIB Objective}
The QIB proposal in \cite{Salek_2019} give an implicit equation and an iterative algorithm to calculate the optimal channel. Since it is not possible to use such an approach for a parametrized, we propose directly calculating the objective function and its derivative. 

First, let us begin by addressing sufficient conditions for when the QIB objective can be computed efficiently using a quantum computer.  The central challenge behind this involves computing the various cross entropies present in the problem.  The cross entropies can be computed using a number of different approaches, but here we consider using a Fourier series expansion.  Specifically, we use use Lemma 37 of \cite{van2020quantum} with a Taylor series expansion to express the logarithm as Lemma \ref{lem:fourier}.

\begin{lemma}\label{lem:logapprox}
For any  integer $K>0$ and $x\in [1/\alpha,1]$ for $\alpha>1$ we have that $
    g(x)=\sum_{n=1}^K \frac{(-1)^{n+1} (x-1)^{n}}{n}$
satisfies $|g(x) - \log(x)|\le \epsilon$ if $K \ge \frac{\log(3/\epsilon)}{\log(\alpha/(\alpha-1))}$.
\end{lemma}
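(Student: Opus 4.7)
The plan is to recognize $g(x)$ as the degree-$K$ truncation of the Mercator series for $\log(x)$ expanded about $x=1$, and to bound the remainder by summing a geometric series. The hypothesis $x\in[1/\alpha,1]$ forces $|x-1|\le (\alpha-1)/\alpha<1$, so the series converges geometrically with rate $((\alpha-1)/\alpha)^K$, which is precisely the rate appearing in the claimed bound.

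My first step is to substitute $y:=1-x\in[0,(\alpha-1)/\alpha]$ and track signs. Since $(x-1)^n=(-1)^n y^n$, the $n$th term is $(-1)^{n+1}(x-1)^n/n=-y^n/n$, so $g(x)=-\sum_{n=1}^{K}y^n/n$. Combined with the standard identity $\log(x)=\log(1-y)=-\sum_{n=1}^{\infty}y^n/n$, this gives
\begin{equation*}
    |g(x)-\log(x)|=\sum_{n=K+1}^{\infty}\frac{y^n}{n}.
\end{equation*}

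Next, I would bound this tail using $1/n\le 1/(K+1)$ for $n\ge K+1$ and summing the resulting geometric series, yielding $\sum_{n=K+1}^{\infty}y^n/n\le y^{K+1}/((K+1)(1-y))$. Substituting $y\le(\alpha-1)/\alpha$ and $1-y\ge 1/\alpha$ gives
\begin{equation*}
    |g(x)-\log(x)|\le \frac{\alpha-1}{K+1}\left(\frac{\alpha-1}{\alpha}\right)^K.
\end{equation*}
The hypothesis on $K$ ensures $((\alpha-1)/\alpha)^K\le\epsilon/3$, leaving $|g(x)-\log(x)|\le \epsilon(\alpha-1)/(3(K+1))$.

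The last step is to show the prefactor $(\alpha-1)/(K+1)$ is at most $3$. Using $\log(\alpha/(\alpha-1))\le 1/(\alpha-1)$ and $\log(3/\epsilon)\ge 1$ for $\epsilon\le 1$, the hypothesis gives $K\ge(\alpha-1)\log(3/\epsilon)\ge \alpha-1$, so $(\alpha-1)/(K+1)\le 1$, well under $3$. Combining the estimates yields $|g(x)-\log(x)|\le\epsilon$. The only delicate point I anticipate is bookkeeping the universal constant so that it lands exactly at the claimed $3$ in $\log(3/\epsilon)$; the analytic core is simply the Mercator tail estimate, which is entirely routine.
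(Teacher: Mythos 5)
Your proof is correct and follows essentially the same route as the paper: both recognize $g$ as the truncated Taylor (Mercator) expansion of $\log$ about $x=1$ and bound the tail, which is controlled by the geometric rate $((\alpha-1)/\alpha)^K$. If anything, your version is the more careful one: since in the variable $y=1-x$ the tail $\sum_{n>K} y^n/n$ has all positive terms, it must be bounded by summing the full geometric series as you do (giving the prefactor $(\alpha-1)/(K+1)$, which you then correctly absorb using the hypothesis on $K$ and $\log(\alpha/(\alpha-1))\le 1/(\alpha-1)$), whereas the paper's remainder estimate retains only a single term of that tail; your only implicit assumption is $\epsilon\le 1$, which is the regime of interest.
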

\begin{proof}
It is straight forward to see that since $\log(x)$ is infinitely differentiable on $[1/\alpha,1)$ the Taylor series expansion of $\log(x)$ on the interval $[1/\alpha,1)$ is 
\begin{equation}
    \log(x) = \sum_{n=1}^K \frac{(-1)^{n+1} (x-1)^{n}}{n} + \int_{\frac{1}{\alpha}}^1 {(-1)^{K+1} (x-1)^{K+1} }\mathrm{d}x.
\end{equation}
We then have that
\begin{equation}
R_K:=|g(x) - \log(x)| \le \frac{\left(\frac{\alpha -1}{\alpha}\right)^{K+2}}{K+2}\le  \frac{\left(\frac{\alpha -1}{\alpha}\right)^{K+2}}{3}.
\end{equation}
We then find that $R_K \le \epsilon$ if

\begin{equation}
    K > \frac{\log{\frac{3}{\epsilon}}}{\log{\frac{\alpha}{\alpha-1}}}.
\end{equation}
\end{proof}

\begin{lemma}\label{lem:fourier}
Let $f:\mathbb{R} \rightarrow \mathbb{C}$ and $\lambda,\epsilon \in (0,1)$, and $T[f](x) = \sum_{k=0}^K a_k x^k$ a polynomial such that $\abs{f(x)-T[f](x)} \leq \frac{\epsilon}{4}$ for all $x\in[-1+\lambda, 1-\lambda]$. Then there exists a $\boldsymbol{c}\in\mathbb{C}^{2M+1}$ such that

\begin{equation}
    \abs{f(x) - \sum_{m=-M}^{M} c_m \exp{i \frac{\pi}{2}mx}} \leq \epsilon
\end{equation}
for all $x\in[-1+\delta, 1-\delta]$, where $M=\max \left( 2 \ceil{\log{\frac{4 \|a \|_1}{\epsilon}}, \frac{1}{\delta}},0\right)$ and $\|c\|_1 \leq \|a\|_1$. The vector $\boldsymbol{c}$ can further be classically calculated in time $\operatorname{poly}(K,M,\log{\frac{1}{\epsilon}})$.
\end{lemma}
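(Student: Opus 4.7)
The plan is to derive this statement essentially as a direct application of Lemma~37 of \cite{van2020quantum}, which guarantees the existence of a trigonometric approximation of any bounded polynomial on a strict sub-interval of $[-1,1]$, and then to combine it with the hypothesized polynomial approximation via the triangle inequality.

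First I would apply Lemma~37 of \cite{van2020quantum} to $T[f](x)=\sum_{k=0}^{K}a_{k}x^{k}$, viewed as a polynomial of degree $K$ with coefficient $\ell_1$ norm $\|a\|_1$, at target error $3\epsilon/4$. The cited result produces coefficients $\boldsymbol{c}\in\mathbb{C}^{2M+1}$ with $\|c\|_1\le\|a\|_1$, where $M$ balances the Fourier-tail decay (yielding the $\log(4\|a\|_1/\epsilon)$ contribution) against the resolution required to capture the polynomial on the sub-interval of width $2-2\delta$ (yielding the $1/\delta$ contribution), matching the $\max$ expression stated here. Classical computability of $\boldsymbol{c}$ in $\operatorname{poly}(K,M,\log(1/\epsilon))$ time follows from the explicit Fourier-truncation recipe in the cited proof, which reduces to evaluating $O(M)$ sums or integrals that are polynomial expressions in the $a_k$.

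The triangle inequality then yields
\begin{equation}
    \left|f(x)-\sum_{m=-M}^{M}c_m e^{i\pi m x/2}\right| \le \left|f(x)-T[f](x)\right| + \left|T[f](x)-\sum_{m=-M}^{M}c_m e^{i\pi m x/2}\right| \le \frac{\epsilon}{4}+\frac{3\epsilon}{4}=\epsilon
\end{equation}
on $[-1+\delta,1-\delta]$ (assuming $\delta\ge\lambda$ so that the hypothesized polynomial bound holds throughout this interval), which is exactly the claimed inequality.

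The main obstacle is bookkeeping rather than new analysis: one must verify that the form of $M$ produced by Lemma~37 with error budget $3\epsilon/4$ and polynomial bound $\|a\|_1$ reproduces the $\max$ expression stated here (including the constants inside the logarithm and the leading factor of $2$), and one must reconcile the two interval parameters $\lambda$ and $\delta$ appearing in the hypothesis and conclusion respectively. No new analytical input beyond the cited lemma and the triangle inequality is required, so the work is almost entirely translation between conventions.
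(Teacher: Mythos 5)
Your proposal is correct and matches the paper's treatment: the paper gives no proof of this lemma at all, importing it directly as Lemma~37 of \cite{van2020quantum}, whose statement already contains both the $\epsilon/4$ polynomial hypothesis and the $\epsilon$ Fourier conclusion, so your decomposition (apply the cited lemma to $T[f]$ at budget $3\epsilon/4$ and then use the triangle inequality) is essentially the same argument, just re-deriving a step the cited lemma already folds in. Your flag about reconciling $\lambda$ and $\delta$ is also apt --- this is a typographical inconsistency in the statement as written, and the two should be the same parameter as in the original lemma.
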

This lemma can be used to provide an approximation to the matrix logarithm.  We provide this approximation and prove a bound on the scaling of the value of $M$ required to attain this error in the following corollary.

\begin{corollary} \label{cor:fourierfirst}
For any density matrix $\rho$ such that the spectrum of $\rho$ is bounded within $[\lambda, 1-\lambda]$ the operator $\log(\rho)$ then for any $\epsilon>0$ there exists $M>0$ and $\mathbf{c}\in \mathbb{C}^{2M+1}$ such that on the Hilbert space excluding the kernel of $\rho$
$$
\left\| \log(\rho) - \sum_{m=-M}^M c_m \exp\left\{i\frac{\pi}{2} m \rho \right\} \right\|_\infty\le \epsilon,
$$
for a value of $M \in O\left( \log \frac{1}{\epsilon} +\frac{1}{\lambda} \right)$ and $c_m$ such that  $\|c\|_1 \in O(\log\log(1/\epsilon)- \log\log(1+\lambda))$ and within the kernel of $\rho$, $\sum_{m=-M}^M c_m \exp\left\{i\frac{\pi}{2} m \rho \right\}=0$.
\end{corollary}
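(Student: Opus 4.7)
The plan is to chain together Lemma~\ref{lem:logapprox} with Lemma~\ref{lem:fourier} and then lift the resulting pointwise estimate to an operator-norm bound via the spectral theorem. First, I would apply Lemma~\ref{lem:logapprox} with $\alpha = 1/\lambda$ to produce the degree-$K$ Taylor polynomial $g(x) = \sum_{n=1}^{K}(-1)^{n+1}(x-1)^n/n$ with $|g(x)-\log(x)|\le \epsilon/2$ on $[\lambda,1]$, where $K = O(\log(1/\epsilon)/\log(1/(1-\lambda))) = O(\log(1/\epsilon)/\lambda)$. The critical observation for matching the stated coefficient-norm bound is to feed $g$ into Lemma~\ref{lem:fourier} in the shifted variable $y = x-1$: then $\tilde g(y)=\sum_n(-1)^{n+1}y^n/n$ has monomial $\ell_1$-norm equal to the harmonic number $H_K = O(\log K) = O(\log\log(1/\epsilon)+\log(1/\lambda))$, which, using $-\log\log(1+\lambda) = \Theta(\log(1/\lambda))$ as $\lambda \to 0$, is exactly the claimed $O(\log\log(1/\epsilon)-\log\log(1+\lambda))$. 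Expanding $(x-1)^n$ in monomials in $x$ instead would blow this up to $2^K$, so the shift is essential.

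Next, the Lagrange/alternating-series remainder bound keeps $\tilde g$ within $\epsilon/2$ of $\log(1+y)$ on the larger interval $[-1+\lambda,\,1-\lambda]$ with the same $K$, so applying Lemma~\ref{lem:fourier} with $\delta=\lambda$ yields Fourier coefficients $\tilde c_m$ with $|\tilde g(y)-\sum_m\tilde c_m e^{i\pi m y/2}|\le \epsilon/2$ on that interval, $M = O(\log(H_K/\epsilon) + 1/\lambda) = O(\log(1/\epsilon)+1/\lambda)$, and $\|\tilde c\|_1 \le H_K$. Undoing the shift via $e^{i\pi m y/2} = e^{-i\pi m/2}e^{i\pi m x/2}$ absorbs a unit-modulus phase into each coefficient, so $c_m := e^{-i\pi m/2}\tilde c_m$ has $\|c\|_1 = \|\tilde c\|_1$, and the triangle inequality gives $|h(x)-\log(x)|\le \epsilon$ on $[\lambda,1-\lambda]$ with $h(x) = \sum_m c_m e^{i\pi m x/2}$. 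Since $\rho$ is Hermitian with spectrum on its support contained in $[\lambda,1-\lambda]$, the functional calculus immediately turns this into $\|h(\rho)-\log(\rho)\|_\infty\le \epsilon$ on the Hilbert space excluding $\ker(\rho)$, with the asymptotic scalings for $M$ and $\|c\|_1$ as stated.

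The step I expect to be the main obstacle is the supplementary requirement that $\sum_m c_m e^{i\pi m \rho/2}$ vanish identically on $\ker(\rho)$. Because each exponential reduces to the identity on the kernel, this collapses to the single linear constraint $\sum_m c_m = h(0) = 0$, evaluated at a point outside the approximation interval where $h$ is uncontrolled. A naive shift of $c_0$ by $-\sum_m c_m$ would perturb the support-approximation by up to $\|c\|_1$ and break the error bound, so I would instead pre-modify the polynomial input to Lemma~\ref{lem:fourier} so that it already vanishes at $y=-1$, for instance by subtracting a small correction concentrated near $y=-1$ and exponentially suppressed on $[-1+\lambda, 1-\lambda]$, then verify that this correction does not inflate the degree or monomial $\ell_1$-norm beyond the stated asymptotics. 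Avoiding a Chebyshev-style blow-up in this last $\ell_1$-norm bookkeeping, or alternatively absorbing the constraint by coupling the LCU-style construction with a low-cost polynomial approximation to the support projector, is the technical point I expect to require the most care.
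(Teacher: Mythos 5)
Your main argument coincides with the paper's proof: Taylor-expand the logarithm in the shifted variable $y=x-1$ via \Cref{lem:logapprox} so that the monomial $\ell_1$-norm is the harmonic number $H_K$ rather than the $2^K$ one would get from expanding in powers of $x$, feed that polynomial into \Cref{lem:fourier}, absorb the unit-modulus phases $e^{-i\pi m/2}$ arising from undoing the shift into the coefficients $c_m$, and lift the resulting pointwise bound on $[\lambda,1-\lambda]$ to an operator-norm bound on the support of $\rho$ by the spectral theorem; your scalings $K\in O(\log(1/\epsilon)/\log(1/(1-\lambda)))$, $M\in O(\log(1/\epsilon)+1/\lambda)$ and $\|c\|_1\le H_K\in O(\log\log(1/\epsilon)-\log\log(1+\lambda))$ all match. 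The single point of divergence is the kernel condition $\sum_m c_m=0$, which you correctly flag as the delicate step (the point $x=0$ lies outside the approximation interval, and shifting $c_0$ naively would cost an error of order $\|c\|_1$). Your proposed remedy --- a correction to the polynomial localized near $y=-1$ with controlled degree and $\ell_1$-norm --- is left unexecuted and is more elaborate than necessary: the paper closes this gap with a parity argument, observing that the polynomial, and hence the Fourier expansion produced by \Cref{lem:fourier}, can be taken to be odd in $x$ (e.g.\ by approximating the odd extension $x\mapsto \mathrm{sgn}(x)\log\abs{x}$ on $[-1+\lambda,-\lambda]\cup[\lambda,1-\lambda]$ by a sine series), and an odd trigonometric polynomial vanishes identically at $x=0$, forcing $\sum_m c_m=0$ without perturbing the error or one-norm bookkeeping. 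Adopting that route lets you discard the localized-correction construction; if you keep your version, the degree and $\ell_1$-norm claims for the correction are the one step that still needs to be proved.
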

\begin{proof}
Proof immediately follows from the Taylor series expansion of $\log(1+x) = \sum_{k=1}^\infty (-1)^k x^k/k$ which is absolutely convergent on $x\in [-1+\lambda,1-\lambda]$.  Specifically, the truncation error from Taylor's remainder theorem states that
\begin{equation}
    \left|\sum_{k=1}^K (-1)^k x^k/k - \sum_{k=1}^\infty (-1)^k x^k/k\right|\le  (1-\delta)^K/K.
\end{equation}
Solving the following expression
\begin{equation}
    (1-\delta)^K/K = \epsilon/4.
\end{equation}
yields the following bound for a sufficient value of $K$  
\begin{equation}
    K\in \widetilde{O}(\log(1/\epsilon)/\log(1/(1-\lambda))) =\widetilde{O}(\log(1/\epsilon)/\log(1+\lambda))
\end{equation} 

We then note that this implies that the conditions of \Cref{lem:fourier} apply.  Specifically, for a value of $M\in O(\log(1/\epsilon) + 1/\lambda)$, and defining $I+x = \rho$
\begin{equation}
     \|\log(I+x) - \sum_{m=-M}^M c'_m \exp\{i\frac{\pi}{2} (\rho - I)
m  \}\|_\infty = \|\log(I+x) - \sum_{m=-M}^M c_m \exp\{i\frac{\pi}{2} \rho 
m  \}\|_\infty \le \epsilon. 
\end{equation}
The one-norm of the coefficient vector obeys from \Cref{lem:fourier} that
\begin{equation}
    \|c\|_1 \le \sum_{k=1}^K 1/k \in O(\log(K))\subseteq O(\log\log(1/\epsilon)- \log\log(1+\lambda))
\end{equation}
Finally, the result that the function takes on a value of zero at zero follows from the fact that the polynomial expansion of Lemma~\ref{lem:fourier} can be taken to be an odd expansion.
\end{proof}

This corollary shows that we can compute the value of the logarithm of a density matrix in polynomial time provided that we are dealing with either a full rank matrix or we restrict our attention to the space orthogonal to the Kernel of $\rho$.  This restriction of the space is reasonable in settings where we compute the relative entropies needed for the QIB objective function as the fact that the density matrix has zero eigenvalue will remove the singularities in the logarithm from the von Neumann entropies.  But before we focus on computing the QIB function in practice, we need to provide a specific access model within which we can identify the value of the bottleneck quantity.  In our context let us assume that we have access to unitary oracles $U_{\rho_1}, U_{\rho_2}$ that construct purifications of the density operators that we need to compute the QIB quantity. The oracles are defined below.

\begin{definition}
Let $U_{\rho_1}$ be an oracle that prepares a quantum state $\ket{\psi_{XYX'Y'}} \in \mathcal{H}_X \otimes \mathcal{H}_Y \otimes \mathcal{H}_{X'} \otimes \mathcal{H}_{Y'}$ such that ${\rm Tr}_{X'Y'}(\ketbra{\psi_{XYX'Y'}}{\psi_{XYX'Y'}}) = \rho_{XY}$.
\begin{equation}
    U_{\rho_1} \ket{0} = \ket{\psi_{XYX'Y'}}.
\end{equation}
Let $U_{\rho_2}$ be an oracle that prepares a quantum state $\ket{\psi_{RX}} = \sum \sqrt{\lambda_i}\ket{e_i}_R\otimes\ket{e_i}_X \in \mathcal{H}_X \otimes \mathcal{H}_R$.
\begin{equation}
    U_{\rho_2} \ket{0} = \ket{\psi_{RX}}.
\end{equation}
We further assume that the oracle has a known inverse that can be implemented using at the cost of a single query to $U_{\rho}$.
\end{definition}  


With this unitary access model in place, we can bound the number of queries needed in order to compute the quantum information bottleneck quantity.  We summarize the result in the following theorem.
\begin{theorem}
For positive $\lambda<1$ and $\epsilon>0$ the QIB objective function $\mathcal{L}(\Phi)$ can be computed within error $\epsilon$ with probability of greater than $2/3$ using a number of calls to $U_p$ and the channel $\Phi$ that scales as
$$
O\left( \log^2(1/\epsilon) + \frac{1}{\lambda^2} \right).
$$
provided the kernels of $\rho_{\widetilde{X}} \otimes \rho_{{Y}}$ is a subspace of the kernel of $\rho_{\widetilde{X} Y}$ and similarly the kernel of $\rho_{\widetilde{X}} \otimes \rho_{{X}}$ is a subspace of the kernel of $\rho_{\widetilde{X} Y}$ and the second smallest eigenvalues for all density matrices are at least $\lambda$.
\end{theorem}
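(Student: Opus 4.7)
The plan is to reduce the computation of $\mathcal{L}(\Phi)$ to estimating a modest number of scalar traces of the form ${\rm Tr}(\rho \log \sigma)$, approximate each logarithm by the Fourier series of Corollary \ref{cor:fourierfirst}, and then evaluate each resulting matrix-exponential trace with a Hadamard test. First I would expand each of the two quantum relative entropies in $\mathcal{L}(\Phi)$ into a difference of two cross terms, e.g.\ $S(\rho_{R\tilde{X}} \| \rho_R \otimes \rho_{\tilde{X}}) = {\rm Tr}(\rho_{R\tilde{X}} \log \rho_{R\tilde{X}}) - {\rm Tr}(\rho_{R\tilde{X}} \log(\rho_R \otimes \rho_{\tilde{X}}))$, and analogously for the $(\tilde X, Y)$ relative entropy. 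The kernel-containment hypotheses are precisely what are needed to ensure that the cross terms are well-defined under the $0\log 0 = 0$ convention, since every kernel direction of the second argument is also a kernel direction of the first.

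Second, since every relevant density operator has its nonzero spectrum in $[\lambda, 1-\lambda]$, Corollary \ref{cor:fourierfirst} gives, for each logarithm appearing, a Fourier approximation with $M \in O(\log(1/\epsilon) + 1/\lambda)$ frequencies and coefficient vector of $\ell_1$-norm $\|\mathbf{c}\|_1 \in O(\log\log(1/\epsilon))$. This reduces the problem to estimating $O(M)$ quantities of the form ${\rm Tr}\bigl(\rho\, e^{i\pi m \sigma/2}\bigr)$, each of modulus at most $1$.

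Third, I would estimate each such quantity by a Hadamard test. The state $\rho$ on the main register is prepared by combining the oracles of the definition above with $\Phi$: for the joint distributions use $U_{\rho_1}$ (with $\Phi$ applied to the appropriate register) or $U_{\rho_2}$ followed by $\Phi$ on $X$; for product marginals such as $\rho_R \otimes \rho_{\tilde{X}}$ or $\rho_{\tilde X} \otimes \rho_Y$, use two independent invocations of the corresponding oracle on disjoint registers. An ancilla prepared in $|+\rangle$ then controls an implementation of $e^{i\pi m \sigma/2}$ obtained either from density matrix exponentiation or, more favorably, from QSVT applied to a block encoding of $\sigma$; measuring the ancilla in the $X$ and $Y$ bases yields unbiased estimators for the real and imaginary parts of the desired trace. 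Allocating error $\epsilon/2$ to the Fourier truncation and $\epsilon/2$ to the statistical step, the per-term target precision is $\Theta(\epsilon/(M \|\mathbf{c}\|_1))$, and amplitude estimation then supplies each term at a cost of $O(M \|\mathbf{c}\|_1/\epsilon)$ queries, with a union bound promoting the individual success probabilities to the required $2/3$.

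The main obstacle is bookkeeping: a naive combination of the above budgets would give a product like $M^2 \|\mathbf{c}\|_1^2/\epsilon^2$ rather than the additive $\log^2(1/\epsilon) + 1/\lambda^2$ claimed, so the delicate step will be to separate the two cost sources so that the Fourier truncation contributes the $\log^2(1/\epsilon)$ piece and the Hadamard-test variance, controlled by the spectral gap through $\lambda$, contributes the $1/\lambda^2$ piece, while exploiting the fact that $\|\mathbf{c}\|_1$ is only doubly logarithmic in $1/\epsilon$. A secondary difficulty is ensuring that the implementation of the controlled $e^{i\pi m \sigma/2}$ does not introduce a factor of $m$ that blows up once summed over the $O(M)$ Fourier frequencies; this is where the QSVT-based block-encoded time evolution, rather than naive density-matrix exponentiation, is needed to keep the overall query count at the stated scaling.
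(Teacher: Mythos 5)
Your plan follows the paper's proof essentially step for step: split each relative entropy into cross terms of the form ${\rm Tr}(\rho\log\sigma)$, invoke \Cref{cor:fourierfirst} to replace each logarithm by a Fourier sum with $M \in O(\log(1/\epsilon)+1/\lambda)$ and $\|c\|_1$ doubly logarithmic in $1/\epsilon$, build purification oracles for each marginal and product marginal from one or two calls to $U_\rho$ and $\Phi$, implement $e^{i\pi m\sigma/2}$ from purified access via the density-matrix-exponentiation result of \cite{low2019hamiltonian} at cost $O(m+\log(1/\epsilon))$ per term, and estimate each resulting trace with a Hadamard/swap-test circuit boosted by amplitude estimation and a union bound. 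All of these ingredients, including the oracle constructions for $\rho_X$, $\rho_{X\tilde X}$ and $\rho_X\otimes\rho_{\tilde X}$ and the split of the error budget between Fourier truncation and simulation error, appear in the paper's argument in the same roles.

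The one step you explicitly leave open --- reconciling the amplitude-estimation cost with the purely additive bound $O(\log^2(1/\epsilon)+1/\lambda^2)$ --- is a genuine gap in your proposal, and you should be aware that it is also exactly where the paper's own accounting is thinnest. The paper's final tally is the number of Fourier terms, $O(M)$, multiplied by the per-term Hamiltonian-simulation cost $O(M+\log(1/\epsilon_2))$, with total error bounded by $4(\epsilon_1+\|c\|_1\epsilon_2)$; the sampling overhead $\widetilde{O}(|c_m|/\epsilon'')$ that the paper itself introduces a few lines earlier for forming each estimate $E_j$ does not reappear as a multiplicative factor in that product. So there is no hidden trick by which the $1/\epsilon$ (or $1/\epsilon^2$) statistical factor you are worried about is absorbed into $\log^2(1/\epsilon)$; the stated bound is best read as counting queries per coherent run of the circuit, with the repetition count for trace estimation handled separately through the error parameters. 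As written, your proposal correctly identifies the architecture of the proof but does not close the cost bookkeeping, so to match the theorem you would either have to adopt the paper's convention of not charging the sampling repetitions against the stated query count, or accept an additional multiplicative $\widetilde{O}(1/\epsilon)$ factor in the final complexity.
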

\begin{proof}
The objective function in~\eqref{eq:objective_func}  consists of a number of different density operators.  In all cases we need to estimate a corresponding entropy or cross-entropy for each such term.  First the operation $U_\rho$ prepares the quantum state $\ket{\rho_{RXY}}$ which is a purification of the input density operator over the system, the reference system and the correlated subsystem $Y$.

In order to implement the an approximation to the matrix logarithms in~\eqref{eq:objective_func} using~\Cref{cor:fourierfirst}, we need to construct a method for building terms of the form $e^{-i\rho_{X\tilde{X}} {\alpha_k}},e^{-i\rho_{X}\otimes \rho_{\tilde{X}} {\alpha_k}},\ldots$.
Corollary 17 of \cite{low2019hamiltonian} can be utilized to implement a unitary $V_\sigma(t)$ such that
\begin{equation}
    \|e^{it\sigma} - V_\sigma(t) \|_\infty \le \epsilon,
\end{equation} 
while using $\bigo{t + \log \frac{1}{\epsilon}}$ queries to an oracle $U_\sigma$ such that ${\rm Tr}_a(U_{\sigma} \ket{0}_s \ket{0}_a)=\sigma$.  We can then construct an oracle $U_{\rho_{X\tilde X}}$ in the following way.  From a single query to $U_\rho$ we can construct a state of the form
\begin{equation}
    {\rm Tr}_{RY}(U_\rho \ket{0}) = \rho_{X}.
\end{equation}
Thus by redefining the ancillary system $a$ we can implement a unitary that prepares a purification of this state, $U_{\rho_X}$, using a single query to $U_\rho$. Note that in our notation ${\rm Tr}(\ket{\psi}) \coloneqq {\rm Tr}(\ketbra{\psi}{\psi})$.

Similarly, we can construct a unitary that prepares a purification of $\rho_{X\tilde X}$ through the use of the reference system
\begin{equation}
    {\rm Tr}_{Y}((\Phi \otimes I_Y\otimes I_R)U_{\rho} \ket{0})=\rho_{X\tilde X}.
\end{equation}
We then define this oracle to be $U_{\rho_{X\tilde X}}$ which can also be implemented using a single query to $U_\rho$.  The unitary preparing an encoding of $\rho_{XY}$, $U_{\rho_{XY}}$, can also be built using a single query to the oracle.

The final state that we need is of the form $\rho_X\otimes \rho_{\tilde{X}}$  which we can construct using two queries to $U_\rho$ via
\begin{equation}
    {\rm Tr}_{YR,Y'R'}( (\Phi_{X'} \otimes I_{XYR,Y'R'})U_\rho^{\otimes 2} \ket{0}^{\otimes 2} ) = \rho_X \otimes \rho_{\tilde{X}},
\end{equation}
where the primed subsystems refer to the reference and the output subsystems of the second subsystem that $U_\rho$ acts on.  This shows that oracles can be constructed that implement purifications of each of the reduced density matrices in~\eqref{eq:objective_func}.

Thus we can implement for any $\rho'$
\begin{equation}
    \left\|\sum_{m=-M}^M c_m \exp\{i\frac{\pi}{2} m \rho' \} -\sum_{m=-M}^M c_m V_{\rho'}\left(\frac{\pi}{2} m\right)\right\|_\infty \le \|c\|_1 \max_m \left\|\exp\{i\frac{\pi}{2} m \rho' \} -  V_{\rho'}\left(\frac{\pi}{2} m\right)\right\|_\infty\le \epsilon'
\end{equation}
Next we have that any term of the form, for density matrix $\rho''$ that an estimate $E_j$ can be formed such that
\begin{equation}
    |E_j -c_j {\rm Tr}(\rho'' V_{\rho'}(m\pi/2))| \le \epsilon''
\end{equation}
using $\tilde{O}(|c_j|/\epsilon'')$ applications of $V$.  Thus as there are $M$ terms, the total cost number of $V$ applications is
 
\begin{equation}
    \sum_{m=-M}^M |c_m| /\epsilon''
\end{equation}

Next by using the generalized swap-test circuit in the number of queries needed to estimate
using a number of queries to $U_{\rho'}$ that scales as
\begin{equation}
    O\left(M \left({M} + \log(\|c\|_1 / \epsilon')\right) \right).
\end{equation}
From \Cref{cor:fourierfirst} this is in
\begin{equation}
    \widetilde{O}\left(\log^2(1/\epsilon')+ \frac{1}{\lambda_m^2} \right).
\end{equation}

The logarithm approximation in Corollary \ref{cor:fourierfirst} lets us approximate the information bottleneck quantity by evaluating traces of the form $\Tr{\rho e^{ia\sigma}}$. Since we have purified access to density matrices in question, Corollary 17 of \cite{low2019hamiltonian} can be utilized to simulate $e^{ia\sigma}$ within $\epsilon$ error in $\bigo{a + \log \frac{1}{\epsilon}}$ queries to $U_\rho$. Using this method, $\Tr{\sum_m c_m \rho e^{i\frac{\pi m}{2}\sigma}}$ can be evaluated in $\bigo{M (M+\log \frac{1}{\epsilon})}$ queries with an error upper bounded by $\|c\|_1 \epsilon$. Observing that the objective function in \eqref{eq:objective_func} contains $\bigo{1}$ number of terms that are approximated by $\sum_m c_m\Tr{ \rho e^{i\frac{\pi m}{2}\sigma}}$, we can calculate the objective function in $\bigo{(\log\frac{1}{\epsilon_1} + \frac{1}{\lambda_m}) (\log\frac{1}{\epsilon_1} +\log \frac{1}{\epsilon_2} + \frac{1}{\lambda_m})}$ queries with an error upper bounded by $\varepsilon \leq 4(\epsilon_1 + \|c\|_1 \epsilon_2)$. Rescaling $\epsilon_1$ and $\epsilon_2$ to set the total error to $\epsilon$ yields a query complexity of $\bigo{(\log\frac{1}{\epsilon} + \frac{1}{\lambda_m})^2} \in \bigo{\log^2\frac{1}{\epsilon} + \frac{1}{\lambda_m^2}}$.
\end{proof}

\section{Gradients of QIB Objective}

For quantum machine learning applications, as well as certain applications in communication, we wish to optimize the QIB objective function.  This will allow us to maximize the flow of relevant information about a quantity through part of, or the entirety of, a quantum network.  The most natural way to optimize the QIB objective function is through the gradient of the objective function.  The aim of this section is to provide an algorithm for estimating the components of the gradient.

  As a first step towards computing the gradient, let us begin by defining for $X = X_0 \otimes \widetilde{X}$ the action of the channel to be 
\begin{equation}
    \Phi(\rho) := {\rm Tr}_{X_0}(U\rho U^\dagger)\label{eq:rhoDeriv}
\end{equation}
This holds without loss of generality from Stinespring's dilation theorem; however, for applications in QML it would nevertheless be preferable to take our channel to be unitary to make its implementation on a quantum computer easier.
It further follows that we can choose $U$ without loss of generality to be of the form
\begin{equation}
    U = U_n U_{n-1} \ldots U_2 U_1,\qquad U_i := e^{-i \alpha_i H_i},\label{eq:Udef}
\end{equation}
where $\{H_i\}_i \subset {\rm Herm}(X)$ is a set of Hermitian operators.
From the above definitions we can see that such channels are differentiable functions of the parameters $\alpha_i$ through the following lemma.
\begin{lemma}\label{lem:rhoDeriv}
For the unitary channel, $\Phi$, acting on $\rho \in \mathbb{C}^{2^n\times 2^n}$ given in~\eqref{eq:Udef} the derivative of $\Phi(\rho)={\rm Tr}_{X_0}(U\rho U^\dagger)$ given in~\eqref{eq:rhoDeriv} with respect to the parameter $\alpha_k$ is 

\begin{gather}
    \pak{\rotil} = {\rm Tr}_{X_0}(-i [\widetilde{H_k}, U\ro{X}U^\dagger]),
\end{gather}
where $\widetilde{H_k} = (\prod_{i=1}^{k-1} e^{-i \alpha_i H_i}) H_k(\prod_{i=k-1}^{1} e^{+i \alpha_i H_i})$.
This further implies that the derivatives of the remainder of the density operators that appear in the QIB 
\begin{align*}
\frac{\partial (\sigmarx)}{\partial \alpha_k} 
    &=-i \,\,\,\Tr_{X_0}(\ro{R} \otimes [\widetilde{H_k}, U \ro{X} U^\dagger]) \\
   \frac{\partial \taurx}{\partial \alpha_k} &= -i \,\,\,\Tr_{X_0}([\mathbb{1}_{R} \otimes \widetilde{H_k}, (\mathbb{1}_{R} \otimes U)\ro{RX}(\mathbb{1}_{R} \otimes U^\dagger)])
\end{align*}
\end{lemma}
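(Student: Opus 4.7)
The plan is to apply the product rule to each density operator, with the key preliminary step being a rewriting of $\partial U/\partial \alpha_k$ as a single left-multiplication by $-i\widetilde{H_k}$. Because only the factor $U_k = e^{-i\alpha_k H_k}$ depends on $\alpha_k$, and because $H_k$ commutes with $U_k$, differentiating the product in~\eqref{eq:Udef} yields
\[
\frac{\partial U}{\partial \alpha_k} = U_n U_{n-1} \cdots U_{k+1} (-iH_k) U_k U_{k-1} \cdots U_1.
\]
Inserting $(U_{k-1}\cdots U_1)(U_1^\dagger \cdots U_{k-1}^\dagger)=I$ to the right of $H_k$ and regrouping so that the full product $U$ sits on the right factors this as $\partial U/\partial \alpha_k = -i\widetilde{H_k}\, U$, where $\widetilde{H_k}$ is precisely the conjugated Hermitian operator defined in the statement. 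The adjoint identity $\partial U^\dagger/\partial \alpha_k = i U^\dagger \widetilde{H_k}$ then follows from $\widetilde{H_k}^\dagger = \widetilde{H_k}$.

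Next I would substitute these expressions into $\rho_{\tilde X} = {\rm Tr}_{X_0}(U\rho_X U^\dagger)$. Since $\rho_X$ itself has no $\alpha_k$ dependence (the gate parameters affect only the channel $\Phi$), linearity of the partial trace combined with the product rule gives
\[
\frac{\partial \rho_{\tilde X}}{\partial \alpha_k} = {\rm Tr}_{X_0}\bigl(-i\widetilde{H_k} U\rho_X U^\dagger + U\rho_X U^\dagger\, i\widetilde{H_k}\bigr) = {\rm Tr}_{X_0}\bigl(-i[\widetilde{H_k},\, U\rho_X U^\dagger]\bigr),
\]
which is the first identity.

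For the remaining two expressions I would reuse this calculation. Since $\rho_R$ is independent of $\alpha_k$, applying the product rule to $\rho_R \otimes \rho_{\tilde X}$ and substituting the first identity produces the second. For $\rho_{R\tilde X}$, the channel acts on the system register alone, so the dressed unitary is $\mathbb{1}_R \otimes U$ and its derivative is $-i(\mathbb{1}_R \otimes \widetilde{H_k})(\mathbb{1}_R \otimes U)$; since $\rho_{RX}$ is also $\alpha_k$-independent, the same commutator manipulation as above delivers the stated formula.

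The main obstacle is purely a matter of bookkeeping: correctly tracking the orderings of the $U_j$'s when showing that $\partial U/\partial \alpha_k$ factors as $-i\widetilde{H_k}\, U$ with the particular $\widetilde{H_k}$ of the lemma. Once that identification is made, every remaining step is a direct application of the Leibniz rule and the linearity of the partial trace.
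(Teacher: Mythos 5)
Your proposal is correct and follows essentially the same route as the paper: differentiate the product in \eqref{eq:Udef} to obtain $\partial_{\alpha_k}U=-i\widetilde{H_k}U$ with $\widetilde{H_k}$ a unitary conjugate of $H_k$ (hence Hermitian), then apply the Leibniz rule and the linearity of the partial trace to each of the three density operators. One bookkeeping caveat, on which the paper's own proof also stumbles: with the literal ordering $U=U_n\cdots U_1$ of \eqref{eq:Udef}, the conjugating unitaries in $\widetilde{H_k}$ are those standing to the \emph{left} of $U_k$, i.e.\ $\widetilde{H_k}=U_n\cdots U_{k+1}H_kU_{k+1}^\dagger\cdots U_n^\dagger$, so the identity to insert is $(U_{k+1}^\dagger\cdots U_n^\dagger)(U_n\cdots U_{k+1})$ rather than the one you wrote; your insertion reproduces the lemma statement's $\widetilde{H_k}$ only under the opposite reading of the product, and in either convention the final commutator formulas are unaffected.
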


\begin{proof}
First using the fact that the derivative operation commutes with the partial trace
\begin{align}
    \pak{\rotil}
    &= {\rm Tr}_{X_0}\left(\pak{(U \ro{X} U^\dagger)}  \right)
\end{align}
Next applying the product rule using the ansatz of~\eqref{eq:Udef}
\begin{align}
    \pak{(U \ro{X} U^\dagger)} &= \pak{} [(\prod_{i=1}^n e^{-i \alpha_i H_i})\ro{X} (\prod_{i=n}^1 e^{+i \alpha_i H_i})] \\
    &=  [(\pak{}\prod_{i=1}^n e^{-i \alpha_i H_i})\ro{X} (\prod_{i=n}^1 e^{+i \alpha_i H_i})] + [(\prod_{i=1}^n e^{-i \alpha_i H_i})\ro{X} (\pak{} \prod_{i=n}^1 e^{+i \alpha_i H_i})] \\
    &= [(\prod_{i=1}^{k} e^{-i \alpha_i H_i})(-i H_k)(\prod_{i=k+1}^{n} e^{-i \alpha_i H_i}) \ro{X} (\prod_{i=n}^1 e^{+i \alpha_i H_i})]   \nonumber \\ &\,\,\,\,\,\,\,\,\,\,\,\,+[(\prod_{i=1}^n e^{-i \alpha_i H_i})\ro{X} (\prod_{i=n}^{k} e^{+i \alpha_i H_i})(+i H_k)(\prod_{i=k+1}^{1} e^{+i \alpha_i H_i})] \\
    &= -i (\hk U \ro{X} U^\dagger - U \ro{X} U^\dagger \hk)= -i [\widetilde{H_k}, U \ro{X} U^\dagger]
\end{align}

Hence
\begin{gather}
    \label{eq:rotilder}
    \pak{\rotil} = {\rm Tr}_{X_0}\left(-i [\widetilde{H_k}, U \ro{X} U^\dagger] \right)
\end{gather}
Where $\widetilde{H_k} = (\prod_{i=n}^{k+1} e^{-i \alpha_i H_i}) H_k(\prod_{i=k+1}^{n} e^{+i \alpha_i H_i})$.
The Hermitian operators can then expanded in a suitable unitary basis such that $\mathbb{1} \otimes \widetilde{H_k} = \sum_{j=1}^M c_{j,k} V_j$. 

The remainder of the derivatives can then be found similarly,
\begin{align}
    \frac{\partial (\sigmarx)}{\partial \alpha_k} &= -i \,\,\,\Tr_{X_0}([\mathbb{1}_{'} \otimes \widetilde{H_k}, (\mathbb{1}_{R} \otimes U)\ro{R}\otimes\ro{X}(\mathbb{1}_{R} \otimes U^\dagger)])\nn
    &=-i \,\,\,\Tr_{X_0}(\ro{R} \otimes [\widetilde{H_k}, U \ro{X} U^\dagger]) \label{eq:sigder}\\
   \frac{\partial \taurx}{\partial \alpha_k} &= -i \,\,\,\Tr_{X_0}([\mathbb{1}_{R} \otimes \widetilde{H_k}, (\mathbb{1}_{R} \otimes U)\ro{RX}(\mathbb{1}_{R} \otimes U^\dagger)]) \label{eq:roder}
\end{align}
\end{proof}

Here we discuss what is arguably the most natural method for optimizing the QIB objective function for a network: computing numerical derivatives of the objective function through high-order divided difference methods.  We will see that, while low order methods are obviously impractical with respect to their scaling with respect to the error tolerance, high order methods form a scalable and practical method for evaluating the function.  Specifically, we approach the problem by approximating the logarithm function in an interval excluding the singularity with a series expansion. We then will be able to express the traces as expectation values, which can be calculated (within an arbitrarily small error and with arbitrarily small failure probability) with amplitude estimation methods.
The first step towards this goal is to show to construct a polynomial approximation, which we had demonstrated in Lemma \ref{lem:logapprox}.


Now that we have provided this expression for a series expansion for a logarithm, we can move forward towards an expression for the derivative of an approximation to the logarithm.  Note that to differentiate the resultant power series term by term, we need to assert uniform convergence.  The result of Lemma~\ref{lem:logapprox} shows such a uniform convergence result and thus we can differentiate the series term by term to obtain the derivative.  We provide the following lemma as an intermediate result that gives a method for differentiating a Fourier series for the logarithm.  The result is given below.

\begin{lemma}
Let us define $\log_{K,M}(x) := \sum_{m=-M}^{M} c_m \exp{i \frac{\pi}{2}mx}$ for constants $c_m$ as per~Corollary~\ref{cor:fourierfirst}, where we left the K index in to denote that the coefficients were calculated using a $K^{\rm th}$ degree polynomial. Let $\rho$ and $\sigma$ be a differentiable density operator acting on a finite dimensional Hilbert space.  We can then calculate the derivative of the cross entropy assuming the kernel of $\sigma$ is a subspace of the kernel of $\rho$ 
$$
\frac{\partial}{\partial {\alpha_k}} \Tr{\rho \log_{K,M}\sigma} = \Tr{\frac{\partial \rho}{\partial {\alpha_k}}\log_{K,M}\sigma} + \sum_{m=-M}^{M} \frac{i \pi m c_m }{2} \mathbb{E}_S \left[\Tr{\rho e^{i s \frac{\pi}{2}m\sigma} \frac{\partial \sigma}{\partial {\alpha_k}} e^{i (1-s) \frac{\pi}{2}m\sigma}} \right].
$$
\end{lemma}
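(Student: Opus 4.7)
The plan is to compute the derivative of $\Tr{\rho \log_{K,M}\sigma}$ directly by exploiting the fact that $\log_{K,M}$ is, by construction, a finite linear combination of matrix exponentials, so differentiation commutes with the sum. The only nontrivial input is a formula for the derivative of a matrix exponential of a varying operator, which is supplied by Duhamel's identity.

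First I would apply the product rule under the trace to write
\begin{equation}
\frac{\partial}{\partial \alpha_k}\Tr{\rho\, \log_{K,M}\sigma} = \Tr{\frac{\partial \rho}{\partial \alpha_k}\log_{K,M}\sigma} + \Tr{\rho\, \frac{\partial}{\partial \alpha_k}\log_{K,M}\sigma}.
\end{equation}
The first term is already in the desired form, so the remaining task is to evaluate $\partial_{\alpha_k}\log_{K,M}\sigma$. Substituting the explicit Fourier expansion and interchanging the (finite) sum with the derivative gives
\begin{equation}
\frac{\partial}{\partial \alpha_k}\log_{K,M}\sigma = \sum_{m=-M}^{M} c_m \,\frac{\partial}{\partial \alpha_k} \exp\!\left(i\tfrac{\pi}{2} m \sigma \right).
\end{equation}

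Next I would invoke Duhamel's formula: for any smooth matrix-valued function $A(\alpha_k)$,
\begin{equation}
\frac{\partial}{\partial \alpha_k} e^{A} = \int_0^1 e^{s A}\, \frac{\partial A}{\partial \alpha_k}\, e^{(1-s) A}\, ds.
\end{equation}
Applied with $A = i\tfrac{\pi}{2} m \sigma$, this yields
\begin{equation}
\frac{\partial}{\partial \alpha_k} e^{i\frac{\pi}{2} m \sigma} = i\tfrac{\pi}{2} m \int_0^1 e^{i s \frac{\pi}{2} m \sigma}\, \frac{\partial \sigma}{\partial \alpha_k}\, e^{i (1-s)\frac{\pi}{2} m \sigma}\, ds.
\end{equation}
Plugging this back, using linearity of the trace and the sum, and recognizing $\int_0^1 (\cdot)\,ds = \mathbb{E}_S[\cdot]$ for $S$ uniform on $[0,1]$, gives the claimed identity.

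The only potential obstacle is justifying the Duhamel step and the interchange of differentiation with the trace. Both are routine in finite dimensions: the matrix exponential is entire, so $e^{i\frac{\pi}{2} m \sigma}$ depends analytically on the entries of $\sigma$, and $\sigma$ depends smoothly on $\alpha_k$ by~\Cref{lem:rhoDeriv}; hence Duhamel's formula follows from differentiating the convergent power series for $e^A$ term by term, or equivalently from the identity $\partial_\tau (e^{(1-\tau)A_1}e^{\tau A_2}) = e^{(1-\tau)A_1}(A_2-A_1)e^{\tau A_2}$ integrated over $\tau$. Since the Fourier sum has only finitely many terms ($2M+1$), no further exchange-of-limit issues arise, and the derivation is complete.
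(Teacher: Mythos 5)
Your proposal is correct and follows essentially the same route as the paper: product rule under the trace, term-by-term differentiation of the finite Fourier sum, Duhamel's formula for the derivative of the matrix exponential, and rewriting the $s$-integral as an expectation over a uniform random variable. If anything, you justify the Duhamel step more explicitly than the paper does; the only point you omit is the paper's closing remark that, under the stated kernel hypothesis, both terms vanish on the kernel of $\sigma$, which is a minor observation rather than a gap.
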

\begin{proof}
Let $g(x)=\sum_{m=0}^{K}a_m x^m$, let $\widetilde{f}(\rho,\sigma) := \Tr{\rho g(\sigma)} = \sum_{m=1}^{K+1}a_m \Tr{\rho \sigma^m}$. The derivative of $\widetilde{f}(\rho,\sigma)$ will then take the form

\begin{align}
    \frac{\partial}{\partial {\alpha_k}}\widetilde{f}(\rho,\sigma) &= \frac{\partial}{\partial {\alpha_k}} \sum_{m=1}^{K}a_m \Tr{\rho \sigma^m} \\
    &= \sum_{m=1}^{K}a_m \left[\Tr{\frac{\partial \rho}{\partial {\alpha_k} }\sigma^m} + \sum_{i=0}^{m-1} \Tr{\sigma^i \frac{\partial \sigma}{\partial {\alpha_k}} \sigma^{m-i-1}} \right]
\end{align}

\begin{align}
    &\frac{\partial}{\partial {\alpha_k}} \Tr{\rho \log_{K,M}\sigma} = \Tr{\frac{\partial \rho}{\partial {\alpha_k}}\log_{K,M}\sigma} + \sum_{m=-M}^{M} c_m \Tr{\rho \frac{\partial}{\partial {\alpha_k}}\exp{i \frac{\pi}{2}m\sigma}}\\
    &\qquad= \sum_{m=-M}^{M} c_m \Tr{\frac{\partial \rho}{\partial {\alpha_k}} e^{i \frac{\pi}{2}m\sigma}} + \sum_{m=-M}^{M} \frac{i \pi m c_m }{2}  \int_0^1 \Tr{\rho e^{i s \frac{\pi}{2}m\sigma} \frac{\partial \sigma}{\partial {\alpha_k}} e^{i (1-s) \frac{\pi}{2}m\sigma}}ds
\end{align}

We can express the integral as an expectation value of a random variable S with a uniform distribution in the interval $[0,1]$:

\begin{equation}
    \int_0^1 \Tr{\rho e^{i s \frac{\pi}{2}m\sigma} \frac{\partial \sigma}{\partial {\alpha_k}} e^{i (1-s) \frac{\pi}{2}m\sigma}}ds = \mathbb{E}_S \left[\Tr{\rho e^{i s \frac{\pi}{2}m\sigma} \frac{\partial \sigma}{\partial {\alpha_k}} e^{i (1-s) \frac{\pi}{2}m\sigma}} \right]
\end{equation}

Thus the derivative is expressed as:

\begin{align}
    \frac{\partial}{\partial {\alpha_k}} \Tr{\rho \log_{K,M}\sigma} &= \Tr{\frac{\partial \rho}{\partial {\alpha_k}}\log_{K,M}\sigma} + \sum_{m=-M}^{M} \frac{i \pi m c_m }{2} \mathbb{E}_S \left[\Tr{\rho e^{i s \frac{\pi}{2}m\sigma} \frac{\partial \sigma}{\partial {\alpha_k}} e^{i (1-s) \frac{\pi}{2}m\sigma}} \right] \label{eq:41derivative}
\end{align}
Next, note that as $\log_{K,M}$ is zero in the kernel of $\sigma$ and the second term above is zero in the kernel of $\rho$.  Thus the derivative is zero in the kernel of $\sigma$ if the kernel of $\sigma$ is a subspace of the kernel of $\rho$.
\end{proof}
The next step towards understanding the complexity of evaluating the gradient of the QIB function involves performing error analysis on the  derivative expression.  This is vital because it will inform us about the level of precision that we will need to ensure that the final gradient estimate is appropriately accurate.  Our approach for the derivative estimation involves several steps.  We first begin by approximating the Logarithm using a Fourier series and then differentiate the Fourier series to optimize our approximation to the objective function.  In the following let $\log_{K,M}$ be an approximation to the logarithm function such that for any $\lambda_i$ inside a compact domain that excludes $0$ we have that $|\log_{K,M}(\lambda_i) - \log(\lambda_i)|\le \epsilon$.  

In order to proceed with out error analysis we need a slightly more specific expansion for approximate logarithm that we use.  Specifically we use a further truncated Taylor series expansion of $(\frac{\arcsin{x}}{\pi/2})^k$ such that for coefficients $b_l^{(k)}$, 
\begin{equation}
\left(\frac{\arcsin{x}}{\pi/2}\right)^k = \sum_{l=0}^\infty b_l^{(k)} x^l,
\end{equation}
 for $x \in (-1,1)$.  We then truncate the Taylor series expansion to order $L$ and attain the following approximation to the logarithm
\begin{equation}
\log_{KLM}\sigma :=  \sum_{k=1}^K \frac{(-1)^k}{k} \sum_{l=1}^L b_l^{(k)} (\frac{i}{2})^l \sum_{m=\ceil{l/2}-M}^{\floor{l/2}+M} (-1)^m e^{i(2m-l)\frac{\sigma\pi}{2}}.
\end{equation}
We then find that the error in the approximation to the derivative of this expression  with respect to the variational parameters is 

\begin{align}
    E &= |\partial_{\alpha_k} \Tr{\rho \log \sigma} - \partial_{\alpha_k} \Tr{\rho \log_{KLM} \sigma}|\\
    &= |\partial_{\alpha_k} (\Tr{\rho \log \sigma} - \Tr{\rho \log_{KLM} \sigma})| \\
    &= |\Tr{(\partial_{\alpha_k} \rho)(\log \sigma - \log_{KLM} \sigma)} + \Tr{\rho \partial_{\alpha_k} (\log \sigma - \log_{KLM} \sigma)}|\\
    &\leq |\Tr{(\partial_{\alpha_k} \rho)(\log \sigma - \log_{KLM} \sigma)}| + |\Tr{\rho \partial_{\alpha_k} (\log \sigma - \log_{KLM} \sigma)}| \label{eq:Ebd}
\end{align}
Focusing on the first term, we have that
\begin{align}
    |\Tr{(\partial_{\alpha_k} \rho)(\log \sigma - \log_{K,M} \sigma)}| &= |\Tr{(\partial_{\alpha_k} \rho)(\log \sigma - \log_{K,M} \sigma)}| \\
    &= \left|\Tr{(\partial_{\alpha_k} \rho)\sum_i(\log \lambda_i - \log_{KLM} \lambda_i)\ket{e_i}\bra{e_i}}\right|\\
    &\leq |\sum_i \bra{e_i}\partial_{\alpha_k} \rho\ket{e_i} \epsilon|\\
    &= \epsilon \abs{\Tr{\partial_{\alpha_k} \rho}}\\
    &= 0
\end{align}
Here the last equality follows from the fact that the trace of a commutator is zero and Lemma~\ref{lem:rhoDeriv}. The remaining term in~\eqref{eq:Ebd} can be bound using the following result, which is proven in Appendix~\ref{sec:errorproof}.



\begin{lemma}\label{lem:logapprox2}
Let $K,L,M$ be non-negative integers and let $\sigma$ be a density matrix that depends on the parameter $\alpha_k$
For values of $K$ that satisfy the assumptions of \Cref{lem:logapprox}, the error incurred by approximating the derivatives of the logarithm constrained on the subspace orthogonal to the kernel of $\sigma$ is bounded by

$$
    \infnorm{\partial_{\alpha_k} \left(\log \sigma - \log_{KLM} \sigma\right)} \leq \infnorm{\partial_{\alpha_k} \sigma}\left(\frac{1+e^2}{e} H_K\frac{L(1-\lambda_{min}^2)^{L+1}}{\lambda_{min}^3 (2-\lambda_{min}^2)^{1.5}} + 2\pi H_K L e^{-\frac{2M^2}{L}} +  \frac{(1-\lambda_{min})^K}{\lambda_{min}}\right) $$ 
    and further, there exists an algorithm that yields an estimate $E$ of ${\rm Tr}( \rho \partial_{{\alpha_k}}\log_{KLM} \sigma)$ that has zero mean error and standard deviation
    $$
    \sqrt{\mathbb{E}({\rm Tr}^2( \rho \partial_{{\alpha_k}}\log_{KLM} \sigma)) - \mathbb{E}({\rm Tr}^2( \rho \partial_{{\alpha_k}}\log \sigma))} \leq \pi M H_K \frac{\infnorm{\partial_{\alpha_k} \sigma}}{\sqrt{n}}
    $$
    Here n is the number of samples used to estimate the expectation value and $H_K$ is the $K^{\rm th}$ harmonic number, which is in $O(\log(K))$.
\end{lemma}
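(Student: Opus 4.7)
The plan is to split $\partial_{\alpha_k}(\log\sigma - \log_{KLM}\sigma)$ into three pieces matching the three truncation levels $K$, $L$, $M$, bound each using a Duhamel-style identity for derivatives of matrix functions, and then estimate the variance of a Hadamard-test-style estimator separately for the statistical claim.

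First I would insert intermediate approximations: $\log_K$ (Taylor truncation of $\log(1+x)$ at order $K$), $\log_{KL}$ (further replacing the generating function for $(\arcsin(x)/(\pi/2))^k$ by its $L$-th order Taylor polynomial), and then $\log_{KLM}$ (dropping Fourier coefficients outside the window around $l/2$). The triangle inequality gives
\begin{equation*}
\infnorm{\partial_{\alpha_k}(\log\sigma - \log_{KLM}\sigma)} \le \sum_{j=1}^{3} \infnorm{\partial_{\alpha_k}(F_j(\sigma) - F_{j+1}(\sigma))},
\end{equation*}
with $F_1 = \log, F_2 = \log_K, F_3 = \log_{KL}, F_4 = \log_{KLM}$. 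Each difference has a scalar spectral error on $[\lambda_{\min},1]$ computable in closed form: for $F_1-F_2$, the Taylor remainder for $\log(1+x)$ yields the $(1-\lambda_{\min})^K/\lambda_{\min}$ contribution; for $F_2-F_3$, the remainder of the $\arcsin$ Taylor series against $1/k$ produces the rational $\lambda_{\min}$-dependent term; and for $F_3-F_4$, a standard Gaussian tail bound on the Fourier coefficients of $\cos^l$-type functions yields the $e^{-2M^2/L}$ factor.

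Second, to lift each scalar spectral bound to a bound on the operator derivative I would apply the Duhamel identity
\begin{equation*}
\partial_{\alpha_k} e^{it\sigma} = it \int_0^1 e^{ist\sigma}\,(\partial_{\alpha_k}\sigma)\, e^{i(1-s)t\sigma}\,ds
\end{equation*}
term by term in each Fourier expansion, together with the analogous product rule for polynomial pieces. Sub-multiplicativity of $\infnorm{\cdot}$ combined with the unitarity of the exponentials pulls out the factor $\infnorm{\partial_{\alpha_k}\sigma}$ and leaves the scalar spectral bounds. Summing the three contributions produces the stated inequality, with the factor $H_K$ arising as the $\ell_1$ norm of $\{1/k\}_{k\le K}$ that weights the $F_1-F_2$ and $F_2-F_3$ terms.

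Third, for the standard-deviation claim, I would observe that $\Tr{\rho\,\partial_{\alpha_k}\log_{KLM}\sigma}$ is a finite linear combination of terms of the form $c_{k,l,m}\cdot \mathbb{E}_S[\Tr{\rho\, e^{is\frac{\pi}{2}m\sigma}(\partial_{\alpha_k}\sigma)e^{i(1-s)\frac{\pi}{2}m\sigma}}]$, with each expectation estimable by an unbiased Hadamard-test estimator with per-shot variance at most $\infnorm{\partial_{\alpha_k}\sigma}^2$; here $\partial_{\alpha_k}\sigma$ is accessed via the commutator purification of Lemma~\ref{lem:rhoDeriv}. Drawing $s\sim\mathrm{Unif}[0,1]$ and $(k,l,m)$ by importance sampling proportional to $|c_{k,l,m}\cdot \pi m/2|$, the standard error of the empirical mean over $n$ shots is bounded by $(\sum_{k,l,m}|c_{k,l,m}|\cdot \pi m/2)\,\infnorm{\partial_{\alpha_k}\sigma}/\sqrt{n}$; controlling the weighted coefficient sum by $\pi M H_K$ (using $|m|\le M$ and $\sum_k 1/k = H_K$, combined with $\sum_{l,m}|b_l^{(k)}/2^l|=O(1)$ coming from the absolute convergence of the $\arcsin$ expansion) yields the stated standard deviation.

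The most delicate step, I expect, will be the second piece of the spectral error decomposition: producing the precise constant $(1+e^2)/e \cdot L(1-\lambda_{\min}^2)^{L+1}/[\lambda_{\min}^3(2-\lambda_{\min}^2)^{3/2}]$ requires working directly with the explicit coefficients $b_l^{(k)}$ of $(\arcsin(x)/(\pi/2))^k$, summing carefully the tail $\sum_{l>L} b_l^{(k)} x^l$ against the weights $1/k$, and then passing through the Duhamel integral. Once this tail estimate is in place the Fourier truncation and polynomial remainder bounds combine routinely via the triangle inequality, and the variance analysis follows a standard importance-sampling calculation.
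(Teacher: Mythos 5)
Your proposal follows essentially the same route as the paper: the same three-way decomposition into the $K$, $L$, and $M$ truncation errors, term-by-term differentiation via Duhamel/product rule with submultiplicativity to extract $\infnorm{\partial_{\alpha_k}\sigma}$, a Gaussian-type tail bound for the truncated Fourier window, and a per-shot variance of $\infnorm{\partial_{\alpha_k}\sigma}^2$ summed against the weights $\pi |m| |c_m|/2 \le \pi M H_K$. One caution: what remains after pulling out $\infnorm{\partial_{\alpha_k}\sigma}$ is not the scalar spectral error of each function difference but the derivative-weighted coefficient tail (e.g.\ $\sum_{k>K}(1-\lambda_{\min})^{k-1}$ for the first piece, and $\sum_{l>L} b_l^{(k)}\, l\,(1-\lambda_{\min})^{l-1}$ for the second, which the paper controls by Cauchy--Schwarz using $\|b^{(k)}\|_1\le 1$), so the closed forms you quote arise from those weighted tails rather than from the remainders of the scalar approximations themselves.
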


 In order to ensure that the total error adds up to $\epsilon$ we set $\infnorm{\partial_{\alpha_k} \left(\log \sigma - \log_{KLM} \sigma\right)} \le \epsilon/2$.  This implies that it suffices to set the errors in~Lemma~\ref{lem:logapprox} to be at most equal to $\epsilon/4$.  Then solving for approiate values of $K,L$ and $M$ to ensure these inequalities we find that it suffices to pick (see Appendix~\ref{sec:errorproof} for more details). 


\begin{gather}
    K \geq \frac{\ln12\frac{\infnorm{\partial_{\alpha_k} \sigma}}{\lambda_m \epsilon}}{\ln\left(\frac{1}{1-\lambda_m}\right)},\label{eq:Klim}\\
    L \geq -\frac{1}{\ln\left(\frac{1}{1-\lambda_m^2}\right)} W_{-1}\left(\frac{\ln\left(\frac{1}{1-\lambda_m^2}\right)}{\frac{1+e^2}{e} \infnorm{\partial_{\alpha_k} \sigma} \|a\|_1^{(k)} \frac{1-\lambda_m^2}{\lambda_m^3 (2-\lambda_m^2)^{1.5}}} \frac{-1}{\frac{12}{\epsilon}}\right),\\
    M \geq \sqrt{\frac{L}{2} \ln \left( \frac{24\pi L \infnorm{\partial_{\alpha_k} \sigma}\|a\|_1^{(k)}}{\epsilon} \right)}.\label{eq:Mlim}
\end{gather}
Justification for these sufficient values is given in appendix~\ref{sec:errorproof}.
Here $W_{-1}(\cdot)$ is the Lambert-W function, and $\lambda_m$ is the smallest eigenvalue of $\sigma$. Note that $-W_{-1}(\frac{-1}{x})$ grows logarithmically in $1/x$. 

\begin{theorem}
    \label{thm:queryBd}
    There exists a quantum algorithm that can estimate the derivative of the QIB objective function within error $\epsilon$ and probability of failure at most $\delta$ using a number of queries that scale as
    $$
        \bigotilde{\frac{1}{\epsilon^{4}} \frac{1}{\lambda^2} \infnorm{\partial_{\alpha_k} \rotil}^{4} \log\left({\frac{1}{\delta}}\right) \log(\|b_k\|_1)  \max(\infnorm{\rho_X}, \infnorm{\rho_Y})}.
    $$
\end{theorem}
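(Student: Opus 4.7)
The plan is to combine the analytic derivative formula from the preceding lemma with the truncation bounds of \Cref{lem:logapprox2} and estimate each of the resulting trace terms with density-matrix Hamiltonian simulation plus a generalized swap test. Since the QIB objective in~\eqref{eq:objective_func} is a fixed-size linear combination of cross entropies, it suffices to bound the cost of estimating a single representative term of the form $\partial_{\alpha_k}\Tr{\rho\log\sigma}$ and absorb the constant overhead. Purified oracles for every relevant $\rho,\sigma,\partial_{\alpha_k}\rho,\partial_{\alpha_k}\sigma$ can be assembled from $O(1)$ calls to $U_\rho$ and the parametrized channel via Lemma~\ref{lem:rhoDeriv}, and the commutator structure there exhibits $\partial_{\alpha_k}\sigma$ as a difference of two sub-normalized density matrices carrying a scalar prefactor of size $\infnorm{\partial_{\alpha_k}\rotil}$.

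First I would split the target error $\epsilon$ in half, assigning one half to the deterministic truncation and the other to the empirical mean. Fixing $K,L,M$ as in~\eqref{eq:Klim}--\eqref{eq:Mlim} certifies the first half via~\Cref{lem:logapprox2}; the resulting values satisfy $L\in\widetilde{O}(1/\lambda^{2})$ (up to Lambert-$W$ factors) and $M\in\widetilde{O}(\sqrt{L\log(L/\epsilon)})\subseteq\widetilde{O}(1/\lambda)$. The derivative formula of the previous lemma then expresses $\partial_{\alpha_k}\Tr{\rho\log_{KLM}\sigma}$ as a term $\Tr{(\partial_{\alpha_k}\rho)\log_{KLM}\sigma}$, which is handled identically, plus an expectation $\mathbb{E}_S[F(S)]$ whose integrand $F(S)$ is an $\ell_1$-bounded linear combination of traces of the form $\Tr{\rho e^{is\xi_m\sigma}(\partial_{\alpha_k}\sigma)e^{i(1-s)\xi_m\sigma}}$ with $|\xi_m|\le M\pi$ and total coefficient mass $\widetilde{O}(\|b_k\|_1)$.

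For each draw of $S\sim U[0,1]$ I would implement both unitaries $e^{is\xi_m\sigma}$ and $e^{i(1-s)\xi_m\sigma}$ via the density-matrix Hamiltonian simulation of Corollary~17 of~\cite{low2019hamiltonian} at cost $\widetilde{O}(M)$ queries per evolution, sandwich them around the two sub-normalized components of $\partial_{\alpha_k}\sigma$ using the generalized swap-test circuit of the preceding theorem, and combine the outputs linearly. Product states that appear inside the cross entropies (namely $\rho_R\otimes\rho_{\widetilde X}$ and $\rho_{\widetilde X}\otimes\rho_Y$) require an additional normalization whose prefactor is bounded by $\max(\infnorm{\rho_X},\infnorm{\rho_Y})$, using that $\infnorm{\rho_R}=\infnorm{\rho_X}$ for the purification. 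By the variance bound of~\Cref{lem:logapprox2}, together with the $\infnorm{\partial_{\alpha_k}\rotil}$ prefactor that is pulled out of the swap test, $n\in\widetilde{O}(M^{2}\infnorm{\partial_{\alpha_k}\rotil}^{4}/\epsilon^{2})$ independent draws of $S$ suffice to force the sampling standard deviation below the remaining $\epsilon/2$ budget, and a median-of-means amplification raises the success probability to $1-\delta$ at multiplicative cost $O(\log(1/\delta))$.

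Multiplying the per-sample query cost $\widetilde{O}(M\log\|b_k\|_1)$ by $n$ and substituting $M\in\widetilde{O}(1/\lambda)$ yields the advertised bound. The main obstacle is the careful bookkeeping across three intertwined sources of error -- the Fourier--Taylor truncation parameters, the Hamiltonian-simulation error, and the Monte Carlo variance -- because each must be tightened in proportion to $\|b_k\|_1$ and to $\infnorm{\partial_{\alpha_k}\rotil}$ in order to prevent the overall error from blowing up when the triply-indexed sum over $(k',l,m)$ is assembled. Once that allocation is fixed (slightly inflating $M$ for the simulation step beyond what~\Cref{lem:logapprox2} demands, so that the Hamiltonian-simulation contribution to the total error is at most $\epsilon/(2\|c\|_1)$), the remaining calculation is a mechanical composition of the subroutines analyzed above.
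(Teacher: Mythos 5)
Your overall strategy is the same as the paper's: approximate the logarithm via the $\log_{KLM}$ expansion, bound the truncation error with \Cref{lem:logapprox2}, rewrite the remaining term as $\mathbb{E}_S$ of traces of the form $\Tr{\rho e^{is\frac{\pi}{2}m\sigma}(\partial_{\alpha_k}\sigma)e^{i(1-s)\frac{\pi}{2}m\sigma}}$, implement the exponentials with the density-matrix simulation of Corollary 17 of~\cite{low2019hamiltonian}, estimate each trace with the generalized swap test, and boost the confidence with a Chernoff/median argument. The reduction $\infnorm{\partial_{\alpha_k}\sigma}\le \max(\infnorm{\rho_X},\infnorm{\rho_Y})\,\infnorm{\partial_{\alpha_k}\rotil}$ via multiplicativity of the operator norm over tensor products is also exactly the paper's move.

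The gap is in the final assembly, where you assert that "substituting $M\in\widetilde{O}(1/\lambda)$ yields the advertised bound" without doing the multiplication; it does not. First, your parameter asymptotics differ from those the bound is built on: the paper's error analysis gives $L\in\bigotilde{\infnorm{\partial_{\alpha_k}\sigma}H_K\,\lambda^{-1}\epsilon^{-1}}$ and hence $M\in\bigotilde{\sqrt{\infnorm{\partial_{\alpha_k}\sigma}/(\lambda\epsilon)}}$, not your $L\in\widetilde{O}(\lambda^{-2})$, $M\in\widetilde{O}(\lambda^{-1})$. Second, you are short one factor of $M$: the per-sample cost is not $\widetilde{O}(M\log\|b_k\|_1)$, because each Monte Carlo draw of $S$ must be combined over all $2M+1$ Fourier modes (or each mode estimated separately, as the paper does in passing from its $M^2$ sampling count to the $M^3$ in~\eqref{eq:bigomultiplier}), and each such trace then costs a further $\bigo{M+\log(\|b_k\|_1/\epsilon)}$ for the three simulated exponentials. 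The correct product is therefore $M^{4}H_K^2\infnorm{\partial_{\alpha_k}\sigma}^2\epsilon^{-2}$ (up to the $\infnorm{\partial_{\alpha_k}\rotil}$ and $\max(\infnorm{\rho_X},\infnorm{\rho_Y})$ factors), and it is precisely $M^{4}\in\bigotilde{\lambda^{-2}\epsilon^{-2}}$ combined with the $\epsilon^{-2}$ from the variance that produces the stated $\epsilon^{-4}\lambda^{-2}$. Your figures instead give $M^{3}\epsilon^{-2}\in\widetilde{O}(\lambda^{-3}\epsilon^{-2})$, which is neither the claimed scaling nor dominated by it. To close the argument you need to (i) carry the $\epsilon$-dependence of $L$ (and hence $M$) through from \eqref{eq:Klim}--\eqref{eq:Mlim}, and (ii) account for the sum over $m$ in the per-sample cost before multiplying by the sample count.
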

\begin{proof} 
Begin by substituting in $\sigma = \rho_X \otimes \rotil$ (or $\sigma = \rho_Y \otimes \rotil$), the infinite norm is multiplicative over the tensor product:

\begin{align}
    \infnorm{\partial_{\alpha_k} \sigma} &= \infnorm{\partial_{\alpha_k} (\rho_X \otimes \rotil)} \\
    &= \infnorm{ \rho_X \otimes \partial_{\alpha_k} \rotil},\\
    &= \infnorm{\rho_X} \infnorm{\partial_{\alpha_k} \rotil}.
\end{align}

Thus $\infnorm{\partial_{\alpha_k} \sigma} \leq \max(\infnorm{\rho_X}, \infnorm{\rho_Y}) \infnorm{\partial_{\alpha_k} \rotil}$. Setting the sampling standard deviation of the derivative $\infnorm{\partial_{\alpha_k}(\log_{K,M} \sigma - \log_{K,M}^s \sigma)}$ to be at most $\epsilon/4$ yields

$$
    n \geq \frac{16\pi^2 M^2 H_K^2 \infnorm{\partial_{\alpha_k} \rotil}^2 \max(\infnorm{\rho_X}, \infnorm{\rho_Y})^2} {\epsilon^2}.
$$

The number of queries needed to achieve $\epsilon/4$ error with failure probability less than $\delta$ is thus from the Chernoff bound in $\bigo{\frac{ M^2 H_K^2 \infnorm{\partial_{\alpha_k} \rotil}^2 \max(\infnorm{\rho_X}, \infnorm{\rho_Y})^2}{\epsilon^2}\ln{\frac{1}{\delta}}}$. Quadratic improvements can be attained using amplitude estimation, but for simplicity we ignore such optimizations here.  The number of queries to product traces in \ref{eq:41derivative} then scales like 

\begin{equation}
    \bigo{\frac{ M^3 H_K^2 \infnorm{\partial_{\alpha_k} \rotil}^2 \max(\infnorm{\rho_X}, \infnorm{\rho_Y})^2}{\epsilon^2}\ln{\frac{1}{\delta}}}.
    \label{eq:bigomultiplier}
\end{equation}

The last hurdle is evaluating terms like $\Tr{\rho e^{i s \frac{\pi}{2}m\sigma} \frac{\partial \sigma}{\partial {\alpha_k}} e^{i (1-s) \frac{\pi}{2}m\sigma}}$. This can be implemented using the techniques of \cite{low2019hamiltonian}, which states that as long as we have access to the purification of a density matrix $\rho$, the time evolution $e^{-i\rho t}$ can be implemented with $\epsilon$ error in trace distance in $\bigo{t + \log\frac{1}{\epsilon}}$ queries. Substituting in the definitions for $\sigma$ and $\rho$ we see that

\begin{align}
    \Tr{\rho e^{i s \frac{\pi}{2}m\sigma} \frac{\partial \sigma}{\partial {\alpha_k}} e^{i (1-s) \frac{\pi}{2}m\sigma}} &:= \Tr{\rtp \left( \ro{R} e^{i\frac{\pi m}{2}\ro{R}} \otimes e^{i s \frac{\pi}{2}m\rotil} (\partial_k \rotil) e^{i (1-s) \frac{\pi}{2}m\rotil} \right)}
    \label{eq:1beforefinalsum}
\end{align}

Substituting the definition of $\partial_k \rotil = Tr_{X_0}\left(-i [\widetilde{H_k}, U \ro{X} U^\dagger] \right)$ and expanding $\widetilde{H_k} = \sum_i^{N_u} b_{ik} V_i$ in a suitable unitary basis, we arrive at

\begin{align}
&\Tr{\rtp \left( \ro{R} e^{i\frac{\pi m}{2}\ro{R}} \otimes e^{i s \frac{\pi}{2}m\rotil} (\partial_k \rotil) e^{i (1-s) \frac{\pi}{2}m\rotil} \right)}\nn
    \label{eq:finalsum}
    &\qquad= -i \sum_{j=1}^{N_u} b_{jk} \Tr{\rtp \left( \ro{R} e^{i\frac{\pi m}{2}\ro{R}} \otimes e^{i s \frac{\pi}{2}m\rotil} [V_i, U \ro{X} U^\dagger] e^{i (1-s) \frac{\pi}{2}m\rotil} \right)}
\end{align}

Setting the error for a single trace in the sum in \refeq{eq:finalsum} equal to $\epsilon_H / \|b_k\|_1$, we get a total error of $\epsilon_H$ for the expression in \ref{eq:1beforefinalsum}. For the exponentials in \refeq{eq:finalsum} we need 3 Hamiltonian simulation operations which can be performed using the density matrix exponentiation routine in~\cite{low2019hamiltonian} with a query complexity $\bigo{M+\log\frac{\|b_k\|_1}{\epsilon}}$.  This holds because we assume that we have access to a unitary state preparation routine that constructs a purification of the required density operators in accordance with the assumptions of the method of~\cite{low2019hamiltonian}.  Considering there are $N_u$ summands in \eqref{eq:finalsum}, $ \bigo{N_u(M+\log\frac{\|b_k\|_1}{\epsilon})}$ queries are needed to evaluate \eqref{eq:1beforefinalsum}. Multiplying through with \eqref{eq:bigomultiplier}, we obtain a final query complexity of 

\begin{gather}
    \bigo{\frac{ M^3 H_K^2 \infnorm{\partial_{\alpha_k} \rotil}^2 \max(\infnorm{\rho_X}, \infnorm{\rho_Y})^2}{\epsilon^2} (M+\log\frac{\|b_k\|_1}{\epsilon})\ln{\frac{1}{\delta}}}\\
    = \bigotilde{\frac{1}{\epsilon^{4}} \frac{1}{\lambda^2} \infnorm{\partial_{\alpha_k} \rotil}^4 \max(\infnorm{\rho_X}, \infnorm{\rho_Y})^4 \log{\frac{1}{\delta}} \log\|b_k\|_1}.
\end{gather}
\end{proof}




\subsection{Approximations to the QIB Objective Function}
The previous results show a method that can be used to estimate the quantum information bottleneck using numerical differentiation of the objective function.  For the purposes of training, however, we can alternatively aim to minimize other objective functions that are known to upper bound the QIB function.  Specifically, by proving an upper bound on the QIB quantity we can then minimize that upper bound rather than the QIB objective function. The lower bounds can be used to estimate the quality of the approximation.

Below we present alternative methods that can be used to estimate this quantity that are based on R\'enyi entropies as well as the measured R\'enyi entropy.  The bounds for evaluation of these gradient approximations are worse than those for  the numerical approach considered above and as such we summarize the results below but leave the proofs in the appendix for the interested reader.  The most important point behind these results is that while these approaches can be used to bound the QIB function by a simpler function in certain cases, the increased analytical tractability of these approaches does not necessarily make the corresponding models easier to train than the numerical differentiation method introduced above.  These results show, somewhat surprisingly, that the methods that are best for analytically estimating the entropy need not yield analytic gradients that are easy to evaluate on a quantum computer.

\subsubsection{R\'enyi 2-entropy and divergence}
In general the QIB objective function is difficult to evaluate because of its dependence on matrix logarithms of the density matrix.  A standard approach for simplifying such problems in quantum information theory is to replace the von Neumann entropies considered with R\'enyi entropies, which are easier to compute and analyze ans they eschew the need for the density matrix.  Further, standard relations exist that allow us to upper bound quantities such as the relative entropy with the R\'enyi divergence.  Specifically, for the case of the R\'enyi 2-divergence, $D_2(\rho || \sigma)$,
\begin{equation}
    S(\rho|| \sigma)\le D_2(\rho || \sigma) \equiv -\ln({\rm Tr}(\rho^2 \sigma^{-1})) 
\end{equation}
As the matrix logarithm is replaced here by an ordinary logarithm, we no longer need to resort to series expansions to find an expression for the derivative.  Further, in the event that the distributions $\rho_X,\rho_{\tilde{X}}$ are thermal distributions of the form $e^{-H}/{\rm Tr}(e^{-H})$ for some Hermitian operator $H$ then there exists an analytic expression for the inverses as well in the objective function.  This can be especially valuable when trying to learn a thermal state using a quantum Boltzmann machine~\cite{kieferova2017tomography}.  Other choices of $\alpha$ and different flavors of R\'enyi entropy also have fractional powers of density matrices, and are not considered because they are more challenging to evaluate.

We show in the appendices standard inequalities on R\'enyi entropies and applications of Jensen's inequality lead to the following upper and lower bounds on the QIB objective function:

\begin{lemma}\label{lem:QIBapprox}
For all quantum channels $\Phi$, $\beta\ge 0$ and input states $\rho_{XYR}$ we have that the QIB objective function can be bounded above and below by
\begin{align*}
1)&\qquad \mathcal{L}(\Phi) \le - \ln \left({\rm Tr}(\rho_{X\widetilde{X}}^2 (\rho_{X}^{-1} \otimes \rho_{\widetilde{X}}^{-1}))\right) - \beta\frac{ {\rm Tr}((\rho_{\widetilde{X}Y}-\rho_{\widetilde{X}}\otimes \rho_{Y})^2)}{2}
\nonumber\\
2)&\qquad \mathcal{L}(\Phi) \ge \beta \ln({\rm Tr}(\rho_{\widetilde{X}Y}^2  (\rho_{\widetilde{X}}^{-1} \otimes \rho_{Y}^{-1}))) + \frac{ {\rm Tr}((\rho_{X\widetilde{X}}-\rho_{X}\otimes \rho_{\widetilde{X}})^2)}{2},  \nonumber
\end{align*}
\end{lemma}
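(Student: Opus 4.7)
The plan is to split the QIB objective $\mathcal{L}[\Phi]$ into its two constituent relative-entropy terms and bound each separately, using the R\'enyi-$2$ divergence on one side and a Pinsker-type inequality on the other. First I would invoke the identification of $\rho_{R\tilde{X}}$ with $\rho_{X\tilde{X}}$ established earlier via the conditional/causal states framework, so that $I(R;\tilde{X}) = S(\rho_{X\tilde{X}} \| \rho_X\otimes\rho_{\tilde{X}})$ and the objective becomes $\mathcal{L}[\Phi] = \beta\, S(\rho_{X\tilde{X}}\|\rho_X\otimes\rho_{\tilde{X}}) - (1-\beta)\, S(\rho_{\tilde{X}Y}\|\rho_{\tilde{X}}\otimes\rho_Y)$. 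Each of the four inequalities needed will then be of the form ``bound a single relative entropy above or below'' and can be applied termwise.

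The two key ingredients are: (i) monotonicity of the quantum R\'enyi divergences in the order parameter, which together with operator Jensen (applied via concavity of $\ln$) yields the R\'enyi-$2$ upper bound $S(\rho\|\sigma) \le D_2(\rho\|\sigma) = \ln{\rm Tr}(\rho^2 \sigma^{-1})$, exactly the inequality already stated just before the lemma; and (ii) a Hilbert--Schmidt lower bound obtained by chaining the quantum Pinsker inequality $S(\rho\|\sigma) \ge \tfrac{1}{2}\|\rho-\sigma\|_1^2$ with the Schatten-norm comparison $\|A\|_1^2 \ge \|A\|_2^2 = {\rm Tr}(A^2)$ for Hermitian $A$ (a Cauchy--Schwarz estimate on singular values), giving $S(\rho\|\sigma) \ge \tfrac{1}{2}{\rm Tr}((\rho-\sigma)^2)$.

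For the upper bound of part 1, I would apply (i) to the first term of $\mathcal{L}$ (positive coefficient $\beta$, so an upper bound on $S$ pushes the whole expression up) and (ii) to the second term (whose coefficient $-(1-\beta)$ is non-positive, so a lower bound on that $S$ translates into an upper bound on the full expression). Summing these yields the claimed upper bound. For the lower bound in part 2, I would swap the roles: apply the Pinsker/Hilbert--Schmidt lower bound to the first term and the $D_2$ upper bound to the second term, so that in each case the sign of the resulting inequality after multiplication by the $\beta$-weighted coefficient flips in the correct direction.

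The main difficulty is bookkeeping rather than any substantive new technique: one must track the signs and $\beta$-coefficients carefully so that each of the four componentwise inequalities flips in the correct direction, and one must respect the invertibility assumption hidden in the $D_2$ expression ${\rm Tr}(\rho^2\sigma^{-1})$ by interpreting everything on the support of $\sigma$, consistent with the disjoint-kernel hypotheses used elsewhere in the paper. Beyond these accounting issues the proof is a direct composition of standard operator inequalities; the interesting content is the choice of R\'enyi-$2$ as the analytically tractable surrogate above and quantum Pinsker (collapsed to the Hilbert--Schmidt norm) as its complement below.
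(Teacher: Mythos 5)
Your proposal follows essentially the same route as the paper's proof: rewrite $\mathcal{L}$ as a difference of two relative entropies, upper-bound one term by the R\'enyi-$2$ divergence using monotonicity in $\alpha$, lower-bound the other by quantum Pinsker chained with $\|A\|_1 \ge \|A\|_2$, and swap the roles of the two terms for the reverse inequality. The only discrepancy is bookkeeping: you use the main-text weighting $\beta, (1-\beta)$, whereas the lemma's stated bounds correspond to the weighting $1, \beta$ that the appendix proof actually uses, so your coefficients on the trace terms would come out as $(1-\beta)/2$ rather than $\beta/2$ --- an inconsistency inherited from the paper's two definitions of $\mathcal{L}$ rather than a flaw in your argument.
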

Proof of this lemma is provided in Appendix \ref{sec:regularrenyi}.  These results provide a pair of useful bounds that allow us to strip away the matrix logarithms from the result, which makes the formulas much easier to think about and further allows us to express analytical forms for the gradients.  This raises the hope that these expressions may be useful also for optimizing the QIB objective on a quantum computer.  We show in Appendix \ref{sec:regularrenyi} that the query complexity for evaluating the gradient of the upper and lower bounds within error $\epsilon$ and failure probability $1-\delta$ is 
\[\bigotilde{d^2 \frac{1}{\lambda^6} \frac{1}{\epsilon} \left(d^3  \infnorm{\partial \rho} + \frac{1}{\lambda} \infnorm{\partial \sigma} \right)\log\frac{1}{\delta}}. \]  Here $d$ is the Hilbert-space dimension and $\sigma := \ro{X} \otimes \ro{\widetilde{X}}$

This bound shows that for low-dimensional spaces the gradient of the R\'enyi $2$-divergence can be computed efficiently; however, the best unconditional bounds that we can prove scale poorly with dimension.  Given stronger assumptions it may be possible to provide efficient training algorithms for this objective function in cases where $d$ is exponentially large, but the proof techniques that we employ are not capable of showing polynomial bounds in the general case.  This means that while the $2$-divergence is perhaps conceptually simpler, the training models using such approximations is not necessarily cheaper than differentiating a series expansion for the QIB objective function.

\subsubsection{Measured R\'enyi Entropy}

Another approach is to approximate the quantum relative entropy using another measure of relative entropy that is easier to calculate. However, the multiple definitions of quantum mutual information

\begin{align}
    I(A;B) &= S(\rho_A) + S(\rho_B) - S(\rho_{AB}) \label{rdef1}\\
    &= D(\rho_{AB}||\rho_A \otimes \rho_B)\label{rdef2} \\
    &= \operatorname{min}_{\sigma_B} D(\rho_{AB}||\rho_A \otimes \sigma_B)\label{rdef3}
\end{align}
are not equivalent anymore once we replace the entropy and divergence terms with their R\'enyi counterparts. Doing this swap in \eqref{rdef1} is desirable because all quantities to be calculated are polynomials of density matrices, but the resulting quantity fails to replicate desirable properties of $I(A;B)$ such as positivity, and using the definition in \eqref{rdef3} is undesirable due to the cost of the optimization problem added to each mutual information term. 

Reference \cite{measuredrenyi_cirac} proposes addressing these problems by using the  measured R\'enyi entropy, defined as 
\begin{equation}
    D_\alpha^\mathbb{M}(\rho || \sigma) := \operatorname{sup}_{\chi, M} D_\alpha(P_{\rho,M}||P_{\sigma,M})
\end{equation}
here $\chi$ is a finite set and $M$ is a POVM defined on $\chi$. Which can be interpreted as the classical R\'enyi divergence of the measurement probability distributions of the given density matrices, optimized over all possible measurements. \cite{measuredrenyi_tomamichel} proves that the optimum value can be attained via rank 1 projective measurements and provides an alternative variational form: 

\begin{equation}
D_\alpha^\mathbb{M}(\rho || \sigma) = \frac{1}{\alpha - 1} \operatorname{log} \operatorname{sup}_{\omega > 0} \left[\alpha \Tr{\rho \omega^{\alpha-1}} + (1-\alpha) \Tr{\alpha \omega^\alpha} \right]
\end{equation}
which has a closed from solution for $\alpha = 2$:
\begin{equation}
    D_2^\mathbb{M}(\rho || \sigma) = \operatorname{log} \left[\Tr{\rho \Phi^{-1}(\rho)} - \frac{1}{4} \Tr{\rho (\Phi^{-1}(\rho))^2} \right] \label{measfinal}
\end{equation}
where $\Phi(\omega) := \omega \sigma + \sigma \omega$. This quantity is strictly positive, and obeys the data processing inequality \cite{measuredrenyi_tomamichel}. The function $\Phi^{-1}(\rho)$ can be expressed in closed form considering the Lyapunov equation:

\begin{gather}
    \rho = \omega \sigma + \sigma \omega \\
    \omega \sigma + \sigma \omega^\dagger - \rho = 0\\
    (-\sigma) \omega + \omega (-\sigma)^H + \rho = 0 \\
    \omega = \Phi^{-1}(\rho) = \int_0^\infty e^{-s\sigma} \rho e^{-s \sigma} ds
\end{gather}

Unfortunately, na\"ive algorithms for approximating these integrals prove to be prohibitively costly.  Specifically, we find that a method exists for estimating them that runs in time $\bigo{\left(\frac{1}{\lambda} \right)^\frac{1}{\lambda}}$ where $\lambda$ is the minimum eigenvalue of $\sigma$. This shows that while the measured R\'enyi entropy is perhaps a conceptually simpler objective function to estimate, it is also not conceptually easier to evaluate and na\"ive bounds diverge quickly with $1/\lambda$ as $\lambda\rightarrow 0$.  This further reinforces our claim that analytic methods for estimating the QIB objective function do not necessarily lead to cheaper gradient evaluation than than the QIB function.  Refer to \ref{sec:measuredrenyi} for a detailed analysis of the approximation process.

\section{Conclusion}
We have provided in this paper methods for computing as well as optimizing the quantum information bottleneck function for a quantum channel.  We find that the query complexity, quantified by the number of accesses to a purification of the density operator provided to the quantum channel, scales polynomially provided that the non-zero eigenvalues of the density operator are not small.  We further find that analytic expressions for the gradients, such as bounds based on the R\'enyi divergence and also the measured R\'enyi divergence, can also be evaluated.  However, the algorithm discovered for finding the gradients of the measured R\'enyi does not scale inverse polynomially with the measured R\'enyi entropy and thus the question of whether efficient training algorithms exist for this objective function remains open.

This work shows that we can directly train a parameterized quantum channel (i.e. a quantum neural network) to optimize the amount of relevant information that passes through a bottleneck in a quantum process.  Further, this work also shows that the calculation of the training objective function can also be performed in polynomial time under reasonable assumptions about the spectrum of the operators.  This approach therefore provides a new and potentially powerful method for training quantum neural networks that not only differs from existing approaches, but also is strongly motivation from an information theoretic perspective.

There are many open questions that remain about the QIB.  The most obvious question is whether the optimization landscape for the QIB contains the same barren plateaus that other training objectives can have.  This is relevant because in the event that the gradients are small the cost of gradient evaluation will grow.  In addition, there is a question of what the practical benefits of training a quantum neural network according to QIB would be relative to existing loss functions.  This is particularly relevant as the parameter $\beta$ remains free  and thus can be chosen to optimize the empirical performance of the bound.  Finally, the we do not know whether the algorithms provided here for computing the gradient or value of the QIB are optimal, which means that there could be further polynomial advantages yet to be discovered by optimizing the algorithms provided here while at the same  time it also shows that the limitations that quantum computers face in computing the quantity are unknown. 
Probing these questions may not only provide us with more practical means to understand the capacity of quantum neural networks, but also provide a deeper understanding of the nature of information flow within implementations of quantum channels.

\acknowledgements{This work was funded by the US Department of Energy, Office of Science, National Quantum Information Science Research Centers, Co-Design
Center for Quantum Advantage under contract number DE-SC0012704 and  additional research related to this work was supported by a Google Research Award.}
\appendix


\section{Approximations to the QIB function and its derivatives}
The purpose of this appendix is to provide detailed proofs of the validity of our bounds on the QIB function as well as estimates of the derivatives of these bounds.  We examine two classes of approximations here: first we consider upper and lower bounds provided by R\'enyi entropies and second we consider the measured R\'enyi divergence, which is an approximate approach that aims to deal with the ambiguities between the various statements of the mutual information when the divergences involved are replaced with R\'enyi divergences.
\subsection{R\'enyi Bounds}
\label{sec:regularrenyi}

The upper bound is primarily useful for our variational optimization whereas the lower bound provides an estimate of the maximum error that we can have in our approximate objective function.

\begin{proof}[Proof of Lemma~\ref{lem:QIBapprox}]
 The quantum information bottleneck quantity can be written in terms of R\'enyi divergences as 
\begin{equation}
    \mathcal{L}(\Phi) =  D_{1}(\rho_{X\widetilde{X}} \| \rho_{X}\otimes \rho_{\widetilde{X}}) - \beta D_1(\rho_{\widetilde{X}Y}\| \rho_{\widetilde X} \otimes \rho_Y).
\end{equation}
Next using the fact that the Renyi divergences are monotonically non-decreasing with respect to $\alpha$ and using Theorem 1.15 of~\cite{ohya2004quantum}, which states that $D_1(\rho\| \sigma) \ge \frac{1}{2} \|\rho-\sigma\|_1^2$, we have that
\begin{equation}
    D_2(\rho\|\sigma) \ge D_1(\rho\|\sigma) \ge \frac{1}{2} \|\rho-\sigma\|_1^2.
\end{equation}
This observation further implies that
\begin{align}
    \mathcal{L} &\le   D_{2}(\rho_{X\widetilde{X}} \| \rho_{X}\otimes \rho_{\widetilde{X}}) -  \frac{\beta}{2} \|\rho_{\widetilde{X}Y}-\rho_{\widetilde{X}}\otimes \rho_{Y}\|_1^2 \nonumber\\
    &\le D_{2}(\rho_{X\widetilde{X}} \| \rho_{X}\otimes \rho_{\widetilde{X}}) -  \frac{\beta}{2} \|\rho_{\widetilde{X}Y}-\rho_{\widetilde{X}}\otimes \rho_{Y}\|_2^2.\nonumber\\
    &\le D_{2}(\rho_{X\widetilde{X}} \| \rho_{X}\otimes \rho_{\widetilde{X}}) -  \frac{\beta}{2} {\rm Tr}((\rho_{\widetilde{X}Y}-\rho_{\widetilde{X}}\otimes \rho_{Y})^2).
\end{align}
Next we use the fact that the R\'enyi 2-divergence is
\begin{gather}
    D_2(\rho || \sigma) \equiv -\ln({\rm Tr}(\rho^2 \sigma^{-1}))
\end{gather}
to see that 
\begin{equation}
    \mathcal{L}(\Phi) \le -\ln({\rm Tr}(\rho_{X\tilde{X}}^2 \rho_X^{-1} \otimes \rho_{\tilde{X}}^{-1})) -  \frac{\beta}{2} {\rm Tr}((\rho_{\widetilde{X}Y}-\rho_{\widetilde{X}}\otimes \rho_{Y})^2):= \mathcal{L}_u(\Phi)
\end{equation}

The lower bound follows by repeating these steps, with the role of the first and second divergences in $\mathcal{L}(\Phi)$ switched.  Specifically,
\begin{align}
    \mathcal{L} &\ge   \frac{1}{2} \left\|\rho_{X\widetilde{X}} - \rho_{X}\otimes \rho_{\widetilde{X}}\right\|_1^2 - \beta D_1(\rho_{\widetilde{X}Y}\| \rho_{\widetilde X} \otimes \rho_Y) \nonumber\\
    &\ge \beta \ln({\rm Tr}(\rho_{\widetilde{X}Y}^2  (\rho_{\widetilde{X}}^{-1} \otimes \rho_{Y}^{-1}))) + \frac{ {\rm Tr}((\rho_{X\widetilde{X}}-\rho_{X}\otimes \rho_{\widetilde{X}})^2)}{2} := \mathcal{L}_l(\Phi)
    \end{align} 
\end{proof}

The R\'enyi-2 divergence is used here in preference to the full relative entropy largely for simplicity.  The fact that the trace is taken inside the logarithm makes differentiation of the quantity much easier and second it is known to be an upper bound on the relative entropy.
It also obeys the data processing inequality, which is vital for our method. Thus the R\'enyi-2 divergence is a well motivated and also a (comparably) easy to compute training loss function.  This approach has also been used extensively in other work involving training unitary quantum neural networks as well as Boltzmann machines~\cite{kieferova2017tomography,kieferova2021quantum}.

\begin{lemma}
Let $\sigma := \ro{X} \otimes \ro{\widetilde{X}}$.  The derivative of the upper bound $\mathcal{L}_u(\Phi)$ with respect to the parameter $\alpha_k$ is
\begin{align*}
\partial_{\alpha_k}{\mathcal{L}_u(\Phi)} &= -\frac{\Tr( \{\taurx,(\partial_{\alpha_k} \taurx)\} \sigi - \taurx^2 \sigi \sigma' \sigi)}{\Tr(\taurx^2 \sigma^{-1})} + \Tr{\frac{\partial \rho_{\widetilde{X}Y}}{\partial {\alpha_k}} \rho_{\widetilde{X}Y} + \rho_{\widetilde{X}Y} \frac{\partial \rho_{\tilde{X}Y}}{\partial {\alpha_k}}} \nonumber\\
    &+\Tr{\frac{\partial (\rho_{\tilde{X}}\otimes \rho_{Y})}{\partial {\alpha_k}} (\rho_{\tilde{X}}\otimes \rho_{Y}) + (\rho_{\tilde{X}}\otimes \rho_{Y}) \frac{\partial (\rho_{\tilde{X}}\otimes \rho_{Y})}{\partial {\alpha_k}}}\nonumber \\
    &- 2\Tr{\frac{\partial \rho_{\tilde{X}Y}}{\partial {\alpha_k}} (\rho_{\tilde{X}}\otimes \rho_{Y}) + \rho_{\tilde{X}Y} \frac{\partial (\rho_{\tilde{X}}\otimes \rho_{Y})}{\partial {\alpha_k}}}.
\end{align*}
\end{lemma}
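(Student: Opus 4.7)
The plan is to differentiate $\mathcal{L}_u(\Phi)$ term by term, splitting it into the logarithm piece $-\ln \Tr(\taurx^2 \sigma^{-1})$ and the trace-square piece $-\tfrac{\beta}{2}\Tr((\rho_{\widetilde{X}Y} - \rho_{\widetilde{X}}\otimes \rho_Y)^2)$ with $\sigma = \rho_X \otimes \rho_{\widetilde{X}}$, treating each independently and summing at the end. No series approximations or spectral decompositions are needed here: the whole identity is a bare application of the product rule in a non-commutative setting, together with one use of the derivative-of-inverse formula.

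For the logarithm term, the chain rule reduces the problem to computing $\partial_{\alpha_k}\Tr(\taurx^2 \sigma^{-1})$ and then dividing by $\Tr(\taurx^2 \sigma^{-1})$ with a minus sign. The product rule breaks this numerator into two pieces: $\Tr((\partial_{\alpha_k} \taurx^2)\sigma^{-1}) + \Tr(\taurx^2\,\partial_{\alpha_k}\sigma^{-1})$. For the first piece I would use $\partial_{\alpha_k}\taurx^2 = \{\taurx,\partial_{\alpha_k}\taurx\}$, since $\taurx$ need not commute with its own derivative; for the second I would invoke the standard identity $\partial_{\alpha_k}\sigma^{-1} = -\sigma^{-1}(\partial_{\alpha_k}\sigma)\sigma^{-1}$, which is valid on the support of $\sigma$. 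This produces $\Tr(\{\taurx,\partial_{\alpha_k}\taurx\}\sigi) - \Tr(\taurx^2 \sigi \sigma' \sigi)$ divided by $\Tr(\taurx^2\sigma^{-1})$, matching the first displayed line of the claim.

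For the trace-square piece, I would first expand
\[
\Tr((\rho_{\widetilde{X}Y} - \rho_{\widetilde{X}}\otimes \rho_Y)^2) = \Tr(\rho_{\widetilde{X}Y}^2) - 2\Tr(\rho_{\widetilde{X}Y}(\rho_{\widetilde{X}}\otimes \rho_Y)) + \Tr((\rho_{\widetilde{X}}\otimes \rho_Y)^2),
\]
where the factor of $2$ on the cross term comes from cyclicity of the trace. Differentiating each summand by the product rule, keeping factors in their original order as the operators do not commute, yields the three remaining lines of the claim. The derivatives of the constituent density operators themselves ($\partial_{\alpha_k}\rho_{\widetilde{X}Y}$, $\partial_{\alpha_k}(\rho_{\widetilde{X}}\otimes \rho_Y)$, and $\partial_{\alpha_k}\taurx$) can be left symbolic here, but they can be expanded explicitly via Lemma~\ref{lem:rhoDeriv} in terms of commutators against $\widetilde{H_k}$ if one wants a form amenable to the swap-test based estimators of the previous section.

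The main obstacle is simply bookkeeping: ensuring that each application of the product rule preserves the original left-to-right order of the operators, and that cyclicity is invoked only after the derivative has been distributed. The only analytically subtle step is $\partial_{\alpha_k}\sigma^{-1} = -\sigma^{-1}\sigma'\sigma^{-1}$, which requires $\sigma$ to be invertible on the relevant subspace; this is consistent with the second-smallest-eigenvalue assumption already invoked for estimating $\mathcal{L}(\Phi)$ earlier and so introduces no new hypothesis. No convergence arguments, no Fourier approximations, and no appeal to the access model are required — the statement is a pure algebraic identity in the parameter derivative.
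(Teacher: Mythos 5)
Your proposal is correct and follows essentially the same route as the paper's own proof: chain rule on $-\ln\Tr(\taurx^2\sigma^{-1})$, the anticommutator identity $\partial_{\alpha_k}\taurx^2=\{\taurx,\partial_{\alpha_k}\taurx\}$ together with $\partial_{\alpha_k}\sigma^{-1}=-\sigma^{-1}(\partial_{\alpha_k}\sigma)\sigma^{-1}$ for the first term, and a direct product-rule expansion of $\Tr((\rho_{\widetilde{X}Y}-\rho_{\widetilde{X}}\otimes\rho_Y)^2)$ for the rest. Note only that both your derivation and the stated lemma silently drop the $-\beta/2$ prefactor that multiplies the trace-square term in $\mathcal{L}_u(\Phi)$, a bookkeeping discrepancy inherited from the paper rather than introduced by you.
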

\begin{proof}
We will need to work through a few calculations before proving this theorem, starting with calculating the gradient of the first term. Let us substitute $\tau := \rtp$ and $\sigma := \ro{X} \otimes \ro{\widetilde{X}}$ for brevity. 

\begin{align}
    \pai{D_2} &= -\frac{\pai{(\Tr(\taurx^2 \sigma^{-1}))}}{{\rm Tr}(\taurx^2 \sigma^{-1})}\nn
    &= -\frac{\Tr(\pai{(\taurx^2 \sigma^{-1})})}{\Tr(\taurx^2 \sigma^{-1})} \nn
    &= -\frac{\Tr((\pai{\taurx^2}) \sigma^{-1} + \taurx^2 \pai{\sigma^{-1}} )} {\Tr(\taurx^2 \sigma^{-1})} \nn
    &= -\frac{\Tr( (\partial_k \taurx) \taurx \sigi + \tau (\partial_k \taurx) \sigi - \taurx^2 \sigi \sigma' \sigi)} {\Tr(\taurx^2 \sigma^{-1})} \label{eq:ider} \nn
    &= -\frac{\Tr( \{\taurx,(\partial_k \taurx)\} \sigi - \taurx^2 \sigi \sigma' \sigi)} {\Tr(\taurx^2 \sigma^{-1})} 
\end{align}

 For the second term:

\begin{align}
    \frac{\partial}{\partial {\alpha_k}} \Tr{(\rho_{\widetilde{X}Y}-\rho_{\widetilde{X}}\otimes \rho_{Y})^2}&= \Tr{\frac{\partial \rho_{\widetilde{X}Y}}{\partial {\alpha_k}} \rho_{\widetilde{X}Y} + \rho_{\widetilde{X}Y} \frac{\partial \rho_{\widetilde{X}Y}}{\partial {\alpha_k}}} \nonumber\\
    &+\Tr{\frac{\partial (\rho_{\widetilde{X}}\otimes \rho_{Y})}{\partial {\alpha_k}} (\rho_{\widetilde{X}}\otimes \rho_{Y}) + (\rho_{\widetilde{X}}\otimes \rho_{Y}) \frac{\partial (\rho_{\widetilde{X}}\otimes \rho_{Y})}{\partial {\alpha_k}}}\nonumber \\
    &- 2\Tr{\frac{\partial \rho_{\widetilde{X}Y}}{\partial {\alpha_k}} (\rho_{\widetilde{X}}\otimes \rho_{Y}) + \rho_{\widetilde{X}Y} \frac{\partial (\rho_{\widetilde{X}}\otimes \rho_{Y})}{\partial {\alpha_k}}}
    \label{eq:method2secondterm}
\end{align}

The total derivative is

\begin{align}
    \partial_{\alpha_k}{\mathcal{L}_u(\Phi)} &= -\frac{\Tr( \{\taurx,(\partial_k \taurx)\} \sigi - \taurx^2 \sigi \sigma' \sigi)}{\Tr(\taurx^2 \sigma^{-1})} + \Tr{\frac{\partial \rho_{\widetilde{X}Y}}{\partial {\alpha_k}} \rho_{\widetilde{X}Y} + \rho_{\widetilde{X}Y} \frac{\partial \rho_{\tilde{X}Y}}{\partial {\alpha_k}}} \nonumber\\
    &+\Tr{\frac{\partial (\rho_{\tilde{X}}\otimes \rho_{Y})}{\partial {\alpha_k}} (\rho_{\tilde{X}}\otimes \rho_{Y}) + (\rho_{\tilde{X}}\otimes \rho_{Y}) \frac{\partial (\rho_{\tilde{X}}\otimes \rho_{Y})}{\partial {\alpha_k}}}\nonumber \\
    &- 2\Tr{\frac{\partial \rho_{\tilde{X}Y}}{\partial {\alpha_k}} (\rho_{\tilde{X}}\otimes \rho_{Y}) + \rho_{\tilde{X}Y} \frac{\partial (\rho_{\tilde{X}}\otimes \rho_{Y})}{\partial {\alpha_k}}}
    \label{eq:method2derivative}
\end{align}
\end{proof}

Next in order to evaluate the derivatives we need to introduce approximations to the matrix inverse or pseudoinverse.  
In order to calculate each of the trace terms in \ref{eq:ider}, we first need to calculate $\sigps$. We use the method from~\cite{Childs_2017}. This method aims to express the inverse of a matrix as a sum of Chebyshev polynomials using an approximation of the function $f(x) = 1/x$, given that eigenvalues are in the domain $D_\kappa = [-1, \frac{1}{\kappa}) \cup (\frac{1}{\kappa},1]$.  Since this method maps 0 eigenvalues to 0; the algorithm still produces a result that's valid in the support of $\sigma$.

The following provides an approximation to the inverse.

\begin{lemma}[Lemma 14 of \cite{Childs_2017}]
\label{lem:childs}
Let $x \in D_\kappa$ then there exists a polynomial $f(x)$
\begin{gather}
    f(x) = 4 \sum_{j=0}^{j_0} (-1)^j \left[\frac{\sum_{i=j+1}^{b} {2b \choose b+i}}{2^{2b}} \right] \mathcal{T}_{2j+1}(x)
\end{gather}
that is $2\epsilon$-close to $1/x$ in $D_\kappa$, meaning that $|\frac{1}{x}-f(x)| \leq 2\epsilon$ for all $x$ in $[1/\kappa,1]$ where  $b=\kappa^2 \log(\frac{\kappa}{\epsilon})$ and $j_0 = \sqrt{b\, \log(\frac{4b}{\epsilon})}$. $\mathcal{T}_k$ are Chebyshev polynomials of first kind. They are defined by the recurrence relations $\mathcal{T}_0 = 1$, $\mathcal{T}_1 = x$ and $\mathcal{T}_n(x) = 2x\mathcal{T}_{n-1}(x) - \mathcal{T}_{n-2}(x)$.
\end{lemma}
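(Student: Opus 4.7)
The plan is to approximate $1/x$ on the punctured interval $D_\kappa$ by first smoothing it through a polynomial substitute, then expressing that substitute in the Chebyshev basis and truncating. Concretely, I would start from the identity $\frac{1}{x} - \frac{1-(1-x^2)^b}{x} = \frac{(1-x^2)^b}{x}$, so that the odd polynomial $p_b(x) := \frac{1-(1-x^2)^b}{x}$ of degree $2b-1$ approximates $1/x$ pointwise with error at most $\kappa(1-1/\kappa^2)^b \le \kappa\, e^{-b/\kappa^2}$ on $|x|\ge 1/\kappa$. Choosing $b = \kappa^2\log(\kappa/\epsilon)$ drives this first error below $\epsilon$, consuming half of the $2\epsilon$ budget and reducing the problem to approximating $p_b$ itself by a polynomial of lower degree.

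Next I would re-express $p_b$ in the Chebyshev basis. Since $p_b$ is odd of degree $2b-1$, only $T_{2j+1}$ with $0\le j\le b-1$ appear, and one aims to show $p_b(x) = 4\sum_{j=0}^{b-1}(-1)^j c_j\, T_{2j+1}(x)$ with $c_j = 2^{-2b}\sum_{i=j+1}^{b}\binom{2b}{b+i}$. The derivation proceeds by expanding $(1-x^2)^b$ via the binomial theorem, converting each monomial $x^{2k-1}$ to the Chebyshev basis through the standard identity $x^{2k-1} = 2^{2-2k}\sum_{j=0}^{k-1}\binom{2k-1}{k-1-j}T_{2j+1}(x)$, interchanging the two sums, and simplifying the inner sum in $k$ using a Vandermonde-type identity to arrive at the claimed $\binom{2b}{b+i}$ form. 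The polynomial $f$ stated in the lemma is precisely this Chebyshev expansion truncated at $j = j_0$.

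The final ingredient is bounding the truncation tail. Using $|T_{2j+1}(x)|\le 1$ on $[-1,1]$ one has $|p_b(x)-f(x)|\le 4\sum_{j=j_0+1}^{b-1}|c_j|$. The key observation is that $2 c_j$ equals the probability that a symmetric Binomial$(2b,\tfrac12)$ random variable exceeds $b+j$, so Hoeffding's inequality yields $c_j\le \tfrac{1}{2}e^{-j^2/b}$ uniformly in $j$. Choosing $j_0 = \sqrt{b\log(4b/\epsilon)}$ then forces $\sum_{j>j_0}|c_j|\le \epsilon/2$, producing a truncation error bounded by $\epsilon$. Combining this with the smoothing step gives the desired bound $|1/x - f(x)|\le 2\epsilon$ on $D_\kappa$.

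I expect the main obstacle to be the algebraic bookkeeping of step two: rearranging the double binomial sum obtained from expanding $(1-x^2)^b$ into the clean Chebyshev-coefficient expression $c_j = 2^{-2b}\sum_{i>j}\binom{2b}{b+i}$ is tedious and prone to sign errors, even though it is ultimately a routine combinatorial manipulation. By contrast, the smoothing bound in step one is an elementary estimate on $(1-1/\kappa^2)^b$ and the tail bound in step three reduces to a one-line application of a standard concentration inequality, so those steps should be straightforward to execute once the Chebyshev expansion is in hand.
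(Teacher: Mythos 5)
The paper does not prove this lemma---it is imported verbatim as Lemma 14 of the cited Childs--Kothari--Somma reference---and your reconstruction follows exactly the argument given there: smooth $1/x$ by $\bigl(1-(1-x^2)^b\bigr)/x$, expand in the Chebyshev basis, and kill the tail via binomial concentration. The argument is correct as outlined (one cosmetic slip: $c_j = 2^{-2b}\sum_{i=j+1}^{b}\binom{2b}{b+i}$ is itself the tail probability $\Pr[S\ge b+j+1]$ for $S\sim\mathrm{Binomial}(2b,\tfrac12)$, not $2c_j$, but this only shifts a harmless factor of $2$ in the Hoeffding step).
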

This then leads to the following theorem.
\begin{theorem}
\label{thm:method2main}
The derivative of the variational upper bound $\mathcal{L}_u(\Phi)$ with respect to $\alpha_k$ can be calculated within error $\epsilon$ with $\bigotilde{d^2 \frac{1}{\lambda^6} \frac{1}{\epsilon} \left(d^3  \infnorm{\partial \rho} + \frac{1}{\lambda} \infnorm{\partial \sigma} \right)\log\frac{1}{\delta}}$ queries to an oracle that provides the purifications of the relevant density matrices  and with probability $p \geq 1 - \delta$, $\delta > 0$.
\end{theorem}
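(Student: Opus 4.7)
The plan is to estimate the explicit derivative $\partial_{\alpha_k}\mathcal{L}_u(\Phi)$ derived in the previous lemma term by term on a quantum computer and combine the results with a union bound. The derivative splits naturally into a rational piece (the first line, coming from differentiating $-\ln\Tr(\tau^2\sigi)$ with $\tau=\rtp$ and $\sigma=\rho_X\otimes\rotil$) and four polynomial pieces (from differentiating $\Tr((\rho_{\tilde X Y}-\rho_{\tilde X}\otimes\rho_Y)^2)$). I would treat the two classes with different primitives: the polynomial pieces via generalized swap tests, and the rational piece via a Chebyshev polynomial approximation of $\sigma^{-1}$.

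For the polynomial pieces, every trace has the shape $\Tr(A\,\partial_{\alpha_k}B)$ with $A,B$ density operators supported on $\tilde X$, $Y$, or their tensor product. Using \Cref{lem:rhoDeriv} I would expand each $\partial_{\alpha_k}\rho$ as a commutator $-i\Tr_{X_0}[\hk, U\rho U^\dagger]$ and decompose $\hk$ in a unitary basis with coefficient vector $b_k$. Each resulting trace then becomes a $\|b_k\|_1$-weighted sum of generalized swap tests, each of which uses only $\bigo{1}$ queries to $U_\rho$ and $\Phi$ per sample and can be boosted to precision $\epsilon'$ with $\bigotilde{1/\epsilon'}$ calls via amplitude estimation, combined with a Chernoff bound for the $\log(1/\delta)$ dependence.

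The main obstacle is the rational term, since both its numerator and denominator contain $\sigi$. For this I would invoke \Cref{lem:childs} of \cite{Childs_2017} to approximate $\sigi$ by $f(\sigma)=\sum_{j=0}^{j_0}c_j \mathcal{T}_{2j+1}(\sigma)$ within $\infty$-norm error $\epsilon_{\rm inv}$ on the support of $\sigma$, with $j_0\in\bigotilde{\kappa}$ and condition number $\kappa=1/\lambda$. Because $U_\rho$ prepares a purification of $\sigma$, the density-matrix exponentiation routine of \cite{low2019hamiltonian} produces a block-encoding of $\sigma$ from which each $\mathcal{T}_{2j+1}(\sigma)$ can be implemented by qubitization at a cost linear in $j$. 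The numerator traces $\Tr(\{\tau,\partial_{\alpha_k}\tau\}\sigi)$ and $\Tr(\tau^2\sigi\sigma'\sigi)$, and the denominator $\Tr(\tau^2\sigi)$, each expand into $\bigo{j_0}$ or $\bigo{j_0^2}$ Hadamard-test circuits whose expectation values can be estimated to additive accuracy $\epsilon'$ by amplitude estimation.

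The final query count comes from error propagation. Because $\lambda_{\min}(\sigma)\ge\lambda$, the denominator is bounded below and the ratio's sensitivity to absolute error is $\bigotilde{1/\lambda^2}$; the extra inverse inside $\sigi\sigma'\sigi$ adds another $1/\lambda^2$, and the Chebyshev degree $\bigotilde{1/\lambda}$ combined with the precision needed for the per-trace estimates contributes the remaining $1/\lambda^2$, yielding the stated $\lambda^{-6}$ factor. Converting Frobenius, trace, and operator norms in the polynomial pieces forces dimension factors through each inequality, producing the $d^2$ prefactor and the $d^3\infnorm{\partial\rho}$ summand while the $\sigi\sigma'\sigi$ term yields the $\lambda^{-1}\infnorm{\partial\sigma}$ summand. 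The hardest bookkeeping step will be verifying that setting $\epsilon' = \Theta(\epsilon\lambda^{O(1)}/d^{O(1)})$ with failure probability $\delta$ divided among the constant number of trace terms suffices; once this is checked, multiplying the per-trace amplitude-estimation cost by the Chebyshev degree gives the claimed $\bigotilde{d^2 \lambda^{-6}\epsilon^{-1}(d^3\infnorm{\partial\rho}+\lambda^{-1}\infnorm{\partial\sigma})\log(1/\delta)}$ query complexity.
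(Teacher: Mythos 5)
Your proposal follows essentially the same route as the paper's proof: splitting the derivative into the rational term and the polynomial (Hilbert--Schmidt) terms, approximating $\sigma^{-1}$ via the Chebyshev expansion of \Cref{lem:childs}, expanding the commutator form of $\partial_{\alpha_k}\rho$ in a unitary basis of $N_u=d^2$ elements (which is where the $d^2$ and $d^3\infnorm{\partial\rho}$ factors arise), estimating each trace with the generalized swap test plus amplitude estimation and median boosting, and propagating errors through the ratio $N/D$ with a union bound. The only cosmetic difference is that you implement the Chebyshev polynomials via qubitization of a block-encoding, whereas the paper's access model consumes $O(j)$ copies of the purified states directly in the multi-register swap-test circuit; this does not change the asymptotic query count.
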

\begin{proof}
We can use this formulation of the derivative of the channel to express the derivative of the variational bound as a sum of polynomial terms, which then can be calculated efficiently using the Hadamard test. The two sources of error are the deterministic error from approximating $\sigi$ and probabilistic error from estimating the traces. Let us define $N := \Tr( \{\taurx,(\partial_k \taurx)\} \sigi - \taurx^2 \sigi \sigma' \sigi)$ and $D := \Tr(\taurx^2 \sigma^{-1})$. The bounds on the deterministic errors for $N$ and $D$ are given by:

\begin{align}
    \epsilon_N &\leq 2 \varepsilon \infnorm{\partial_k \taurx} + \frac{2 \varepsilon \infnorm{\partial_{\alpha_k}\sigma}}{\lambda},\\
    \epsilon_D &\leq \varepsilon.
\end{align}
Here we defined $\varepsilon$ as the error parameter in Lemma~\ref{lem:childs} and $\lambda$ is the minimum eigenvalue of $\sigma$. Substituting in the power series for the inverse into the first term in \refeq{eq:method2derivative} expansion then gives us the approximation

\begin{gather}\label{eq:seconsubstituted}
    \frac{N}{D} \approx \frac{\widetilde{N}}{\widetilde{D}} = \frac{\Tr( \{\taurx,(\partial_{\alpha_k} \taurx)\} (\sum_{i=0}^{j_0}a_i \sigma^i)  - \taurx^2 (\sum_{i=0}^{j_0}a_i \sigma^i) \sigma' (\sum_{i=0}^{j_0}a_i \sigma^i))}{\Tr(\taurx^2\sum_{i=0}^{j_0}a_i \sigma^i)}
\end{gather}

Expressing $\partial_{\alpha_k} \rho_{R\widetilde{X}}= \sum_j^{N_u} b_{jk} V_j$ and $\rho_R \otimes \partial_{\alpha_k} \rho_{\widetilde{X}}= \sum_j^{N_u} c_{jk} V_j$ in a unitary basis, the expression in turn reads:

\begin{align}
    \frac{\widetilde{N}}{\widetilde{D}} &= \frac{\sum_{i=0}^{j_0}\sum_j^{N_u} a_i b_{jk}\Tr( \taurx V_j \sigma^i) + \sum_{i=0}^{j_0}\sum_j^{N_u} a_i b_{jk}\Tr(V_j \taurx \sigma^i)  - \sum_{i=0}^{j_0} \sum_{n=0}^{j_0}\sum_j^{N_u} c_{jk} a_i  a_n\Tr(\taurx^2 \sigma^i V_{jk}  \sigma^n))}{\sum_{i=0}^{j_0}a_i \Tr(\taurx^2 \sigma^i)}
    \label{eq:a14}
\end{align}

With the ability to evaluate traces of the form $\Tr{\prod_i\rho_i U_i}$ with error $\epsilon_T$, the probabilistic error margins for $\widetilde{N}$ and $\widetilde{D}$ read using the bounds on the terms in the series approximation to the inverse in Lemma~\ref{lem:childs}.

\begin{align}
    \varepsilon_N &\leq N_u \epsilon_T (2 \{b_{jk}\}_{max}\sum_i |a_i| + \{c_{jk}\}_{max}\sum_i |a_i|^2) \leq 4 N_u \epsilon_T \left( \{b_{jk}\}_{max}  \kappa \sqrt{\log \frac{\kappa}{\varepsilon}} + \{c_{jk}\}_{max} \kappa^2 \log \frac{\kappa}{\varepsilon} \right)\\  
    \varepsilon_D &\leq \epsilon_T \sum_i |a_i| = 2 \epsilon_T \kappa \sqrt{\log \frac{\kappa}{\varepsilon}}
\end{align}

With probability greater than 
\begin{equation}\label{eq:succprob}
    P_{succ}\ge 1 - e^{-\frac{n_{AE}}{24}} (N_u(j_0+j_0^2) + j_0).
\end{equation} 

The bound is derived by considering that the probability of failure for a single trace evaluation is upper bounded by $p_{fail} \leq e^{-\frac{n_{AE}}{24}}$ from \eqref{eq:failurebound}, where $n_{AE}$ is a parameter of choice, and using the fact that $N_u(j_0^2 + j_0)+j_0$ traces are evaluated. The equation follows from the union bound $P(\bigcup_i A_i) \leq \sum_i P(A_i)$. We then evaluate the cumulative bound on the error:

\begin{align}
    \epsilon_{total}&:= \abs{\frac{N}{D} - \frac{N + \epsilon_N + \varepsilon_N}{D+ \epsilon_D + \varepsilon_D}}\\
    &\leq \frac{2 \abs{\epsilon_N + \varepsilon_N}}{D} + \frac{\abs{N}\abs{\epsilon_D + \varepsilon_D}}{D^2}\\
    &= 2 \infnorm{\sigma} \left(2 \varepsilon \infnorm{\partial_{\alpha_k}\taurx} + \frac{\varepsilon \infnorm{\partial_{\alpha_k}\sigma}}{\lambda} + 4\epsilon_T N_u \left(\{b_{jk}\}_{max}  \kappa \sqrt{\log \frac{\kappa}{\varepsilon}} + \{c_{jk}\}_{max} \kappa^2 \log \frac{\kappa}{\varepsilon} \right)\right) \nonumber\\
    &\qquad+ \infnorm{\sigma} (\frac{2\infnorm{\partial_{\alpha_k}\taurx}}{\lambda} + \frac{\infnorm{\partial_{\alpha_k}\sigma}}{\lambda^2}) (\varepsilon + 2\epsilon_T \kappa \sqrt{\log \frac{\kappa}{\varepsilon}}) \\
    &= \varepsilon (\infnorm{\partial_{\alpha_k}\taurx} (4+2\kappa) + \infnorm{\partial_{\alpha_k}\sigma} (2\kappa + \kappa^2)) \nn
    &\hspace{20pt}+\epsilon_T (8 N_u \left(\{b_{jk}\}_{max}  \kappa \sqrt{\log \frac{\kappa}{\varepsilon}} + \{c_{jk}\}_{max} \kappa^2 \log \frac{\kappa}{\varepsilon} \right) + 4\infnorm{\partial_{\alpha_k}\taurx}\kappa^2\sqrt{\log \frac{\kappa}{\varepsilon}} + 2\infnorm{\partial_{\alpha_k}\sigma} \kappa^3 \sqrt{\log \frac{\kappa}{\varepsilon}})
\end{align}

Where we substituted in the definitions of $\epsilon_N$, $\varepsilon_N$, $\epsilon_D$ and $\varepsilon_D$ and in the last line rearranged the terms into groups of $\varepsilon$ and $\epsilon_T$. Setting both the $\epsilon_T$ and $\varepsilon$ terms to be equal to $\epsilon/2$  sets the global error to $\epsilon$, giving us the parameters $\varepsilon$ and $\epsilon_T$ in terms of the global error:

\begin{align}
    \varepsilon &\leq \frac{\epsilon}{2 (\infnorm{\partial_{\alpha_k}\taurx} (4+2\kappa) + \infnorm{\partial_{\alpha_k}\sigma} (2\kappa + \kappa^2))}\\
    \epsilon_T &\leq \frac{\epsilon}{2(8N_u \left(\{b_{jk}\}_{max}  \kappa \sqrt{\log \frac{\kappa}{\varepsilon}} + \{c_{jk}\}_{max} \kappa^2 \log \frac{\kappa}{\varepsilon} \right) + 4\infnorm{\partial_{\alpha_k}\taurx}\kappa^2\sqrt{\log \frac{\kappa}{\varepsilon}} + 2\infnorm{\partial_{\alpha_k}\sigma} \kappa^3 \sqrt{\log \frac{\kappa}{\varepsilon}})}
\end{align}

We can see that $\bigo{N_u(j_0^2 + j_0^4) + j_0^2}  \in \bigo{N_u j_0^4}$ oracle accesses to the $\sigma$ oracle is necessary to evaluate all of the traces in the expression \eqref{eq:a14}, given that the method we're using to evaluate traces requires $n_i$ copies of $\rho_i$ to evaluate $\Tr(\prod \rho_i^{n_i} U_i)$. We had shown in \eqref{eq:succprob} that the failure probability was upper bounded such that

\begin{align}
    \delta &\leq e^{-\frac{n_{AE}}{24}} (N_u(j_0+j_0^2) + j_0)
\end{align}

Which gives us the lower bound on $n_{AE}$:

\begin{align}
    n_{AE} \geq 24 \log(\frac{N_u(j_0+j_0^2) + j_0}{\delta})
\end{align}

The total query complexity comes from multiplying the base algorithm cost with the number of repetitions of amplitude estimation, in the procedure described in Appendix \ref{sec:swaptest}. 

\begin{align}
    \textrm{QC} &= \bigo{N_u j_0^4}\cdot\bigo{\epsilon_T^{-1}}\cdot\bigo{n_{AE}}\\
    &= \bigo{N_u \frac{1}{\lambda^6}\frac{1}{\epsilon}\log\frac{1}{\delta}\log(\frac{1}{\epsilon \lambda^2})^4 \left(N_u b_{max} \log\frac{1}{\epsilon \lambda} + (\infnorm{\partial \rho} + \frac{1}{\lambda} \infnorm{\partial \sigma})\sqrt{\log\frac{1}{\epsilon \lambda}} \right)}\\
    &= \bigotilde{N_u \frac{1}{\lambda^5} \frac{1}{\epsilon}\log\frac{1}{\delta} \left(N_u (b_{max}\frac{1}{\lambda}+c_{max}) + \infnorm{\partial \rho} + \frac{1}{\lambda} \infnorm{\partial \sigma} \right)}
\end{align}
Here $b_{max}$ is the maximum of the absolute coefficients of the expansion $\partial_{\alpha_k} \taurx = \sum_{i=1}^{N_u} b_{ik} V_i$. Assuming an orthogonal basis of $d^2$ elements, $N_u = d^2$, $b_{ik} = \langle \partial_{\alpha_k} \taurx, V_i \rangle = \Tr(V_i^\dagger (\partial_{\alpha_k} \taurx))$.

\begin{align}
    b_{ik} &= \Tr(V_i^\dagger (\partial_{\alpha_k} \taurx)),\\
    &\leq \infnorm{\partial_{\alpha_k} \taurx} \Tr(\abs{V_i^\dagger}),\\
    &\leq \infnorm{\partial_{\alpha_k} \taurx} d.
\end{align}

Similarly:

\begin{align}
    c_{ik} &= \Tr(V_i^\dagger (\partial_{\alpha_k} \sigma)),\\
    &\leq \infnorm{\partial_{\alpha_k} \sigma} \Tr(\abs{V_i^\dagger}),\\
    &\leq \infnorm{\partial_{\alpha_k} \sigma} d.
\end{align}

Since the bound is independent of the i index, it also bounds the maximum value. Thus the upper bound on the QC is:

\begin{align}
    \bigotilde{d^2 \frac{1}{\lambda^5} \frac{1}{\epsilon} \left(d^3  \infnorm{\partial \taurx}\frac{1}{\lambda} + (\frac{1}{\lambda}+d^3) \infnorm{\partial \sigma} \right)\log\frac{1}{\delta}}.
\end{align}
\end{proof}

\subsection{Measured Renyi Divergence}
\label{sec:measuredrenyi}
Another approach is to approximate the quantum relative entropy using another measure of relative entropy that is easier to calculate. However, the multiple definitions of quantum mutual information

\begin{align}
    I(A;B) &= S(\rho_A) + S(\rho_B) - S(\rho_{AB}) \label{rdef1}\\
    &= D(\rho_{AB}||\rho_A \otimes \rho_B)\label{rdef2} \\
    &= \operatorname{min}_{\sigma_B} D(\rho_{AB}||\rho_A \otimes \sigma_B)\label{rdef3}
\end{align}
are not equivalent anymore once we replace the entropy and divergence terms with their R\'enyi counterparts. Doing this swap in \eqref{rdef1} is desirable because all quantities to be calculated are polynomials of density matrices, but the resulting quantity fails to replicate desirable properties of $I(A;B)$ such as positivity, and using the definition in \eqref{rdef3} is undesirable due to the cost of the optimization problem added to each mutual information term. 

The work in \cite{berta2017variational} proposes addressing these problems by using the  measured R\'enyi entropy, defined as 
\begin{equation}
    D_\alpha^\mathbb{M}(\rho || \sigma) := \operatorname{sup}_{\chi, M} D_\alpha(P_{\rho,M}||P_{\sigma,M})
\end{equation}
here $\chi$ is a finite set and $M$ is a POVM defined on $\chi$. Which can be interpreted as the classical R\'enyi divergence of the measurement probability distributions of the given density matrices, optimized over all possible measurements. 
Reference \cite{berta2017variational} proves that the optimum value can be attained via rank 1 projective measurements and provides an alternative variational form: 

\begin{equation}
D_\alpha^\mathbb{M}(\rho || \sigma) = \frac{1}{\alpha - 1} \operatorname{log} \operatorname{sup}_{\omega > 0} \left[\alpha \Tr{\rho \omega^{\alpha-1}} + (1-\alpha) \Tr{\alpha \omega^\alpha} \right]
\end{equation}
which has a closed from solution for $\alpha = 2$:

\begin{equation}
    D_2^\mathbb{M}(\rho || \sigma) = \operatorname{log} \left[\Tr{\rho \Phi^{-1}(\rho)} - \frac{1}{4} \Tr{\rho (\Phi^{-1}(\rho))^2} \right] \label{measfinal}
\end{equation}

where $\Phi(\omega) := \omega \sigma + \sigma \omega$. This quantity is strictly positive, and obeys the data processing inequality. The function $\Phi^{-1}(\rho)$ can be expressed in closed form considering the Lyapunov equation:

\begin{gather}
    \rho = \omega \sigma + \sigma \omega \\
    \omega \sigma + \sigma \omega^H - \rho = 0\\
    (-\sigma) \omega + \omega (-\sigma)^H + \rho = 0 \\
    \omega = \Phi^{-1}(\rho) = \int_0^\infty e^{-s\sigma} \rho e^{-s \sigma} ds
\end{gather}

Thus \eqref{measfinal} takes the form:

\begin{align}
    D_2^\mathbb{M}(\rho || \sigma) &= \operatorname{log} \left[-\Tr{\rho \int_0^\infty e^{-s\sigma} \rho e^{-s \sigma} ds} - \frac{1}{4} \Tr{\rho (- \int_0^\infty e^{-s\sigma} \rho e^{-s \sigma} ds)^2} \right] \\
    &= \operatorname{log} \left[-\int_0^\infty ds \Tr{\rho  e^{-s\sigma} \rho e^{-s \sigma}} - \frac{1}{4} \Tr{\rho ( \int_0^\infty e^{-s\sigma} \rho e^{-s \sigma} ds)^2} \right] \\
    &= \operatorname{log} \left[\int_0^\infty ds \Tr{\rho  e^{-s\sigma} \rho e^{-s \sigma}} - \frac{1}{4} \int_0^\infty \int_0^\infty ds ds' \Tr{\rho e^{-s\sigma} \rho e^{-(s+s')\sigma} \rho e^{-s' \sigma} } \right]
\end{align}

Since logarithm is a monotonic function, optimizing the logarithm is the same as optimizing the argument. Let $D_2^\mathbb{M}(\rho || \sigma) = \operatorname{log} Q_2^\mathbb{M}(\rho || \sigma)$. Utilizing the cyclic property of the trace, the derivative of $Q_2^\mathbb{M}(\rho || \sigma)$ can be expressed as 

\begin{align}
    \frac{\partial Q_2^\mathbb{M}}{\partial \alpha} &=2 \int_0^\infty ds \Tr{\frac{\partial \rho}{\partial \alpha}  e^{-s\sigma} \rho e^{-s \sigma}}- 2 \int_0^\infty ds s \int_0^1 dt \Tr{e^{-st\sigma}A e^{-s(1-t)\sigma} \rho e^{-s\sigma} \rho} \nn
    & - \frac{1}{4}  \int_0^\infty \int_0^\infty ds ds' \frac{\partial}{\partial \alpha}\left[\Tr{\rho e^{-s\sigma} \rho e^{-(s+s')\sigma} \rho e^{-s' \sigma} }\right] \label{qder}
\end{align}

Let us tackle the first two terms. Define $I = \int_0^\infty ds s \int_0^1 dt \Tr{e^{-st\sigma}A e^{-s(1-t)\sigma} \rho e^{-s\sigma} \rho}$. First we're going to approximate $e^{-A}$ with a power series, then we will approximate the integral with a Monte Carlo sampling method and order statistics, and then evaluate the approximation using the Extended swap test we introduced in the previous section.

Our goal is to calculate integrals of the form

\begin{align}
    I &= \int_0^\infty \Tr{e^{-st\sigma} A e^{-s(1-t)\sigma \rho e^{-s\sigma\rho}}} ds
\end{align}

This is done in a few steps, first we introduce a cutoff $L$ to the integral. The error caused by this approximation is calculated

\begin{align}
    \abs{I - \int_0^L \Tr{e^{-st\sigma} A e^{-s(1-t)\sigma \rho e^{-s\sigma\rho}}}ds} &= \abs{\int_L^\infty \Tr{e^{-st\sigma} A e^{-s(1-t)\sigma} \rho e^{-s\sigma\rho}}ds} \\
    &\leq \int_L^\infty \abs{\Tr{e^{-st\sigma} A e^{-s(1-t)\sigma} \rho e^{-s\sigma}\rho}}ds\\
    &\leq \int_L^\infty \infnorm{\partial_{\alpha_k}\rho} e^{-2s\lambda} \infnorm{\rho} ds\\
    &= \infnorm{\partial_{\alpha_k}\rho} \infnorm{\rho} \frac{e^{-2L\lambda}}{2\lambda}
\end{align}

Where we have defined $\lambda$ as the smallest eigenvalue of $\sigma$. We bound the error with $\epsilon_1$, which gives us a lower bound on L:

\begin{align}
    L \geq \frac{1}{2\lambda} \log \frac{\infnorm{\partial_{\alpha_k}\rho} \infnorm{\rho}}{2 \lambda \epsilon_1}
\end{align}

In the next step, we approximate the integral with a divided sum using a Newton-Cotes method. Dividing the interval $[0,L]$ into $n$ slices, we have \cite{abramowitz1964handbook} :

\begin{align}
    \int_0^L f(s) ds = \sum_{i=0}^{n} w_i f(\frac{L}{n}i) - \frac{1}{12}\frac{L^3}{n^2}f^{''}(\xi)
\end{align}

Here, $\xi$ is the value in the interval $[0,L]$ that maximizes the derivative. For this particular estimator $w_0 = w_n = \frac{1}{2}$ and $w_i = 1$ otherwise. Thus as an estimate of the error, we have

\begin{align}
  \abs{\int_0^L f(s) ds - \sum_{i=0}^{n} w_i f(\frac{L}{n}i)} &= \frac{1}{12}\frac{L^3}{n^2}\abs{f^{''}(\xi)}   
\end{align}

The derivative of $f(s) := \Tr{e^{-st\sigma} A e^{-s(1-t)\sigma \rho e^{-s\sigma\rho}}}$ obeys the inequality

\begin{align}
    \abs{f^{(n)}(s)} &\leq (2 \infnorm{\sigma})^n \infnorm{\partial_{\alpha_k}\rho} e^{-2s\lambda}\\
    &\leq (2 \infnorm{\sigma})^n \infnorm{\partial_{\alpha_k}\rho}
\end{align}

Thus

\begin{align}
    \abs{\int_0^L f(s) ds - \sum_{i=0}^{n} w_i f(\frac{L}{n}i)} &\leq \frac{\infnorm{\sigma}^2}{3}\frac{L^3}{n^2}
\end{align}

Upper bounding this portion of the error with $\epsilon_2$, we obtain a lower bound on $n$:

\begin{align}
    n \geq \kappa_\sigma \sqrt{\frac{1}{24 \lambda\epsilon_2} \left(\log \frac{\infnorm{\partial_{\alpha_k}\rho} \infnorm{\rho}}{2 \lambda \epsilon_1} \right)^3}
\end{align}

Here we have expressed the conditioning number of $\sigma$ as $\kappa_\sigma$. The last deterministic approximation we'll make is approximating the exponentials of $\sigma$ with power series. 

\begin{align}
    e^{-\alpha \sigma} &= \sum_{n=0}^{\infty} \frac{(-\alpha \sigma)^n}{n!}\\
    &= \left(\sum_{n=0}^{K} + \sum_{n=K+1}^{\infty} \right) \frac{(-\alpha \sigma)^n}{n!}
\end{align}

The error in norm is then

\begin{align}
    \infnorm{e^{-\alpha \sigma} - \sum_{n=0}^{K}\frac{(-\alpha \sigma)^n}{n!}} &= \infnorm{\sum_{n=K+1}^{\infty}\frac{(-\alpha \sigma)^n}{n!}}\\
    &\leq \sum_{n=K+1}^{\infty}\frac{(\alpha \infnorm{\sigma})^n}{n!}\\
    &\leq e^{\alpha \infnorm{\sigma}} \frac{(\alpha \infnorm{\sigma})^{K+1}}{(K+1)!}
\end{align}

Set this error to be less than $\frac{1}{\epsilon_0}$.

\begin{align}
    e^{\alpha \infnorm{\sigma}} \frac{(\alpha \infnorm{\sigma})^{K+1}}{(K+1)!} &\leq \epsilon_0\\
    \alpha \infnorm{\sigma} + (K+1)\ln(\alpha \infnorm{\sigma}) - \ln((K+1)!) &\leq \ln \epsilon_0 \\
    \ln((K+1)!) - (K+1)\ln(\alpha \infnorm{\sigma}) - \alpha \infnorm{\sigma}&\geq \ln\frac{1}{\epsilon_0}\\
    1+(K+\frac{3}{2})\ln(K+1) - (K+1) - (K+1)\ln(\alpha \infnorm{\sigma}) &\geq \ln\frac{1}{\epsilon_0}+\alpha \infnorm{\sigma}\\
    (K+1)(\ln(K+1) -\ln(\alpha \infnorm{\sigma})) &\geq \ln\frac{1}{\epsilon_0}+\alpha \infnorm{\sigma}-1\\
    \frac{K+1}{\alpha \infnorm{\sigma}}(\ln(K+1) -\ln(\alpha \infnorm{\sigma})) &\geq \frac{1}{\alpha \infnorm{\sigma}}
    (\ln\frac{1}{\epsilon_0}+\alpha \infnorm{\sigma}-1)
\end{align}

Recognizing that the left side is in the form $ye^y$ where $y=\ln(K+1) -\ln(\alpha \infnorm{\sigma})$, we can use the Lambert-W functions to express the solution:

\begin{align}
    \ln(K+1) -\ln(\alpha \infnorm{\sigma}) &\geq W_0(\frac{1}{\alpha \infnorm{\sigma}}
    (\ln\frac{1}{\epsilon_0}+\alpha \infnorm{\sigma}-1))\\
    K+1 &\geq \alpha \infnorm{\sigma}\exp{W_0(\frac{1}{\alpha \infnorm{\sigma}}
    (\ln\frac{1}{\epsilon_0}+\alpha \infnorm{\sigma}-1))}
\end{align}

Using monotonicity of the $W_0$ function, we can simplify the expression at the cost of making it a bit less tight.

\begin{gather}
    K(\alpha) \geq \alpha \infnorm{\sigma}\exp{W_0(\frac{1}{\alpha \infnorm{\sigma}}
    \ln\frac{1}{\epsilon_0}+1)}
\end{gather}

For sufficiently large values of $x$, $\exp{W_0(1+x)}$ is smaller than $x$.

\begin{gather}
    K \geq \ln\frac{1}{\epsilon_0}\\
    K \in \bigo{\ln\frac{1}{\epsilon_0}}
\end{gather}

Now let us bound the error we get in the trace due to using these approximations:

\begin{align}
    &\abs{\Tr{e^{-st\sigma} A e^{-s(1-t)\sigma} \rho e^{-s\sigma}\rho} - \Tr{\sum_{i=0}^{K(st)} \frac{(-st)^i}{i!} \sigma^i A \sum_{j=0}^{K(s(1-t))} \frac{(-s(1-t))^j}{j!}\sigma^j \rho \sum_{k=0}^{K(s)} \frac{(-s)^k}{k!} \sigma^k \rho}} \leq\nn
    &\hspace{20pt}\abs{\Tr{\sum_{i=K(st)+1}^{\infty} \frac{(-st)^i}{i!}\sigma^i A e^{-s(1-t)\sigma} \rho e^{-s\sigma}\rho}} + \abs{\Tr{e^{-st\sigma} A \sum_{j=0}^{K(s(1-t))} \frac{(-s(1-t))^j}{j!}\sigma^j \rho e^{-s\sigma}\rho}} \nn 
    &\hspace{20pt} +\abs{\Tr{e^{-st\sigma} A e^{-s(1-t)\sigma} \rho \sum_{k=0}^{K(s)} \frac{(-s)^k}{k!} \sigma^k\rho}} + \bigo{\epsilon_o^2} \\
    &\leq \epsilon_o \infnorm{\partial_{\alpha_k}\rho} (e^{-s(2-t)\lambda} + e^{-s(1+t)\lambda} + e^{-s\lambda})\nn 
    &\hspace{10pt}+ \epsilon_o^2 \infnorm{\partial_{\alpha_k}\rho} (e^{-s\lambda} + e^{-st\lambda} + e^{-s(1-t)\lambda}) \nn
    &\hspace{20pt}+ \epsilon_o^3 \infnorm{\partial_{\alpha_k}\rho}
\end{align}

If we plug the error term into the expression $\sum_{i=0}^{n} w_i f(\frac{L}{n}i)$, we can calculate the total error incurred

\begin{align}
    &\infnorm{\partial_{\alpha_k}\rho} \sum_{i=0}^{n} w_i \left[ \epsilon_o  (e^{-i\frac{L}{n}(2-t)\lambda} + e^{-i\frac{L}{n}(1+t)\lambda} + e^{-i\frac{L}{n}\lambda}) 
    + \epsilon_o^2  (e^{-i\frac{L}{n}\lambda} + e^{-i\frac{L}{n}t\lambda} + e^{-i\frac{L}{n}(1-t)\lambda}) 
    +\epsilon_o^3  \right] \\
    &\hspace{20pt} \leq \infnorm{\partial_{\alpha_k}\rho}\sum_{i=0}^{n} \left[ \epsilon_o  (e^{-i\frac{L}{n}(2-t)\lambda} + e^{-i\frac{L}{n}(1+t)\lambda} + e^{-i\frac{L}{n}\lambda}) 
    + \epsilon_o^2  (e^{-i\frac{L}{n}\lambda} + e^{-i\frac{L}{n}t\lambda} + e^{-i\frac{L}{n}(1-t)\lambda}) 
    +\epsilon_o^3  \right] \\
    &\hspace{20pt} =\epsilon_o  (\frac{1 - e^{-\frac{L\lambda (n+1)}{n} (2-t) }  }{1 - e^{-\frac{L\lambda}{n} (2-t) }}
    + \frac{1 - e^{-\frac{L\lambda (n+1)}{n} (1+t) }  }{1 - e^{-\frac{L\lambda}{n} (1+t) }} 
    + \frac{1 - e^{-\frac{L\lambda (n+1)}{n}}  }{1 - e^{-\frac{L\lambda}{n}}}) \nn
    &\hspace{20pt} + \epsilon_o^2  (\frac{1 - e^{-\frac{L\lambda (n+1)}{n}  }  }{1 - e^{-\frac{L\lambda}{n}  }} 
    + \frac{1 - e^{-\frac{L\lambda (n+1)}{n} t }  }{1 - e^{-\frac{L\lambda}{n} t }}
    + \frac{1 - e^{-\frac{L\lambda (n+1)}{n} (1-t) }  }{1 - e^{-\frac{L\lambda}{n} (1-t) }}) 
    + \epsilon_o^3 (n+1)
\end{align}

With the assumption that $\epsilon_o^2  (\frac{1 - e^{-\frac{L\lambda (n+1)}{n}  }  }{1 - e^{-\frac{L\lambda}{n}  }} + \frac{1 - e^{-\frac{L\lambda (n+1)}{n} t }  }{1 - e^{-\frac{L\lambda}{n} t }} + \frac{1 - e^{-\frac{L\lambda (n+1)}{n} (1-t) }  }{1 - e^{-\frac{L\lambda}{n} (1-t) }}) + \epsilon_o^3 (n+1) \leq \epsilon_o  (\frac{1 - e^{-\frac{L\lambda (n+1)}{n} (2-t) }  }{1 - e^{-\frac{L\lambda}{n} (2-t) }}
    + \frac{1 - e^{-\frac{L\lambda (n+1)}{n} (1+t) }  }{1 - e^{-\frac{L\lambda}{n} (1+t) }} 
    + \frac{1 - e^{-\frac{L\lambda (n+1)}{n}}  }{1 - e^{-\frac{L\lambda}{n}}})$, we can bound the error in a simpler manner:
    
\begin{align}
    \abs{\sum_{i=0}^{n} w_i f(\frac{L}{n}i) - \sum_{i=0}^{n} w_i \widetilde{f}(\frac{L}{n}i)} &\leq 2\infnorm{\partial_{\alpha_k}\rho}\epsilon_o  (\frac{1 - e^{-\frac{L\lambda (n+1)}{n} (2-t) }  }{1 - e^{-\frac{L\lambda}{n} (2-t) }}
    + \frac{1 - e^{-\frac{L\lambda (n+1)}{n} (1+t) }  }{1 - e^{-\frac{L\lambda}{n} (1+t) }} 
    + \frac{1 - e^{-\frac{L\lambda (n+1)}{n}}  }{1 - e^{-\frac{L\lambda}{n}}}) \\
    &\leq 6\infnorm{\partial_{\alpha_k}\rho}\epsilon_o (\frac{1 - e^{-\frac{L\lambda (n+1)}{n}}  }{1 - e^{-\frac{L\lambda}{n}}}) \\
    &\leq 6\infnorm{\partial_{\alpha_k}\rho}\epsilon_o (n+1)
\end{align}

If we upper bound this error with $\epsilon_3$, we find the upper bound on $\epsilon_0$, which in turn gives us the lower bound on $K$.

\begin{align}
    \epsilon_0 &\leq  \frac{\epsilon_3}{6 \infnorm{\partial_{\alpha_k}\rho}(n+1)}\\
    K(\alpha) &\geq \bigo{\frac{\ln \frac{6 \infnorm{\partial_{\alpha_k}\rho} (n+1)}{\epsilon_3 }}{\ln \frac{e}{\alpha}}}
\end{align}

The final approximator that approximates $I$ with error $\epsilon_1 +\epsilon_2+\epsilon_3 $ is given as

\begin{align}
     \sum_{i=0}^{n} w_i \widetilde{f}(\frac{L}{n}i) &= \sum_{i=0}^{n} w_i \left[  \Tr{ \sum_{m=0}^{K(st)} \frac{(-st)^m}{m!}\sigma^m A\sum_{j=0}^{K(s(1-t)} \frac{(-s(1-t))^j}{j!} \sigma^j \rho \sum_{k=0}^{K(s)} \frac{(-s)^k}{k!} \sigma^k \rho} \right]_{s=\frac{L}{n}i} \\
     &= \sum_{i=0}^{n} w_i \left[ \sum_{m=0}^{K(st)} \frac{(-st)^m}{m!} \sum_{j=0}^{K(s(1-t))} \frac{(-s(1-t))^j}{j!} \sum_{k=0}^{K(s)} \frac{(-s)^k}{k!} \Tr{\sigma^m A  \sigma^j \rho  \sigma^k \rho} \right]_{s=\frac{L}{n}i}
\end{align}

Let us make the assumption that $\Tr{\sigma^m A  \sigma^j \rho  \sigma^k \rho}$ can be estimated within an error $\epsilon_T$ with success probability $P_s$. The maximum obtainable error is given by

\begin{align}
     &\sum_{i=0}^{n} w_i \left[ \sum_{m=0}^{K(st)} \frac{(st)^m}{m!} \sum_{j=0}^{K(s(1-t))} \frac{(s(1-t))^j}{j!} \sum_{k=0}^{K(s)} \frac{(s)^k}{k!} \epsilon_T \right]_{s=\frac{L}{n}i} = \\
     &\hspace{25pt} \epsilon_T \sum_{i=0}^{n} w_i \left[ \frac{e^{st} \Gamma(K(st)+1, st)}{(K(st))!} \frac{e^{s(1-t)} \Gamma(K(s(1-t)+1, st)}{(K(s(1-t))!} \frac{e^{s} \Gamma(K(s)+1, st)}{(K(s))!} \right]_{s=\frac{L}{n}i} \\
     &= \epsilon_T \sum_{i=0}^{n} w_i \left[e^{3s} \frac{ \Gamma(K(st)+1, st)}{(K(st))!} \frac{ \Gamma(K(s(1-t)+1, st)}{(K(s(1-t)))!} \frac{ \Gamma(K(s)+1, st)}{(K(s))!} \right]_{s=\frac{L}{n}i} \\
     &= \epsilon_T \sum_{i=0}^{n} w_i \left[e^{3s} (1-\frac{(st)^{K(st)}}{(K(st)+1)!})(1-\frac{(s(1-t))^{K(s(1-t))}}{(K(s(1-t))+1)!})(1-\frac{(s)^{K(s)}}{(K(s)+1)!})  \right]_{s=\frac{L}{n}i}\\
     &\leq  \epsilon_T \sum_{i=0}^{n} w_i \left[e^{3s}  \right]_{s=\frac{L}{n}i}\\
     &\leq \epsilon_T \sum_{i=0}^{n} e^{3\frac{L}{n}i} \\
     &= \epsilon_T \frac{e^{3\frac{L(n+1)}{n}} - 1}{e^{3\frac{L}{n}}-1}\\
     &\leq \epsilon_T (e^{3L}+\frac{1}{e^{\frac{6L}{n}}})\\
     &\leq 2\epsilon_T e^{3L} \\
     &\leq 2\epsilon_T \left(\frac{\infnorm{\partial_{\alpha_k}\rho} \infnorm{\rho}}{2\lambda \epsilon_1}^{\frac{3}{2\lambda}} \right)
\end{align}

Setting this error to be upper bounded by $\epsilon_4$, we obtain

\begin{gather}
    \epsilon_T \leq \epsilon_4 \left(\frac{2\lambda \epsilon_1}{\infnorm{\partial_{\alpha_k}\rho} \infnorm{\rho}} \right)^{\frac{3}{2\lambda}}
\end{gather}

Now if we set $\epsilon_1 = \epsilon_2 = \epsilon_3 = \epsilon_4 = \epsilon/4$, we obtain bounds for our parameters in terms of the total error $\epsilon$.

\begin{gather}
    L \geq \frac{1}{2\lambda} \log \frac{2\infnorm{\partial_{\alpha_k}\rho} \infnorm{\rho}}{ \lambda \epsilon} \\
    n \geq \kappa_\sigma \sqrt{\frac{1}{6 \lambda\epsilon} \left(\log \frac{2\infnorm{\partial_{\alpha_k}\rho} \infnorm{\rho}}{ \lambda \epsilon} \right)^3}\\
    K(\alpha) \approx \bigo{\frac{\ln \frac{24 \infnorm{\partial_{\alpha_k}\rho} (n+1)}{\epsilon }}{\ln \frac{e}{\alpha}}} \\
    \epsilon_T \leq \frac{\epsilon}{4} \left(\frac{\lambda \epsilon}{2\infnorm{\partial_{\alpha_k}\rho} \infnorm{\rho}} \right)^{\frac{3}{2\lambda}}
\end{gather}

To get the total success probability, consider the total number of traces being evaluated:

\begin{align}
    N_{traces} &= \sum_{i=0}^{n} \left[ \bigo{\frac{\ln^3 \frac{24 \infnorm{\partial_{\alpha_k}\rho} (n+1)}{\epsilon }}{\ln{\frac{e}{st}}\ln{\frac{e}{s(1-t)}}\ln{\frac{e}{s}}}} \right]_{s=\frac{L}{n}i} \\
    &= n \frac{1}{e^3}\ln^3{\frac{25 \infnorm{\partial_{\alpha_k}\rho}n}{\epsilon}}\\
    &= \frac{\kappa_\sigma}{e^3} \sqrt{\frac{1}{6 \lambda\epsilon} \left(\log \frac{2\infnorm{\partial_{\alpha_k}\rho} \infnorm{\rho}}{ \lambda \epsilon} \right)^3} \ln^3{\frac{25 \infnorm{\partial_{\alpha_k}\rho}\kappa_\sigma \sqrt{\frac{1}{6 \lambda\epsilon} \left(\log \frac{2\infnorm{\partial_{\alpha_k}\rho} \infnorm{\rho}}{ \lambda \epsilon} \right)^3}}{\epsilon}}
\end{align}

Total failure probability is upper bounded by this number times the individual failure probability:

\begin{gather}
    P_{fail,total} \leq (1-P_s)\frac{\kappa_\sigma}{e^3} \sqrt{\frac{1}{6 \lambda\epsilon} \left(\log \frac{2\infnorm{\partial_{\alpha_k}\rho} \infnorm{\rho}}{ \lambda \epsilon} \right)^3} \ln^3{\frac{25 \infnorm{\partial_{\alpha_k}\rho}\kappa_\sigma \sqrt{\frac{1}{6 \lambda\epsilon} \left(\log \frac{2\infnorm{\partial_{\alpha_k}\rho} \infnorm{\rho}}{ \lambda \epsilon} \right)^3}}{\epsilon}}
\end{gather}

Lastly, we consider the query complexity. In our access model, we'll assume that we'll need $\bigo{j+m+k}$ queries to evaluate the expression $\Tr{\sigma^m A  \sigma^j \rho  \sigma^k \rho}$ (times the scaling with A). Thus the complexity is of the form

\begin{gather}
    \bigo{n K^4} \in \bigo{\kappa_\sigma \sqrt{\frac{1}{6 \lambda\epsilon} \left(\log \frac{2\infnorm{\partial_{\alpha_k}\rho} \infnorm{\rho}}{ \lambda \epsilon} \right)^3} \ln^4 \frac{25 \infnorm{\partial_{\alpha_k}\rho} \kappa_\sigma \sqrt{\frac{1}{6 \lambda\epsilon} \left(\log \frac{2\infnorm{\partial_{\alpha_k}\rho} \infnorm{\rho}}{ \lambda \epsilon} \right)^3}}{\epsilon }}
\end{gather}

The swap test method described in \ref{sec:swaptest} lets us calculate the polynomials of density functions and unitary operators. Substituting Eq. \ref{eq:sigder} in place of $A = \partial_{\alpha_k} \sigma$ and expanding the Hermitian operator $\widetilde{H}_k = \sum_i^{N_u}b_i V_i$ in a suitable unitary basis with $N_u$ elements adds a multiplier of $N_u$ to the query complexity. From Eq. \ref{eq:swaptesterror}, we have $\epsilon_T \leq \|b\|_1 \frac{12\pi}{M}$, $M$ being the parameter in amplitude estimation. This gives us a lower bound on M

\begin{gather}
    M \geq 48 \pi \|b\|_1 \left( \frac{2 \infnorm{\partial_{\alpha_k}\rho} \infnorm{\rho}}{\lambda \epsilon} \right)^{\frac{3}{2\lambda}}
\end{gather}

The final query complexity, utilizing the boosting scheme in Chapter \ref{sec:swaptest} is then given as 

\begin{gather}
     \bigo{\kappa_\sigma \|b\|_1 N_{AE} \left( \frac{2 \infnorm{\partial_{\alpha_k}\rho} \infnorm{\rho}}{\lambda \epsilon} \right)^{\frac{3}{2\lambda}} \sqrt{\frac{1}{6 \lambda\epsilon} \left(\log \frac{2\infnorm{\partial_{\alpha_k}\rho} \infnorm{\rho}}{ \lambda \epsilon} \right)^3} \ln^4 \frac{25 \infnorm{\partial_{\alpha_k}\rho} \kappa_\sigma \sqrt{\frac{1}{6 \lambda\epsilon} \left(\log \frac{2\infnorm{\partial_{\alpha_k}\rho} \infnorm{\rho}}{ \lambda \epsilon} \right)^3}}{\epsilon }} \\
     \in \bigotilde{\kappa_\sigma \|b\|_1 \left( \frac{2 \infnorm{\partial_{\alpha_k}\rho} \infnorm{\rho}}{\lambda \epsilon} \right)^{\frac{3}{2\lambda}} \sqrt{\frac{1}{6 \lambda\epsilon}} }
\end{gather}

Here, $N_{AE}$ should be derived from the success probability:

\begin{gather}
    P_{fail,total} \leq N_u e^{-\frac{N_{AE}}{24}}\frac{\kappa_\sigma}{e^3} \sqrt{\frac{1}{6 \lambda\epsilon} \left(\log \frac{2\infnorm{\partial_{\alpha_k}\rho} \infnorm{\rho}}{ \lambda \epsilon} \right)^3} \ln^3{\frac{25 \infnorm{\partial_{\alpha_k}\rho}\kappa_\sigma \sqrt{\frac{1}{6 \lambda\epsilon} \left(\log \frac{2\infnorm{\partial_{\alpha_k}\rho} \infnorm{\rho}}{ \lambda \epsilon} \right)^3}}{\epsilon}}
\end{gather}

Which yields, for failure probability less than $\delta$:

\begin{gather}
    N_{AE} \in \bigo{\ln{\left[\frac{1}{\delta} \frac{\kappa_\sigma}{e^3} \sqrt{\frac{1}{6 \lambda\epsilon} \left(\log \frac{2\infnorm{\partial_{\alpha_k}\rho} \infnorm{\rho}}{ \lambda \epsilon} \right)^3} \ln^3{\frac{25 \infnorm{\partial_{\alpha_k}\rho}\kappa_\sigma \sqrt{\frac{1}{6 \lambda\epsilon} \left(\log \frac{2\infnorm{\partial_{\alpha_k}\rho} \infnorm{\rho}}{ \lambda \epsilon} \right)^3}}{\epsilon}} \right]}}
\end{gather}

\section{Proof of \Cref{lem:logapprox2}}\label{sec:errorproof}

We now proceed with a proof of our technical lemma that places bounds on the derivatives of the Fourier series approximation to the logarithm and also on the sampling error incurred by estimating the result.
\begin{proof}[Proof of \Cref{lem:logapprox2}]
From \cite{wiebe2019generative}, for the approximation

\begin{align}
    \log_{KLM}\sigma :=  \sum_{k=1}^K \frac{(-1)^k}{k} \sum_{l=1}^L b_l^{(k)} (\frac{i}{2})^l \sum_{m=\ceil{l/2}-M}^{\floor{l/2}+M} (-1)^m e^{i(2m-l)\frac{\sigma\pi}{2}}\label{eq:logapprox}
\end{align}
Here $b_l^{(k)}$ are the Taylor series coefficients for $(\frac{\arcsin{x}}{\pi/2})^k$: $(\frac{\arcsin{x}}{\pi/2})^k = \sum_{l=0}^\infty b_l^{(k)} x^k$, for $x \in [-1,1]$. Important features of $b_l^{(k)}$ we will utilize are $b_l^{(k)}>0$ and $\|b^{(k)}\|_1 \leq 1$. We can then write the approximation error in the derivative as

\begin{align}
    \infnorm{\partial_{\alpha_k} (\log(1-\sigma) - \log_{KLM} (1-\sigma)} &\leq \infnorm{\partial_{\alpha_k}\sum_{k=K+1}^{\infty}\frac{(-1)^k}{k}(1-\sigma)^k} \nn
    &+ \infnorm{\partial_{\alpha_k} \sum_{k=1}^K \frac{(-1)^k}{k} \sum_{l=L+1}^\infty b_l^{(k)} \sin{\frac{\pi(1-\sigma)}{2}}^l} \nn
    &+ \infnorm{\partial_{\alpha_k} \sum_{k=1}^K \frac{(-1)^k}{k} \sum_{l=1}^L b_l^{(k)} (\frac{i}{2})^l \sum_{m \in \floor{l/2}-M \cup \ceil{l/2}+M} (-1)^m e^{i(2m-l)\frac{(1-\sigma)\pi}{2}}}.
    \label{eq:apperror}
\end{align}

We bound each of the terms in \eqref{eq:apperror} separately, starting with the first one:

\begin{align}
    \infnorm{\partial_{\alpha_k}\sum_{k=K+1}^{\infty}\frac{(-1)^k}{k}(1-\sigma)^k} &= \infnorm{\sum_{k=K+1}^{\infty}\frac{(-1)^k}{k} (\partial_{\alpha_k}(1-\sigma)^k)}\\
    &\le \sum_{k=K+1}^{\infty}\frac{1}{k} \infnorm{\partial_{\alpha_k}(1-\sigma)^k}\\
    &\le \sum_{k=K+1}^{\infty}\frac{1}{k} \infnorm{\partial_{\alpha_k} \sigma} \infnorm{(1-\sigma)^{k-1}}\\
    &\le \infnorm{\partial_{\alpha_k} \sigma} \sum_{k=K+1}^{\infty}\frac{1}{k}  \infnorm{(1-\sigma)}^{k-1}\\
    &= \infnorm{\partial_{\alpha_k} \sigma} \infnorm{(1-\sigma)}^{K} \sum_{k=0}^{\infty}\frac{1}{k+K+1}  \infnorm{(1-\sigma)}^{k}\\
    &\le \infnorm{\partial_{\alpha_k} \sigma}(1-\lambda_{min})^K \frac{1}{\lambda_{min}}
\end{align}

This error to be smaller than $\epsilon_1$ if $K$ is chosen such that
\begin{align}
    \infnorm{\partial_{\alpha_k} \sigma}(1-\lambda_{min})^K \frac{1}{\lambda_{min}} &\leq \epsilon_1\\
    (1-\lambda_{min})^K &\leq \frac{\lambda_{min} \epsilon_1}{\infnorm{\partial_{\alpha_k} \sigma}}\\
    K &\geq \frac{\ln \left( \frac{\lambda_{min} \epsilon_1}{\infnorm{\partial_{\alpha_k} \sigma}}\right) }{\ln (1-\lambda_{min})}
\end{align}

Next, we tackle the second term in~\eqref{eq:apperror}:

\begin{align}
    \infnorm{\partial_{\alpha_k} \sum_{k=1}^K \frac{(-1)^k}{k} \sum_{l=L+1}^\infty b_l^{(k)} \sin^l{\frac{\pi(1-\sigma)}{2}}} &\le  \sum_{k=1}^K \frac{1}{k} \sum_{l=L+1}^\infty b_l^{(k)} \infnorm{\partial_{\alpha_k}\sin^l{\frac{\pi(1-\sigma)}{2}}}\\
    &\le \sum_{k=1}^K \frac{1}{k} \sum_{l=L+1}^\infty b_l^{(k)} l\infnorm{\sin{\frac{\pi(1-\sigma)}{2}}}^{l-1}\infnorm{\partial_{\alpha_k} \sin{\frac{\pi(1-\sigma)}{2}}} \label{eq:apperr2}
\end{align}
Next we need to bound  $\infnorm{\partial_{\alpha_k} \sin{\frac{\pi(1-\sigma)}{2}}}$.  Using the fact that $\sin$ has a series expansion that converges uniformly on any compact interval we can therefore differentiate the infinite sum term by term and thus

\begin{align}
    \infnorm{\partial_{\alpha_k} \sin{\frac{\pi(1-\sigma)}{2}}} &= \infnorm{\partial_{\alpha_k} \sum_{j=0}^\infty \frac{(-1)^j (1-\sigma)^{2j+1}}{(2j+1)!}} \\
        &\leq \infnorm{\sum_{j=0}^\infty \frac{(-1)^j }{(2j+1)!} (\partial_{\alpha_k}(1-\sigma)^{2j+1})}\\
    &\leq \infnorm{\sum_{j=0}^\infty \frac{(-1)^j }{(2j+1)!} \sum_{n=0}^{2j} (1-\sigma)^n (-\partial_{\alpha_k} \sigma)(1-\sigma)^{2j-n}}\\
    &\leq \infnorm{\partial_{\alpha_k} \sigma} \sum_{j=0}^\infty \frac{1}{(2j)!} \infnorm{1-\sigma}^{2j}\\
    &= \infnorm{\partial_{\alpha_k} \sigma} \cosh (1-\lambda_{min})\\
    &\leq \infnorm{\partial_{\alpha_k} \sigma} \frac{1+e}{2e}\label{eq:sinderiv}
\end{align}

Using $b_l^{(k)}>0$, $\|b^{(k)}\|_1 \leq 1$ and plugging~\eqref{eq:sinderiv} into \eqref{eq:apperr2} and using the Cauchy-Schwarz inequality yields 

\begin{align}
    &\sum_{k=1}^K \frac{1}{k} \sum_{l=L+1}^\infty b_l^{(k)} l\infnorm{\sin{\frac{\pi(1-\sigma)}{2}}}^{l-1}\infnorm{\partial_{\alpha_k} \sin{\frac{\pi(1-\sigma)}{2}}} \nonumber\\
    &\qquad\leq  \infnorm{\partial_{\alpha_k} \sigma} \frac{1+e}{2e} \sum_{k=1}^K \frac{1}{k} \sum_{l=L+1}^\infty b_l^{(k)} l\infnorm{\sin{\frac{\pi(1-\sigma)}{2}}}^{l-1}\\
    &\qquad\leq \infnorm{\partial_{\alpha_k} \sigma} \frac{1+e}{2e} \sum_{k=1}^K \frac{1}{k} \sum_{l=L+1}^\infty b_l^{(k)} l (1-\lambda_{min})^{l-1}\\
    &\qquad\leq \infnorm{\partial_{\alpha_k} \sigma} \frac{1+e}{2e} \sum_{k=1}^K \frac{1}{k} \sqrt{\sum_{l=L+1}^\infty (b_{l}^{(k)})^2}\sqrt{\sum_{l=L+1}^\infty l^2 (1-\lambda_{min})^{2l-2}}\\
    &\qquad\leq \infnorm{\partial_{\alpha_k} \sigma} \frac{1+e}{2e} H_K \sqrt{\sum_{l=L+1}^\infty l^2 (1-\lambda_{min})^{2l-2}}\\
    &\qquad\leq \infnorm{\partial_{\alpha_k} \sigma} \frac{1+e}{2e} H_K \frac{L (1-\lambda_{min}^2)^{L+1}}{\lambda_{min}^3(2-\lambda_{min}^2)^{\frac{3}{2}}}
\end{align}

Solving for the sufficient L to upper bound this error with $\epsilon_2$:

\begin{align}
    L \geq -\frac{1}{\ln\frac{1}{1-\lambda_m^2}} W_{-1}\left(\frac{\ln\frac{1}{1-\lambda_m^2}}{\frac{1+e^2}{e} \infnorm{\partial_{\alpha_k} \sigma} \|a\|_1^{(k)} \frac{1-\lambda_m^2}{\lambda_m^3 (2-\lambda_m^2)^{1.5}}} \frac{-1}{\frac{12}{\epsilon}}\right).
\end{align}

Lastly, we employ a similar approach to bound the third term in~\eqref{eq:apperror}.

\begin{align}
    &\infnorm{\partial_{\alpha_k} \sum_{k=1}^K \frac{(-1)^k}{k} \sum_{l=1}^L b_l^{(k)} (\frac{i}{2})^l \sum_m (-1)^m {l\choose m} e^{i(2m-l)\frac{(1-\sigma)\pi}{2}}} \nonumber\\
    &\qquad\le  \sum_{k=1}^K \frac{1}{k} \sum_{l=1}^L b_l^{(k)} (\frac{1}{2})^l \sum_m  {l\choose m} \infnorm{\partial_{\alpha_k} e^{i(2m-l)\frac{(1-\sigma)\pi}{2}}}
\end{align}

Utilizing Duhamel's formula, $\infnorm{\partial_{\alpha_k} e^{i(2m-l)\frac{(1-\sigma)\pi}{2}}}$ can be bounded:

\begin{align}
    \infnorm{\partial_{\alpha_k} e^{i(2m-l)\frac{(1-\sigma)\pi}{2}}} &\leq \abs{(2m-l)\frac{\pi}{2}} \infnorm{\int_0^1 e^{is(2m-l)\frac{(1-\sigma)\pi}{2}} (\partial_{\alpha_k} (1-\sigma)) e^{i(1-s)(2m-l)\frac{(1-\sigma)\pi}{2}} ds}\\
    &\leq \frac{\pi}{2}\abs{2m-l} \infnorm{\int_0^1 e^{is(2m-l)\frac{(1-\sigma)\pi}{2}} (\partial_{\alpha_k} (1-\sigma)) e^{i(1-s)(2m-l)\frac{(1-\sigma)\pi}{2}}} ds\\
    &\leq \frac{\pi}{2}\abs{2m-l} \int_0^1 \infnorm{e^{is(2m-l)\frac{(1-\sigma)\pi}{2}}} \infnorm{\partial_{\alpha_k} (1-\sigma)} \infnorm{e^{i(1-s)(2m-l)\frac{(1-\sigma)\pi}{2}}} ds\\
    &\leq \frac{\pi}{2}\abs{2m-l} \int_0^1  \infnorm{\partial_{\alpha_k} \sigma}  ds\\
    &\leq \frac{\pi}{2}\abs{2m-l} \infnorm{\partial_{\alpha_k} \sigma}
\end{align}

Plugging this bound back into the sum:

\begin{align}
    &\sum_{k=1}^K \frac{1}{k} \sum_{l=1}^L b_l^{(k)} (\frac{1}{2})^l \sum_m  {l\choose m} \infnorm{\partial_{\alpha_k} e^{i(2m-l)\frac{(1-\sigma)\pi}{2}}}\nn
    &\qquad\leq \infnorm{\partial_{\alpha_k} \sigma}\sum_{k=1}^K \frac{1}{k} \sum_{l=1}^L b_l^{(k)} \frac{1}{2^l} \sum_m  {l\choose m} \frac{\pi}{2} \abs{2m-l}\\
    &\qquad= \pi \infnorm{\partial_{\alpha_k} \sigma}\sum_{k=1}^K \frac{1}{k} \sum_{l=1}^L b_l^{(k)} \frac{1}{2^l} \sum_{m=0}^{\floor{\frac{l}{2}}-M}  {l\choose m} (l-2m)\\
    &\qquad\leq \pi \infnorm{\partial_{\alpha_k} \sigma}\sum_{k=1}^K \frac{1}{k} \sum_{l=1}^L b_l^{(k)} l e^{-\frac{M^2}{L}} \\
    &\qquad\leq \pi \infnorm{\partial_{\alpha_k} \sigma}e^{-\frac{M^2}{L}} L H_K
\end{align}

Now let us tackle the statistical sampling error. This part of the error stems from using a probabilistic method to calculate the expectation value in the expression

\begin{equation}
    \mathbb{E}_S \left[\Tr{\rho e^{i s \frac{\pi}{2}m\sigma} \frac{\partial \sigma}{\partial {\alpha_k}} e^{i (1-s) \frac{\pi}{2}m\sigma}} \right]
\end{equation}

Let us define the estimator $\hat{S}_n$:

\begin{equation}
    \hat{S}_n = \frac{X_1 + X_2 + ... + X_n}{n}
\end{equation}

Here $X_i$ are i.i.d. random variables $X = \Tr{\rho e^{i S \frac{\pi}{2}m\sigma} \frac{\partial \sigma}{\partial {\alpha_k}} e^{i (1-S) \frac{\pi}{2}m\sigma}}$ with S being a uniform random variable over $[0,1]$. It can easily be seen that this estimator is unbiased:

\begin{equation}
    \exptn{\hat{S}_n} = \mathbb{E}_S \left[\Tr{\rho e^{i s \frac{\pi}{2}m\sigma} \frac{\partial \sigma}{\partial {\alpha_k}} e^{i (1-s) \frac{\pi}{2}m\sigma}} \right]
\end{equation}

The variance of the estimator is given as

\begin{equation}
    \varnc{\hat{S}_n} = \frac{1}{n} \varnc{X_i}
\end{equation}

The variance of $X_i$ can be upper bounded:

\begin{align}
    \varnc{X_i} &= \varnc{\Tr{\rho e^{i S \frac{\pi}{2}m\sigma} \frac{\partial \sigma}{\partial {\alpha_k}} e^{i (1-S) \frac{\pi}{2}m\sigma}}}\\
    &\leq \exptn{\left( \Tr{\rho e^{i s \frac{\pi}{2}m\sigma} \frac{\partial \sigma}{\partial {\alpha_k}} e^{i (1-s) \frac{\pi}{2}m\sigma}} \right)^2}\\
    &= \int_0^1 \left( \Tr{\rho e^{i s \frac{\pi}{2}m\sigma} \frac{\partial \sigma}{\partial {\alpha_k}} e^{i (1-s) \frac{\pi}{2}m\sigma}} \right)^2 ds \\
    &\leq \int_0^1 \infnorm{\frac{\partial \sigma}{\partial {\alpha_k}}}^2 ds\\
    &\leq \infnorm{\frac{\partial \sigma}{\partial {\alpha_k}}}^2
\end{align}

Thus 

\begin{equation}
    \varnc{\hat{S}_n} \leq \frac{\infnorm{\frac{\partial \sigma}{\partial {\alpha_k}}}^2}{n}
\end{equation}

Using Chebyshev's inequality and a standard statistical procedures, we repeat this sampling process $N_c$ times to ensure that the median of the repeats is close to the mean with very high probability. Chebyshev's inequality gives us for a single sampling process 

\begin{equation}
    P(\abs{\hat{S}_n - \mathbb{E}_S \left[\Tr{\rho e^{i s \frac{\pi}{2}m\sigma} \frac{\partial \sigma}{\partial {\alpha_k}} e^{i (1-s) \frac{\pi}{2}m\sigma}} \right]} \geq k\sqrt{{\rm var}[\hat S_n]}) \leq \frac{1}{k^2}
\end{equation}

Setting $k=2$, the probability of failure for a single repeat is obtained as $\frac{1}{4}$. After $N_c$ repeats, the probability that the median outcome is outside the margin defined by the Chebyshev's inequality is given as 

\begin{equation}
    p_{fail} \leq e^{-\frac{N_c}{24}}.
\end{equation}

Upper bounding $p_{fail}$ by $\delta$, the sufficient value of $N_c$ is obtained as $N_c = 24 \ln(1/\delta)$. Thus we can obtain the value of $\mathbb{E}_S \left[\Tr{\rho e^{i s \frac{\pi}{2}m\sigma} \frac{\partial \sigma}{\partial {\alpha_k}} e^{i (1-s) \frac{\pi}{2}m\sigma}} \right]$ within $\varepsilon_{s} := 2\sigma = 2\frac{\infnorm{\frac{\partial \sigma}{\partial {\alpha_k}}}}{\sqrt{n}}$ with probability of failure upper bounded by $\delta$; using $nN_c$ evaluations of $\Tr{\rho e^{i s \frac{\pi}{2}m\sigma} \frac{\partial \sigma}{\partial {\alpha_k}} e^{i (1-s) \frac{\pi}{2}m\sigma}}$.\\

The last step is to plug in the error margin into \eqref{eq:41derivative} to obtain the variance on the sample mean from this estimation process which uses $n$ samples.
Specifically, we find that 

\begin{align}
    \epsilon_{sampling} &= \sum_{m=-M}^{M} \abs{\frac{i \pi m c_m }{2}} \varepsilon_s \\
    &\leq \pi \frac{\infnorm{\frac{\partial \sigma}{\partial {\alpha_k}}}}{\sqrt{n}} \varepsilon_s M \|c\|_1 \\
    &\leq \pi \frac{\infnorm{\frac{\partial \sigma}{\partial {\alpha_k}}}}{\sqrt{n}} \varepsilon_s M \|a\|_1 \\
    &\leq \pi \frac{\infnorm{\frac{\partial \sigma}{\partial {\alpha_k}}}}{\sqrt{n}} M H_k \varepsilon_s
\end{align}

Here we have used the fact that $\|c\|_1 \leq \|a\|_1$ from Corollary 15 of \cite{wiebe2019generative}. Thus we reach at the lower bound

\begin{align}
    M &\geq \sqrt{\frac{L}{2}\ln \left( \frac{\pi H_K \infnorm{\partial_{\alpha_k} \sigma}L}{\epsilon_3} \right)}.
\end{align}

Utilizing the lower bounds in the proof, we can express $K$, $L$ and $M$ in the big-O (Bachmann-Landau) notation:

\begin{align}
    K &= \bigo{\lambda^{-1} \log(\epsilon^{-1} \lambda^{-1} \infnorm{\partial \sigma})} \\
    H_K &= \bigo{\log K}\\
    &= \bigo{\log(\frac{1}{\lambda})+ \log\log(\infnorm{\partial \sigma} \frac{1}{\lambda} \frac{1}{\epsilon})}.
\end{align}

Utilizing the properties of the Lambert-W function, the asymptotical bound for $L$ is found out to be:

\begin{align}
    L &= \bigo{\infnorm{\partial \sigma} H_K \frac{1}{\lambda} \frac{1}{\epsilon}}\\
    &= \bigo{ \left(\log(\frac{1}{\lambda})+ \log\log( \infnorm{\partial \sigma}\frac{1}{\lambda} \frac{1}{\epsilon}) \right)\infnorm{\partial \sigma} \frac{1}{\lambda} \frac{1}{\epsilon}}.
\end{align}

Lastly, we express M in big-O tilde due to the excessive number of logarithmic terms:

\begin{align}
    M &=\bigotilde{\sqrt{\infnorm{\partial \sigma} \frac{1}{\lambda} \frac{1}{\epsilon}}}.
\end{align}

\end{proof}


\section{Calculation of traces of polynomials of density operators} \label{sec:swaptest}

\begin{lemma}(Generalized Swap test)
\label{lem:htest}
Expressions of the form ${\rm Tr}(\prod_i U_i \sigma_i)$ can be calculated within $\epsilon_t$ with probability $p_{succ}$ with $\mathcal{O}(\log((1-p_{succ})^{-1})/\epsilon_t)$ calls to an oracle.
\end{lemma}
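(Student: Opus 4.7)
The plan is to reduce the estimation of $\Tr(\prod_i U_i \sigma_i)$ to a standard Hadamard test on a purified state, and then to boost the precision via amplitude estimation and boost the confidence via a median-of-means step.

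The first step is the cyclic-permutation trick: for density operators $\sigma_1,\ldots,\sigma_n$ acting on isomorphic Hilbert spaces one has the identity $\Tr\!\bigl(\prod_{i=1}^n U_i\sigma_i\bigr)=\Tr\!\bigl(\mathcal{S}\,(U_1\otimes\cdots\otimes U_n)\,(\sigma_1\otimes\cdots\otimes\sigma_n)\bigr)$, where $\mathcal{S}$ is the cyclic shift that maps the register order $1,2,\ldots,n\mapsto n,1,\ldots,n-1$. The oracle hypothesis of the paper (access to unitaries $U_{\sigma_i}$ that prepare purifications of each $\sigma_i$ on a system $+$ ancilla) lets us prepare $\sigma_1\otimes\cdots\otimes\sigma_n$ using one query per factor by tracing out the purifying registers at the end; and $\mathcal{S}$ together with the $U_i$ are ordinary unitaries that act on $O(n)$ registers and can be implemented with $O(1)$ further queries given the $U_i$'s are provided as gates.

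Next, invoke the Hadamard test. Prepare an ancilla in $\tfrac{1}{\sqrt{2}}(\ket{0}+\ket{1})$, tensor with the purification of $\sigma_1\otimes\cdots\otimes\sigma_n$, apply a controlled version of $\mathcal{S}(U_1\otimes\cdots\otimes U_n)$, and then a Hadamard on the ancilla. Measuring the ancilla in the computational basis yields $\Pr[0]-\Pr[1]=\mathrm{Re}\,\Tr(\prod_i U_i \sigma_i)$, and an analogous circuit with a phase $S$-gate on the ancilla returns the imaginary part. A single run therefore gives a $\pm 1$-valued random variable whose expectation is (the real or imaginary part of) the target quantity.

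For the quadratic speedup in precision, replace the naive Chernoff sampling by amplitude estimation applied to the single-shot ancilla-measurement Grover operator: with $M$ calls to the underlying state-preparation-plus-controlled-unitary routine, amplitude estimation outputs an estimate of $\Pr[0]$ within additive error $O(1/M)$ with constant success probability (say $\ge 8/\pi^2$). Setting $M=\Theta(1/\epsilon_t)$ thus achieves accuracy $\epsilon_t$ per run. Finally, perform $N=O(\log((1-p_{succ})^{-1}))$ independent repetitions and take the median; by the standard Chernoff-majority argument the median is within $\epsilon_t$ of the true value with probability at least $p_{succ}$. The total cost is $O(NM)=O(\log((1-p_{succ})^{-1})/\epsilon_t)$ oracle calls, matching the stated complexity.

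The main obstacle I anticipate is bookkeeping rather than a conceptual hurdle: one must verify that the purifying ancillas do not spoil the Hadamard-test identity (they do not, because tracing them out commutes with the final ancilla measurement), and that a \emph{controlled} cyclic shift can be compiled within the claimed query budget so that the query count remains dominated by the $O(1/\epsilon_t)$ factor from amplitude estimation rather than acquiring any hidden dependence on $n$ or on the local dimensions.
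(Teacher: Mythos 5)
Your proposal is correct and follows essentially the same route as the paper: the same Hadamard-test circuit with a controlled cyclic permutation realizing $\Tr\bigl(\mathcal{S}(U_1\otimes\cdots\otimes U_n)(\sigma_1\otimes\cdots\otimes\sigma_n)\bigr)=\Tr\bigl(\prod_i U_i\sigma_i\bigr)$, followed by amplitude estimation with $M\in O(1/\epsilon_t)$ and a median-of-repetitions Chernoff boost to reach success probability $p_{succ}$. Your remark about also extracting the imaginary part via a phase gate on the ancilla is a small point the paper's proof leaves implicit, but otherwise the arguments coincide.
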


\begin{proof}
A SWAP/Hadamard test like circuit can be used to calculate such quantities.  This circuit is shown in figure~\ref{fig:gen_swap}.

\begin{table}[H]
\[
\begin{array}{c}
\centering

\Qcircuit @C=1em @R=.7em {
\lstick{\op{0}{0}} & \gate{\text{Had}}&\ctrl{1}&\ctrl{2}&\qw&\cdots&&\ctrl{4}&\ctrl{1}&\gate{\text{Had}}&\qw \\
\lstick{\rho_1}    & \qw       &\gate{U_1}&\qw&\qw&\cdots&&\qw&\multigate{3}{\text{Perm}}&\qw&\qw \\
\lstick{\rho_2}    & \qw       &\qw&\gate{U_2}&\qw&\cdots&&\qw&\ghost{\text{Perm}}&\qw&\qw\\
&\vdots & & & & & \\
\lstick{\rho_n}    & \qw       &\qw&\qw&\qw&\cdots&&\gate{U_n}&\ghost{\text{Perm}}&\qw&\qw
}
\end{array}
\]
\caption{Quantum circuit for computing the expectation velue of the product of unitary matrices and density operators.  Note here that the operations are the product of unitary matrices rather than their conjugation, which would not affect the trace.  Perm here refers to a cyclic permutation of the density operators.} \label{fig:gen_swap} 
\end{table}

In this circuit, the probability of measuring $\ket{0}$ on the auxiliary qubit is given by~\cite{kieferova2021quantum}:

\begin{gather}
    {\rm Tr}((\ketbra{0}{0} \otimes \mathbb{1}) U (\ketbra{0}{0} \otimes \rho_1 \otimes \cdots \otimes \rho_n) U^\dagger) = \frac{1 + {\rm Re}[{\rm Tr}(\prod_{i=1}^n U_i \rho_i)]}{2}\label{eq:genswaptest}
\end{gather}

Consider a circuit that takes inputs in the form of $\mathcal{A}\ket{} = \ket{0}\otimes\ket{\tau_\rho}\otimes\ket{\tau_\sigma}$, where $\ket{}$ is $\ket{0}^{\otimes n}$ where n is the size of the system and $\mathcal{A}$ is a unitary preparation operator whose inverse can be calculated. From \cite{BrassardQAAE}, if this circuit produces a state $\ket{\Psi} = \ket{\Psi_1} + \ket{\Psi_2}$, where $\ket{\Psi_{1,2}}$ are unnormalized ``good (measure 0)" and ``bad (measure 1)" are components of $\ket{\Psi}$ with $\braket{\Psi_1|\Psi_1} = P(0) \equiv p$. The amplitude estimation circuit outputs a number $\tilde{p}$ such that

\begin{gather}
    \abs{\tilde{p} - p} \leq 2 \pi k \frac{\sqrt{p(1-p)}}{M} + \frac{k^2 \pi^2}{M^2}
\end{gather}
with probability greater than $p_{ae} \equiv 1- \frac{1}{2(k-1)}$. Here k and M are parameters for the amplitude estimation algorithm. However, we can boost the success probability without widening our confidence interval by repeating this experiment and taking the median result. \\
For fixed values of k and M, let the algorithm be repeated n times, with $n_s$ denoting the number of outputs in the confidence interval (number of successful trials). If $n_s > \frac{n}{2}$, it is guaranteed that the median value is in the interval, hence the experiment is successful. Let $p_{succ}$ denote the probability for the whole experiment to be successful. 

\begin{gather}
    p_{succ} = P(n_s > \frac{n}{2})
\end{gather}

However, $n_s \sim Binom(n,p_{ae})$. Using the Chernoff bound and properties of the binomial distribution

\begin{gather}
    p_{succ} \geq 1 - e^{-\frac{1}{2 p_{ae}}n(p_{ae} -\frac{1}{2})^2} \\
    p_{fail} \leq e^{-\frac{n(p_{ae} - \frac{1}{2})^2}{2 p_{ae}}} \label{eq:failurebound}
\end{gather}

Which gives us an exponential drop in the failure probability with linear investment in time. For simplicity of calculations, value of $k=3$ is used. The trace in \eqref{eq:genswaptest} can be calculated within error 

\begin{gather}
   \epsilon_T = 12 \pi  \frac{\sqrt{p(1-p)}}{M} + \frac{9 \pi^2}{M^2} \leq   \frac{6 \pi}{M} + \frac{9 \pi^2}{M^2} \leq  \frac{12 \pi}{M} \label{eq:swaptesterror}
\end{gather}

With failure probability $p_{fail} \leq e^{-\frac{n}{24}}$, with $4M+2 \in \mathcal{O}(M)$ uses of $\mathcal{A}$ or its inverse. Since $M\in O(\epsilon_t^{-1})$ and the Chernoff bound above guarantees that the probability of success is at least $p_{succ}$
\end{proof}

\bibliographystyle{unsrt}
\bibliography{references}
\end{document}